\g@addto@macro{\endabstract}{\@setabstract}
\newcommand{\st}{\ensuremath{\ \mathrm{s.t.}\ }}
\newcommand{\fore}{\therefore \quad}
\newcommand{\tod}{\stackrel { d } {\to} }
\newcommand{\iidsim}{\stackrel {\textrm{ {\sc iid }}} {\sim} }
\newcommand{\1}{\mathbbm 1}
\newcommand*\diff{\mathop{}\!\mathrm{d}}
\renewcommand{\epsilon}{\varepsilon}
\renewcommand{\phi}{\varphi}
\newcommand{\cC}{\mathscr C}
\newcommand{\bB}{\mathscr B}
\newcommand{\hH}{\mathscr H}
\newcommand{\fF}{\mathscr F}
\newcommand{\ZZ}{\mathsf Z}
\renewcommand{\SS}{\mathsf S}
\newcommand{\TT}{\mathsf T}
\newcommand{\CC}{\mathsf C}
\newcommand{\DD}{\mathsf D}
\newcommand{\pP}{\mathscr P}
\newcommand{\RR}{\mathbbm R}
\newcommand{\NN}{\mathbbm N}
\newcommand{\PP}{\mathbbm P}
\newcommand{\EE}{\mathbbm E}
\theoremstyle{plain}
\newtheorem{theorem}{Theorem}[section]
\newtheorem{lemma}{Lemma}[section]
\newtheorem{proposition}{Proposition}[section]
\theoremstyle{definition}
\newtheorem{example}{Example}[section]
\newtheorem{remark}{Remark}[section]
\newtheorem{assumption}{Assumption}[section]
\DeclareMathOperator{\diag}{diag}
\renewcommand{\underline}[1]{\text{\b{$#1$}}} % underbar similar to \bar{}
\newcommand{\me}{\mathrm{e}}
\newcommand{\vertiii}[1]{{\left\vert\kern-0.25ex\left\vert\kern-0.25ex\left\vert #1 
    \right\vert\kern-0.25ex\right\vert\kern-0.25ex\right\vert}}
\newcommand*{\rom}[1]{\expandafter\@slowromancap\romannumeral #1@}
\begin{document}

\title{}

\date{\today}

\begin{center}
  \LARGE 
  The Income Fluctuation Problem and the \\ Evolution of Wealth\footnote{We
      thank the editors and two anonymous referees for many valuable comments
      and suggestions. This paper has also benefited from discussion with many
      colleagues.  We particularly thank Fedor Iskhakov, Larry Liu and Chung
      Tran for their insightful feedback and suggestions. The second author
      gratefully acknowledges financial support from ARC grant FT160100423.\\ 
      \emph{Email addresses:}
      \texttt{qingyin.ma@cueb.edu.cn}, \, \texttt{john.stachurski@anu.edu.au}, \, \texttt{atoda@ucsd.edu}.
  } 
 
  \vspace{1em}
  \normalsize
  Qingyin Ma\textsuperscript{a}, \, John Stachurski\textsuperscript{b} \, and
  \, Alexis Akira Toda\textsuperscript{c} \par 
  
  \vspace{1em}

  \textsuperscript{a}International School of Economics and Management, \\
  Capital University of Economics and Business\\
  \textsuperscript{b}Research School of Economics, Australian National University\\
  \textsuperscript{c}Department of Economics, University of California San Diego

  \vspace{1em}
  %\normalsize{\today}
  January 30, 2020
\end{center}

\begin{abstract} 
    We analyze the household savings problem in a general setting where
    returns on assets, non-financial income and impatience are all state
    dependent and fluctuate over time.  All three processes can be serially
    correlated and mutually dependent.  Rewards can be bounded or unbounded
    and wealth can be arbitrarily large.  Extending classic results from an
    earlier literature, we determine conditions under which (a) solutions
    exist, are unique and are globally computable, (b) the resulting
    wealth dynamics are stationary, ergodic and geometrically mixing, and (c) the wealth distribution has a Pareto tail.  We show how
    these results can be used to extend recent studies of the
    wealth distribution. Our conditions have natural economic interpretations
    in terms of asymptotic growth rates for discounting and return on savings.

    \vspace{1em}
    
    \noindent
    %\textit{JEL Classifications:} XX, YY \\
    \textit{Keywords:} Income fluctuation,  optimality,  stochastic stability, wealth distribution.
\end{abstract}

%\maketitle

%section
\section{Introduction}

It has been observed that, in the US and several other large economies, the wealth distribution 
is heavy tailed and wealth inequality has
risen sharply over the last few decades.\footnote{For example, in a study based on capital income data, \cite{saez2016wealth} find that, in the case of the US, 
	the share of total household wealth held by the top 0.1\% increased from 7 percent to 22 percent between 1978 and 2012.  For a discussion of the heavy-tailed property of the wealth distribution, see \cite{Pareto1896LaCourbe}, \cite{davies2000distribution}, \cite{BenhabibBisin2018}, \cite{vermeulen2018fat} or references therein.}
This matters not only for its
direct impact on taxation and redistribution policies, but also for potential
flow-on effects for productivity growth, business cycles and fiscal policy, as
well as for the political environment that shapes these and other economic
outcomes.\footnote{One analysis of the two-way interactions between inequality
    and political decision making can be found in
    \cite{acemoglu2002political}.  \cite{glaeser2003injustice} show how
    inequality can alter economic and social outcomes through subversion of
    institutions.  The same study contains references on linkages between
    inequality and growth.  Regarding fiscal policy, \cite{brinca2016fiscal}
    find strong correlations between wealth inequality and the magnitude of
    fiscal multipliers, while \cite{bhandari2018inequality} study the
connection between fiscal-monetary policy, business cycles and inequality.
\cite{ahn2018inequality} discuss the impact of distributional properties on
macroeconomic aggregates.}

At present, our understanding of these phenomena is hampered by the fact that 
standard tools of analysis---such as those used for heterogeneous agent
models---are not well adapted to studying the wealth distribution as it
stands.  For example, while we have sound
understanding of the household problem when returns on savings and rates of
time discount are constant (see, e.g., \cite{schechtman1976income},
\cite{SchechtmanEscudero1977}, \cite{deaton1992behaviour},
\cite{carroll1997buffer}, or \cite{accikgoz2018existence}), our knowledge is far more
limited in settings where these values are stochastic.  This is problematic,
since injecting such features into the household problem is
essential for accurately representing the joint distribution of income and wealth (e.g.,
\cite{benhabib2015wealth}, \cite{BenhabibBisinLuo2017},
\cite{stachurski2019impossibility}).\footnote{Also related is the
    recent experimental study of \cite{epper2018time}, which finds a strong positive connection between dispersion 
    in subjective rates of time discounting across the population and realized dispersion in the wealth
    distribution.  This in turn is consistent with earlier empirical studies such as
\cite{lawrance1991poverty}.}
    Moreover, models with
time-varying discount rates and returns on assets are at the forefront of recent quantitative analysis of wealth
and inequality.\footnote{For a recent quantitative study see, for example,
    \cite{hubmer2018comprehensive}, where returns on savings and discount
    rates are both state dependent (as is labor income).  
    \cite{kaymak2018accounting} find that asset return heterogeneity is required to match the
    upper tail of the wealth distribution.  %Earlier quantitative studies of the wealth distribution that include stochastic returns on savings can be found in \cite{quadrini2000entrepreneurship}, \cite{krebs2003-RED,krebs2003-QJE,krebs2006}, \cite{angeletos2005incomplete,angeletos-calvet2006}, \cite{cagetti2006entrepreneurship} and \cite{angeletos2007uninsured}. By introducing heterogeneity across agents in returns to capital or savings, these studies generate skewed wealth distributions similar to those observed in US data.
}

While it might be hoped that the analysis of the income fluctuation problem
(or household consumption and savings problem) 
changes little when we shift from constant to state dependent asset returns
and rates of time discount, this turns out not to be the case.  Effectively modeling these features and the way they map to
the wealth distribution requires significant advances in our understanding of
choice and stochastic dynamics in the setting of optimal savings.  

One difficulty is that state-dependent discounting takes us beyond the bounds
of traditional dynamic programming theory.  This matters little if there exists some
constant $\bar \beta < 1$ such that the discount process $\{\beta_t\}$
satisfies $\beta_t \leq \bar \beta$ for all $t$ with probability one, since,
in this case, a standard contraction mapping argument can still be applied
(see, e.g., \cite{miao2006competitive} or \cite{cao2020recursive}).  However, recent quantitative studies
extend beyond such settings.  For example, AR(1) specifications are
increasingly common, in which case the support  of
$\beta_t$ is unbounded above at every point in time.\footnote{See, for example, \cite{hills2018fiscal}, \cite{hubmer2018comprehensive}
or \cite{schorfheide2018identifying}.}
    Even if discretization
is employed, the outcome $\beta_t \geq 1$ can occur with positive
probability when the approximation is sufficiently fine.  Moreover, such outcomes are not inconsistent with
empirical and experimental evidence,
at least for some households in some states of the world.\footnote{See, for example, \cite{loewenstein1991negative} and \cite{loewenstein1991workers}.}  Do there exist
conditions on $\{\beta_t\}$ that allow for $\beta_t \geq 1$ in some states and
yet imply existence of optimal polices and practical computational techniques?

Another source of complexity for the income fluctuation problem in the general
setting considered here is that the set of possible values for household
assets is typically unbounded above.   For example, when returns on assets are
stochastic, a sufficiently long sequence of favorable returns can compound one
another to project a household to arbitrarily high levels of wealth.  This
model feature is desirable: We wish to analyze these kinds of outcomes rather
than rule them out.  Indeed, \cite{benhabib2015wealth} and other related
studies argue convincingly that such outcomes are a key causal mechanism behind
the heavy tail of the current distribution of wealth.\footnote{One related
    study is \cite{benhabib2011distribution}, who show that capital income
    risk is the driving force of the heavy-tail properties of the stationary
    wealth distribution. In Blanchard-Yaari style economies,
    \cite{Toda2014JET}, \cite{TodaWalsh2015JPE} and \cite{benhabib2016distribution} show that
    idiosyncratic investment risk generates a double Pareto stationary wealth
    distribution.  \cite{gabaix2016dynamics} point out that a positive
correlation of returns with wealth (``scale dependence'') in addition to
persistent heterogeneity in returns (``type dependence'') can well explain the
speed of changes in the tail inequality observed in the data.} However, if we
accept this logic, then stationarity and ergodicity of the wealth
process---which are fundamental both for estimation and for simulation-based
numerical methods---must now be established in a setting where the wealth
distribution has unbounded support.  In such a scenario, what conditions on
preferences and financial and labor income are necessary for these properties
to hold?

A final and related example of the need for deeper analysis is as follows: To
understand the upper tail of the wealth distribution, we must avoid
unnecessarily truncating the upper tail of the set of possible asset values in
quantitative work.  While truncation is convenient because finite or compact
state spaces are easier to handle computationally, we can attain greater
accuracy in modeling the wealth distribution if truncation at the upper tail
can be replaced locally by a parameterized savings function, such as a linear
function \citep*{gouin2018pareto}.  However, any such approximation must be
justified by theory.  What conditions can be imposed on primitives to generate
such properties while still maintaining realistic assumptions for asset
returns and non-financial income?

In this paper we address all of these questions, along with other key
properties of the income fluctuation problem, such as continuity and
monotonicity of the optimal consumption policy. Our setting admits capital
income risk, labor earnings shocks and time-varying discount rates, 
driven by a combination of {\sc iid} innovations and an
exogenous Markov chain $\{ Z_t \}$.  The supports of the innovations can be unbounded,
so we admit practical innovation sequences such as normal and lognormal.
As a whole, this environment allows for a range of realistic features, such as
stochastic volatility in returns on asset holdings, or correlation in the
shocks impacting asset returns and non-financial income. The utility function can be
unbounded both above and below, with no specific structure imposed beyond
differentiability, concavity and the usual slope (Inada)
conditions.\footnote{While the assumption that the exogenous state process
    $\{Z_t\}$ is a (finite state) Markov chain might appear
restrictive, it fits most practical settings and avoids a host of technical issues
that tend to obscure the key ideas.  Moreover, the innovation shocks are not restricted
to be discrete, and the same is true for assets and consumption.}

To begin, when considering optimality in the household problem, we require a
condition on the state dependent discount process $\{\beta_t\}$ that
generalizes the classical condition $\beta < 1$ from the constant case and,
for reasons discussed above, permits $\beta_t > 1$ with positive probability.  
To this end, we introduce the restriction\footnote{Here and below we set $\beta_0 \equiv 1$, so $\prod_{t=1}^n \beta_t = \prod_{t=0}^n \beta_t$.}
\begin{equation}
    \label{eq:fpbc}
    G_\beta < 1
    \quad \text{where} \quad 
    G_\beta := \lim_{n \to \infty} 
    \left(\EE \prod_{t=1}^n \beta_t \right)^{1/n}.
\end{equation}
Condition \eqref{eq:fpbc} clearly generalizes the classical condition $\beta <
1$ for the constant discount case.  In the stochastic case,
$\ln G_\beta$ can be understood as the asymptotic growth rate of the
probability weighted average discount factor.  Indeed, if 
$B_n := \EE \prod_{t=1}^n \beta_t$ is the average $n$-period discount factor,
then, from the definition of $G_\beta$ and some straightforward analysis, we
obtain $\ln (B_{n+1}/B_n) \to \ln G_\beta$,
so the condition $G_\beta < 1$ implies that the asymptotic growth rate of 
the average $n$-period discount factor is negative, drifting down from its
initial condition $\beta_0 \equiv 1$ at the rate $\ln G_\beta$.  This does not, of course, preclude the
possibility that $\beta_t > 1$ at any given $t$.

We show that condition \eqref{eq:fpbc} is in fact a necessary condition in those
settings where the classical condition is necessary for finite lifetime
values.  In this sense it cannot be further weakened for the income
fluctuation problem apart from special cases.  At the same time, it
admits the use of convenient specifications such as the discretized AR(1)
process from \cite{hubmer2018comprehensive}.  In addition, we prove that
$G_\beta$ can be represented as the spectral radius of a nonnegative matrix,
and hence can be computed by numerical linear algebra (as discussed below).

We also generalize the standard condition $\beta R < 1$, where $R$ is the
gross interest rate in the constant case, which is used to ensure stability of
the asset path and finiteness of lifetime valuations, as well as existence of
stationary Markov policies (see, e.g., \cite{deaton1992behaviour},
\cite{chamberlain2000optimal} or \cite{li2014solving}).
Analogous to \eqref{eq:fpbc}, we introduce the generalized condition
\begin{equation}
    \label{eq:fpbrc}
    G_{\beta R} < 1
    \quad \text{where} \quad 
    G_{\beta R} := \lim_{n \to \infty} 
        \left(\EE \prod_{t=1}^n \beta_t R_t \right)^{1/n} .
\end{equation}
Here $\{R_t\}$ is a stochastic capital income process.  Analogous to the case
of $G_\beta$, the value $\ln G_{\beta R}$ can be understood as the asymptotic
growth rate of average gross payoff on assets, discounted to present value.

We show that, when Conditions~\eqref{eq:fpbc}--\eqref{eq:fpbrc} hold and
non-financial income satisfies two moment conditions, a unique optimal
consumption policy exists.  We also show that the policy can be computed by
successive approximations and analyze its properties, such as monotonicity and
asymptotic linearity. This asymptotic linearity can be used to successfully
model wealth inequality by accurately representing asset path dynamics for
very high wealth households \citep*{gouin2018pareto}.

One important feature of Conditions \eqref{eq:fpbc}--\eqref{eq:fpbrc} is that
they take into
account the autocorrelation structure of preference shocks and asset returns.
For example, if these processes depend only on {\sc iid} innovations, then
\eqref{eq:fpbc} reduces to $\EE \beta_t < 1$ and \eqref{eq:fpbrc} reduces to $\EE \beta_t R_t < 1$.  But returns on assets are
typically not {\sc iid}, since both mean returns and volatility are, in
general, time varying, and preference shocks are typically modeled as
correlated (see, e.g., \cite{hubmer2018comprehensive} or \cite{schorfheide2018identifying}).
This dependence must be and is accounted for in \eqref{eq:fpbrc}, since long
upswings in $\{\beta_t\}$ and $\{R_t\}$ can lead to explosive paths for valuations and assets.

Next we study asymptotic stability, stationarity and ergodicity of wealth.
Such properties
are essential to existence of stationary equilibria in heterogeneous agent
models (e.g., \cite{huggett1993risk}, \cite{aiyagari1994uninsured} or
\cite{cao2020recursive}), as well as
standard estimation, calibration and simulation techniques that connect
time series averages with cross-sectional moments.\footnote{A well-known example of
    a computational technique that uses ergodicity can be
    found in \cite{krusell1998income}. On the estimation side see, for
    example, \cite{hansen2002generalized}.}
These properties require an additional restriction, placed on the asymptotic growth rate
of mean returns.  Analogous to \eqref{eq:fpbc} and
\eqref{eq:fpbrc}, this is defined as
\begin{equation}
    \label{eq:gr} 
    G_R := \lim_{n \to \infty} \left( \EE \prod_{t=1}^n R_t \right)^{1/n} .
\end{equation}
We show that if $G_R$ is sufficiently restricted and a degree of social mobility
is present, then there exists a unique stationary
distribution for the state process, the distributional path of the state
process under the optimal path converges globally to the stationary
distribution, and the stationary distribution is ergodic.  
We also show that, under some mild additional conditions, the rate of
convergence of marginal distributions to the stationary distribution is
geometric, and that a version of the Central Limit
Theorem is valid.  Finally, under some mild additional conditions, we prove that the stationary distribution of
assets is Pareto tailed, consistent with the data.

Our study is related to \cite{benhabib2015wealth}, who prove the
existence of a heavy-tailed wealth distribution in an infinite horizon
heterogeneous agent economy with capital income risk.  In the process, they show that households
facing a stochastic return on savings possess a unique optimal consumption
policy characterized by the (boundary constraint-contingent) Euler equation,
and that a unique and unbounded stationary distribution exists for wealth
under this consumption policy.  They assume isoelastic utility, constant
discounting, and mutually independent, {\sc iid} returns and labor income
processes, both supported on bounded closed intervals with strictly
positive lower bounds.  We relax all of these assumptions.  Apart from allowing more
general utility and state dependent discounting, this permits such realistic
features for household income as positive correlations between labor earnings and wealth returns
(an extension that was suggested by \cite{benhabib2015wealth}), or time
varying volatility in returns.\footnote{Empirical motivation for these kinds
    of extensions can be found in numerous studies, including 
\cite{guvenen2014inferring} and \cite{fagereng2016heterogeneityNBER, fagereng2016heterogeneityAERPP}.}

Another related paper is \cite{chamberlain2000optimal}, which studies an
income fluctuation problem with stochastic income and asset returns and
obtains many significant results on asymptotic properties of consumption.
Their study imposes relatively few restrictions on the wealth return and labor
income processes.  Our paper extends their work by allowing for random
discounting, as well as dropping their boundedness restriction on the utility,
which prevents their work from being used in many standard settings such as constant relative risk aversion.
We also develop a set of new results on stability and ergodicity, as well as
asymptotic normality of the wealth process.

Our optimality theory draws on techniques found in \cite{li2014solving}, who
show that the time iteration operator is a contraction mapping with respect to
a metric that evaluates consumption differences in terms of marginal utility,
while assuming a constant discount factor and constant rate of return on
assets.\footnote{\cite{Coleman1990} introduced the time iteration operator as a
    constructive method for solving stochastic growth models. It has since been used in \cite{DattaMirmanReffett2002}, \cite{MorandReffett2003} and many other
studies.}   We show that these ideas extend to a setting where both returns and
discount rates are stochastic and time varying.  Our results on dynamics under
the optimal policy have no counterparts in \cite{li2014solving}.

In a similar vein, our work is related to several other papers that treat the
standard income fluctuation problems with constant rates of return on assets
and constant discount rates, such as \cite{rabault2002borrowing},
\cite{carroll2004theoretical} and \cite{kuhn2013recursive}.  While
\cite{carroll2004theoretical} constructs a weighted supremum norm contraction
and works with the Bellman operator, the other two papers focus on time
iteration. In particular, \cite{rabault2002borrowing} exploits the
monotonicity structure, while \cite{kuhn2013recursive} applies a version of
the Tarski fixed point theorem.  Our techniques for studying optimality are
close to those in \cite{li2014solving}, as discussed
above.\footnote{Our paper is also related to \cite{cao2017persistent}, who study
    wealth inequality in a continuous-time framework with heterogeneous
    returns following a two-state Markov chain. While we do not pursue the
    connection here, the generality of our setup,
    including a persistent shock structure to wealth returns, might
    permit a study of the continuous-time limit that yields the tail
    results of \cite{cao2017persistent} in a general framework.}

The rest of this paper is structured as follows. Section \ref{s:ifp}
formulates the problem and establishes optimality results. Sufficient
conditions for the existence and uniqueness of optimal policies are discussed.
Section \ref{s:sto_stability} focuses on stochastic stability. 
Section~\ref{s:gc} discusses our key conditions and how they can be checked.
Section \ref{s:app} provides a set of applications and Section~\ref{s:c}
concludes. All proofs are deferred to the appendix.  Code that generates our figures can be found at
\url{https://github.com/jstac/ifp_public}.

\section{The Income Fluctuation Problem and Optimality Results}
\label{s:ifp}

This section formulates the income fluctuation problem we consider,
establishes the existence, uniqueness and computability of a solution, and
derives its properties.

\subsection{Problem Statement}
\label{ss:ifp_problem}

We consider a general income fluctuation problem,
where a household chooses a consumption-asset path $\{(c_t, a_t)\}$ to solve
\begin{align}
    \label{eq:trans_at}
    & \max \, \EE_0 \left\{ 
                \sum_{t = 0}^\infty 
                \left(\prod_{i=0}^t \beta_i \right) u(c_t)
             \right\}    \nonumber \\
  \st \quad  
    & a_{t+1} = R_{t+1} (a_t - c_t) + Y_{t+1},  \\
    & 0 \leq   \; c_t \leq a_t, \quad (a_0, Z_0)=(a,z) \text{ given} \nonumber.
\end{align}
Here  $u$ is the utility function, $\{\beta_t\}_{t \geq 0}$ is discount factor
process with $\beta_0=1$, $\{R_t\}_{t \geq 1}$ is the gross rate of return on wealth,
and $\{Y_t \}_{t \geq 1}$ is non-financial income. These stochastic processes obey
\begin{equation}
    \label{eq:RY_func}
      \beta_t = \beta \left( Z_t, \epsilon_t \right), 
      \quad
      R_{t} = R \left( Z_{t}, \zeta_{t} \right),
      \quad \text{and} \quad
      Y_{t} = Y \left( Z_{t}, \eta_{t} \right),
\end{equation}
where $\beta$, $R$ and $Y$ are measurable nonnegative functions and
$\{Z_t\}_{t \geq 0}$ is an irreducible time-homogeneous $\ZZ$-valued Markov chain taking
values in finite set $\ZZ$.  Let $P(z, \hat z)$ be the probability of
transitioning from $z$ to $\hat z$ in one step.
The innovation processes $\{\epsilon_t\}$, $\{\zeta_t\}$ and $\{\eta_t\}$ are
{\sc iid} independent and their supports can be continuous and vector-valued.

The function $u$ maps $\RR_+$ to $\{ - \infty \} \cup \RR$, is twice
differentiable on $(0, \infty)$, satisfies $u' > 0$ and $u'' < 0$ everywhere
on $(0, \infty)$, and that $u'(c) \to \infty$ as $c \to 0$ and $u'(c) < 1$
as $c \to \infty$. We define
\begin{equation}
\label{eq:nota1}
  \EE_{a,z} 
      := \EE \left[ \,\cdot \, \big| \, (a_0,Z_0) = (a, z) \right]
  \quad \text{and} \quad
  \EE_z 
      := \EE \left[ \, \cdot \, \big| \, Z_0 = z \right].
\end{equation}
The next period value of a random variable $X$ is typically denoted
$\hat{X}$.  Expectation without a subscript refers to the stationary process,
where $Z_0$ is drawn from its (necessarily unique) stationary distribution.

%\begin{example}
    %The optimization problem stated above includes the problem faced by
    %households in \cite{cao2020recursive}, which in turn builds on
    %\cite{krusell1998income}.  In that model, randomness in the discount
    %factor process is driven by idiosyncratic preference shocks $\{i_t\}$, while
    %returns on assets fluctuate due to an aggregate shock process $\{s_t\}$ impacting
    %productivity.  Non-financial income is affected by both the idiosyncratic 
    %shocks and the aggregate shocks.  His scenario fits our
    %framework when $Z_t = (s_t, i_t)$.  The innovation vectors in \eqref{eq:RY_func} are not
    %required.
%\end{example}

\subsection{Key Conditions}

\label{ss:fss}

Our conditions for optimality are listed below.
In what follows, $G_\beta$ is
the asymptotic growth rate of the discount process as defined in
\eqref{eq:fpbc}.

\begin{assumption}
	\label{a:b0}
     The discount factor process satisfies $G_\beta < 1$.
\end{assumption}

Assumption~\ref{a:b0} is a natural extension of the standard
condition $\beta < 1$ from the constant discount case.  
If $\beta_t \equiv \beta$ for all
$t$, then $G_\beta = \beta$, as follows immediately from the
definition.  It is weaker than the obvious sufficient condition
$\beta_t \leq \bar \beta$ with probability one for some constant
$\bar \beta < 1$, since in such a setting we have $G_\beta \leq \bar \beta <
1$.   In fact it cannot be significantly weakened, as the
proposition shows.  

\begin{proposition}
    [Necessity of the discount condition]
    \label{p:necg}
    Let $\beta_t$ and $u(Y_t)$ be positive with probability one for all $t$
    and all initial states $z$ in $\ZZ$.  If, in this setting,
    we have $G_\beta \geq 1$, then the objective in \eqref{eq:trans_at} is
    infinite at every initial state $(a, z)$.
\end{proposition}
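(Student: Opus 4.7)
The plan is to exhibit one feasible consumption plan whose lifetime value is already $+\infty$ at every initial state $(a,z)$ with $a>0$; since the supremum in \eqref{eq:trans_at} dominates the value attained by any feasible plan, this delivers the conclusion. Take the ``eat everything'' plan $c_t=a_t$. Under this rule the law of motion collapses to $a_{t+1}=R_{t+1}(a_t-c_t)+Y_{t+1}=Y_{t+1}$, so $c_0=a$ and $c_t=Y_t$ for $t\geq 1$. Because $u(a)\in\RR$ and $u(Y_t)>0$ almost surely by hypothesis, every summand for $t\geq 1$ is a.s.\ nonnegative and Tonelli gives
\begin{equation*}
V(a,z) \;=\; u(a) + \sum_{t=1}^\infty \EE_z\!\left[\prod_{i=1}^t\beta_i\, u(Y_t)\right].
\end{equation*}

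Next I would reduce the $t$-th summand to iteration of a finite-dimensional linear operator. Because the innovation sequences $\{\epsilon_i\}$ and $\{\eta_j\}$ are iid and mutually independent, conditioning on $Z_1,\dots,Z_t$ yields
\begin{equation*}
\EE_z\!\left[\prod_{i=1}^t\beta_i\, u(Y_t)\right] \;=\; \EE_z\!\left[\prod_{i=1}^t\tilde\beta(Z_i)\,\tilde u(Z_t)\right] \;=\; (A^t\tilde u)(z),
\end{equation*}
where $\tilde\beta(z):=\EE\,\beta(z,\epsilon_1)$, $\tilde u(z):=\EE\,u(Y(z,\eta_1))$, and $A$ is the nonnegative linear operator on $\RR^\ZZ$ with matrix $A(z,\hat z)=P(z,\hat z)\tilde\beta(\hat z)$. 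The positivity hypotheses give $\tilde\beta>0$ and $\tilde u>0$ on all of $\ZZ$; the degenerate case where either is $+\infty$ at some state makes the sum trivially infinite, so it suffices to treat $\tilde\beta,\tilde u$ as finite.

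The main step is to identify $G_\beta$ with the spectral radius $\rho(A)$ and then invoke Perron--Frobenius. From $\EE\prod_{t=1}^n\beta_t=\pi^{\!\top}A^n\mathbf{1}$, where $\pi$ is the strictly positive stationary distribution of the irreducible chain $\{Z_t\}$, standard Gelfand-type estimates for nonnegative matrices give $G_\beta=\rho(A)$. Because $P$ is irreducible and $\tilde\beta>0$, the graph of $A$ coincides with that of $P$, so $A$ is itself irreducible; Perron--Frobenius then produces a strictly positive right eigenvector $w$ with $Aw=G_\beta w$. Finiteness of $\ZZ$ and strict positivity of $\tilde u$ yield some $\kappa>0$ with $\tilde u\geq\kappa w$ pointwise, so $(A^t\tilde u)(z)\geq\kappa\,G_\beta^{\,t}\,w(z)$. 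Summing under $G_\beta\geq 1$ gives $V(a,z)\geq u(a)+\kappa\,w(z)\sum_{t\geq 1}G_\beta^{\,t}=+\infty$ at every $(a,z)$, and the supremum in \eqref{eq:trans_at} is infinite. I expect the main obstacle to be the spectral-radius identification together with the inheritance of irreducibility by $A$, which is where the finite-state Markov structure is used essentially; the Tonelli step and the conditional-independence reduction are routine once the iid mutually independent structure of the innovations is in hand.
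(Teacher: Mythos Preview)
Your proposal is correct and follows essentially the same route as the paper's proof: both take the feasible plan that consumes income, reduce the $t$-th expected term to $(L_\beta^t h)(z)$ for the nonnegative matrix $L_\beta(z,\hat z)=P(z,\hat z)\tilde\beta(\hat z)$ (your $A$) and the strictly positive vector $h=\tilde u$, identify $G_\beta$ with the spectral radius of $L_\beta$, and then use Perron--Frobenius on the irreducible $L_\beta$ to bound $h$ below by a positive multiple of the Perron eigenvector, forcing the series to diverge when $G_\beta\geq 1$. The only cosmetic difference is that you make the ``eat everything'' policy explicit and separate out the $t=0$ term $u(a)$, whereas the paper invokes monotonicity of $u$ directly; the substance of the argument is identical.
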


The positivity assumed here may or may not hold in applications, but
Proposition~\ref{p:necg} shows that special conditions will have to be imposed on
preferences if Assumption~\ref{a:b0} fails.  Put differently, allowing
$G_\beta \geq 1$ is tantamount to allowing $\beta \geq 1$ in the case when the
discount rate is constant.

Next, we need to ensure that the present discounted value of wealth
does not grow too quickly, which requires a joint restriction on asset returns and
discounting.  When $\{R_t\}$ and $\{ \beta_t\}$ are constant at
values $R$ and $\beta$, the standard restriction from the existing literature
is $\beta R < 1$.  A generalization
using $G_{\beta R}$ as defined in \eqref{eq:fpbrc} is 

\begin{assumption}
	\label{a:rb0}
     The discount factor and return processes satisfy $G_{\beta R} < 1$.
\end{assumption}

Finally, we impose routine technical restrictions on non-financial income.
The second restriction is needed to exploit first order conditions.

\begin{assumption}
    \label{a:y0}
    $\EE \, Y < \infty$ and $\EE \, u'(Y) < \infty$.
\end{assumption}

Next we provide one example where Assumptions~\ref{a:b0}--\ref{a:y0} are
easily verified.  More complex examples are deferred to Sections~\ref{s:gc} and \ref{s:app}.

\begin{example}
    \label{eg:bh1}
    Suppose, as in \cite{benhabib2015wealth}, that there is a constant discount
    factor $\beta < 1$, utility is CRRA with $\gamma \geq 1$, 
    $\left\{ R_{t} \right\}$ and $\left\{ Y_{t} \right\}$ are {\sc iid}, mutually
    independent, supported on bounded closed intervals of strictly positive real
    numbers, and, moreover,
    \begin{equation}
        \label{eq:bhc}
        \beta \EE R_t^{1 - \gamma} < 1
        \quad \text{and} \quad
        ( \beta \EE R_t^{1 - \gamma} )^{1/\gamma} \EE R_t < 1.
    \end{equation}
    Assumptions~\ref{a:b0}--\ref{a:y0} are all satisfied in this case.  To see
    this, observe that $G_\beta = \beta <1$ in the constant discount case, so
    Assumption~\ref{a:b0} holds. Since $x \mapsto x^{1-\gamma}$ is convex when
    $\gamma \geq 1$, Jensen's inequality implies that $\EE R_t^{1-\gamma}\geq (\EE
    R_t)^{1-\gamma}$. Multiplying both sides of the last inequality by $\beta
    (\EE R_t)^\gamma$ yields 
    \begin{equation*} 
        G_{\beta R}=\beta \EE R_t =
        \beta (\EE R_t)^{1-\gamma}(\EE R_t)^\gamma \leq (\beta \EE
    R_t^{1-\gamma})(\EE R_t)^\gamma .
    \end{equation*} 
    By the second condition of \eqref{eq:bhc}, Assumption~\ref{a:rb0} holds.
    Assumption~\ref{a:y0} also
holds because $Y_t$ is restricted to a compact subset of the positive reals.
\end{example}

\subsection{Optimality: Definitions and Fundamental Properties} 

To consider optimality, we temporarily assume that $a_0>0$ and
set the asset space to $(0, \infty)$.\footnote{Assumption~\ref{a:y0} combined
with $u'(0)= \infty$  implies that $\PP \{ Y_t >
0 \}=1$ for all $t \geq 1$. Hence, $\PP \{ a_t > 0 \} = 1$ for all $t \geq 1$
  and excluding zero from the asset space makes no difference to optimality.}
The state space for $\{(a_t, Z_t) \}_{t \geq 0}$ is then
  $\SS_0:= (0, \infty) \times \ZZ$.
A \emph{feasible policy} is a Borel measurable function 
$c \colon \SS_0 \to \RR$ with $0 \leq c(a,z) \leq a$ for all 
$(a,z) \in \SS_0$. A feasible policy $c$ and initial condition 
$(a,z) \in \SS_0$ generate an asset path 
$\{ a_t\}_{t \geq 0}$ via \eqref{eq:trans_at} when 
$c_t = c (a_t, Z_t)$ and $(a_0, Z_0) = (a,z)$. The lifetime value of 
policy $c$ is
\begin{equation}
    V_c (a,z) = \EE_{a,z} \sum_{t = 0}^\infty 
        \beta_0 \cdots \beta_t u \left[ c (a_t, Z_t) \right], \label{eq:Vc}
\end{equation}
where $\{ a_t\}$ is the asset path generated by $(c,(a,z))$. In the Appendix 
we show that $V_c$ is well-defined on $\SS_0$.
A feasible policy $c^*$ is called \emph{optimal} if $V_c \leq V_{c^*}$ on
$\SS_0$ for any feasible policy $c$. 
A feasible policy is said to satisfy the \textit{first order optimality  condition} if 
\begin{equation}
\left( u'\circ c \right)(a,z) \geq 
    \EE_{z} \, \hat{\beta} \hat{R} 
              \left( u' \circ c \right)
              \left( 
                  \hat{R} \left[ a - c(a,z) \right] + \hat{Y}, 
                  \, \hat{Z} 
              \right)
\end{equation}
for all $(a,z) \in \SS_0$, and equality holds when 
$c(a,z) < a$. Noting that $u'$ is decreasing, the first order optimality condition can be compactly stated as
    \begin{equation}
	\label{eq:foc}
	    \left( u' \circ c \right)(a,z) = 
	    \max \left\{ 
	            \EE_z \, \hat{\beta} \hat{R} 
	                      \left( u' \circ c \right) 
	                        \left( \hat{R} \left[ a - c(a,z)\right] + \hat{Y}, 
	                               \, \hat{Z} 
	                        \right),
	            u'(a)
	         \right\}
	\end{equation}
for all $(a,z) \in \SS_0$. A feasible policy is said to satisfy the 
\textit{transversality condition} if, for all $(a, z) \in \SS_0$,
\begin{equation}
\label{eq:tvc}
  \lim_{t \to \infty}     
    \EE_{a,z} \, \beta_0 \cdots \beta_t 
            \left(u' \circ c \right)(a_t, Z_t) \, a_t = 0.
\end{equation}

\begin{theorem}[Sufficiency of first order and transversality conditions]
    \label{t:opt_result}
    If Assumptions \ref{a:b0}--\ref{a:y0} hold, then every feasible
    policy satisfying the first order and transversality conditions is an
    optimal policy.
\end{theorem}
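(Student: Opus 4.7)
The plan is to show, for the candidate policy $c$ satisfying \eqref{eq:foc} and \eqref{eq:tvc} and any competing feasible policy $c'$ generating asset path $\{a'_t\}$ from the same initial condition $(a,z)\in\SS_0$, that $V_c(a,z)\geq V_{c'}(a,z)$. The argument combines four ingredients: concavity of $u$, a summation-by-parts identity driven by the budget equation, the first-order condition in complementary-slackness form, and the transversality condition to suppress the terminal term.

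Concavity of $u$ yields $u(c_t)-u(c'_t)\geq u'(c_t)(c_t-c'_t)$ for every $t$. Writing $B_t:=\beta_0\cdots\beta_t$, summing and taking expectations gives
\begin{equation*}
V_c(a,z)-V_{c'}(a,z)\;\geq\;\EE_{a,z}\sum_{t=0}^{\infty} B_t\, u'(c_t)\,(c_t-c'_t).
\end{equation*}
Introduce $s_t:=a_t-c_t$, $s'_t:=a'_t-c'_t$, and $d_t:=a_t-a'_t$. The budget rule \eqref{eq:trans_at} gives $d_0=0$ and $d_{t+1}=R_{t+1}(s_t-s'_t)$, so that $c_t-c'_t=d_t-R_{t+1}^{-1}d_{t+1}$. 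Substituting, truncating at horizon $T$, reindexing the forward-shifted piece, and conditioning to the period-$(t-1)$ information via the tower property (noting that $R_t^{-1}d_t=s_{t-1}-s'_{t-1}$ is already known at $t-1$) gives, after rearranging,
\begin{equation*}
\EE_{a,z}\sum_{t=0}^{T} B_t u'(c_t)(c_t-c'_t) \;=\; \sum_{t=1}^{T} \EE_{a,z}\!\left[(s_{t-1}-s'_{t-1})\, B_{t-1}\,\Delta_t\right] \;-\; \EE_{a,z}\!\left[B_T u'(c_T)(s_T-s'_T)\right],
\end{equation*}
where $\Delta_t:=\EE_{t-1}[\beta_t R_t u'(c_t)]-u'(c_{t-1})$.

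The compact Euler condition \eqref{eq:foc} forces $\Delta_t\leq 0$ at every $t$, with equality whenever $c_{t-1}<a_{t-1}$. When instead $c_{t-1}=a_{t-1}$, we have $s_{t-1}=0$, so $s_{t-1}-s'_{t-1}=-s'_{t-1}\leq 0$, making the product of the two non-positive factors nonnegative. Hence each summand in the first term on the right-hand side is $\geq 0$. For the boundary term, using $s_T\leq a_T$ and $s'_T\geq 0$,
\begin{equation*}
-\EE_{a,z}[B_T u'(c_T)(s_T-s'_T)] \;\geq\; -\EE_{a,z}[B_T u'(c_T)\,a_T] \;\longrightarrow\; 0 \quad\text{as } T\to\infty
\end{equation*}
by the transversality condition \eqref{eq:tvc}. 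Passing $T\to\infty$ gives $V_c(a,z)-V_{c'}(a,z)\geq 0$.

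The main obstacle is technical rather than conceptual: one must justify passing the limit in $T$ through the infinite expected sum produced by the concavity bound. This is where Assumptions~\ref{a:b0}--\ref{a:y0} and the well-definedness of $V_c$ (established earlier in the appendix) come in, allowing partial sums of discounted utilities to converge and Fubini/monotone convergence to apply where needed. A smaller subtlety is extracting complementary slackness cleanly from the compact form \eqref{eq:foc}: the $\max$ is attained by the expectation term precisely when saving is strictly positive, and by $u'(a)$ otherwise, which is exactly the dichotomy driving the sign argument above.
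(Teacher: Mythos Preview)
Your proposal is correct and follows the standard concavity-plus-Abel-summation argument; the paper itself omits the proof, describing it as a routine extension of Theorem~1 in \cite{benhabib2015wealth}, which uses precisely this structure (concavity bound, budget-driven summation by parts, complementary slackness from the Euler inequality, and the transversality condition to kill the terminal term). One cosmetic point: writing $c_t-c'_t=d_t-R_{t+1}^{-1}d_{t+1}$ presupposes $R_{t+1}>0$, whereas the paper allows $R$ to be merely nonnegative; since all you actually need is $c_t-c'_t=d_t-(s_t-s'_t)$, which follows directly from $d_{t+1}=R_{t+1}(s_t-s'_t)$ without inversion, this is harmless.
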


\subsection{Existence and Computability of Optimal Consumption}
\label{ss:ifp_existence}

Let $\cC$ be the space of continuous functions $c
\colon \SS_0 \to \RR$ such that
$c$ is increasing in the first argument,
$0 < c(a,z) \leq a$ for all $(a,z) \in \SS_0$, and
\begin{equation}
    \label{eq:C4}
   \sup_{(a,z) \in \SS_0} \left| (u' \circ c)(a,z) - u'(a) \right| < \infty.
\end{equation}
To compare two consumption policies, we pair $\cC$ with the distance
\begin{equation}
\label{eq:rho_metric}
  \rho(c,d) 
    := \left\| u' \circ c - u' \circ d \right\|
    := \sup_{(a,z) \in \SS_0} 
             \left| 
                 \left(u' \circ c \right)(a,z) - 
                 \left(u' \circ d \right)(a,z) 
             \right|,
\end{equation}
which evaluates the maximal difference in terms of marginal utility. While
elements of $\cC$ are not generally bounded, $\rho$ is a
valid metric on $\cC$. In particular, $\rho$ is finite on $\cC$ since 
$\rho(c,d) \leq \left\| u' \circ c - u' \right\| 
              + \left\| u' \circ d - u' \right\|$, 
and the last two terms are finite by \eqref{eq:C4}.
In Appendix \ref{s:proof_opt}, we show
that $(\cC, \rho)$ is a complete metric space.  The following proposition
shows that, for any policy in $\cC$, the first order optimality condition \eqref{eq:foc} implies the
transversality condition.

\begin{proposition}[Sufficiency of first order condition]
    \label{pr:suff_optpol}
    Let Assumptions~\ref{a:b0}--\ref{a:y0} hold.
    If $c \in \cC$ and the first order optimality condition \eqref{eq:foc}
    holds for all $(a,z) \in \SS_0$, then $c$ satisfies the transversality
    condition. In particular, $c$ is an optimal policy.
\end{proposition}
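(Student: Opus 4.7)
The plan is to establish the transversality condition \eqref{eq:tvc}, since once that is in hand Theorem~\ref{t:opt_result} immediately delivers optimality. Define $\phi_t := \beta_0 \cdots \beta_t \,(u' \circ c)(a_t, Z_t)$ and $s_t := a_t - c(a_t, Z_t)$, and write $\EE_t[\,\cdot\,]$ for conditional expectation given the history through time $t$. Because the first order condition \eqref{eq:foc} holds with equality whenever $c(a_t, Z_t) < a_t$, while both of its sides trivially vanish when $s_t = 0$, multiplying \eqref{eq:foc} by $s_t$ and using $R_{t+1} s_t = a_{t+1} - Y_{t+1}$ yields the pointwise identity
$$\phi_t s_t \;=\; \EE_t\!\left[\phi_{t+1}\,(a_{t+1} - Y_{t+1})\right].$$
Writing $\phi_t s_t = \phi_t a_t - \phi_t c_t$, taking unconditional expectations (under $\EE_{a_0, Z_0}$), and telescoping from $t=0$ to $T-1$ produces the key accounting identity
$$\EE[\phi_T a_T] \,+\, \sum_{t=0}^{T-1}\EE[\phi_t c_t] \;=\; \phi_0 a_0 \,+\, \sum_{t=1}^{T}\EE[\phi_t Y_t]. \qquad (\star)$$

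The next step is to bound both sides of $(\star)$ via the growth-rate conditions. The bound \eqref{eq:C4} supplies a finite constant $K$ with $(u' \circ c)(a, z) \leq u'(a) + K$ everywhere, and the floor $a_t \geq Y_t$ for $t \geq 1$ together with monotonicity of $u'$ promotes this to the exogenous bound $(u' \circ c)(a_t, Z_t) \leq u'(Y_t) + K$. Iterating the budget constraint also gives $a_T \leq (R_1 \cdots R_T)\,a_0 + \sum_{k=1}^T (R_{k+1} \cdots R_T)\,Y_k$, so that $\EE[\phi_T a_T]$ is dominated by a finite sum of expectations involving only products of the exogenous sequences $\{\beta_t\}$, $\{R_t\}$, $\{Y_t\}$ and $u'(Y_t)$. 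Because $\{Z_t\}$ is Markov and the innovations $\{\epsilon_t\},\{\zeta_t\},\{\eta_t\}$ are independent across time, each such expectation factors as a matrix-vector product built from either $M_\beta(z, z') := P(z, z')\,\EE\,\beta(z', \epsilon)$ (with spectral radius $G_\beta < 1$ by Assumption~\ref{a:b0}) or $M_{\beta R}(z, z') := P(z, z')\,\EE[\beta(z', \epsilon)\,R(z', \zeta)]$ (with spectral radius $G_{\beta R} < 1$ by Assumption~\ref{a:rb0}), acting on vectors of $Z$-conditional moments of $Y$ and $u'(Y)$ that are finite under Assumption~\ref{a:y0}. Geometric decay of the iterates of these nonnegative matrices delivers both $\sum_{t=1}^\infty \EE[\phi_t Y_t] < \infty$ and $\EE[\phi_T a_T] \to 0$.

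The main obstacle will be the mixed term $\EE\!\left[\beta_0\cdots\beta_T\, R_{k+1}\cdots R_T\,(u'\circ c)(a_T, Z_T)\,Y_k\right]$ that arises in the upper bound on $\EE[\phi_T a_T]$. Each such expectation has to be split at $t = k$: the early block $\beta_0\cdots\beta_k Y_k$ is dominated by $k$ iterates of $M_\beta$, while the tail block $\beta_{k+1}R_{k+1}\cdots\beta_T R_T\,(u'(Y_T)+K)$ is dominated by $T-k$ iterates of $M_{\beta R}$ applied to a finite vector. The resulting bound, of order $(G_\beta+\delta)^k (G_{\beta R}+\delta)^{T-k}$ for arbitrarily small $\delta > 0$, sums over $k \leq T$ to an $O(T\,\rho^T)$ quantity with $\rho := \max\{G_\beta+\delta, G_{\beta R}+\delta\} < 1$, and hence tends to $0$. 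Combining this with the analogous (simpler) bound on the $a_0$-term establishes $\EE[\phi_T a_T]\to 0$, which is precisely \eqref{eq:tvc}; optimality then follows from Theorem~\ref{t:opt_result}.
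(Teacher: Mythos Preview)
Your direct bound on $\EE[\phi_T a_T]$ via the maximal-path expansion $a_T \leq \prod R_t \cdot a_0 + \sum_k \prod_{t>k} R_t \cdot Y_k$, combined with the exogenous bound $(u'\circ c)(a_T,Z_T)\leq u'(Y_T)+K$ and the spectral-radius estimates on $L_\beta$ and $L_{\beta R}$, is correct and is essentially the paper's approach. Two remarks are worth making.

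First, the telescoping identity $(\star)$ is not doing any work. You derive it from the first order condition, but then you establish $\EE[\phi_T a_T]\to 0$ by a direct bound that does not use $(\star)$ at all. Indeed, the paper's proof never invokes the first order condition: membership $c\in\cC$ alone (via \eqref{eq:C4}) suffices for the transversality condition. So the identity and the entire first paragraph can be dropped.

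Second, the paper avoids your ``main obstacle''---the mixed products $\beta_0\cdots\beta_T\,R_{k+1}\cdots R_T\,u'(Y_T)\,Y_k$---by a decoupling trick: for any fixed $A>0$,
\[
u'(a_t)\,a_t \;=\; u'(a_t)\,a_t\,\1\{a_t\le A\}+u'(a_t)\,a_t\,\1\{a_t>A\}\;\le\; A\,u'(Y_t)+u'(A)\,\tilde a_t,
\]
which separates the marginal-utility factor from the asset level. The two resulting terms are handled by Lemma~\ref{l:theta} (for $\prod\beta_i\,u'(Y_t)$) and Lemma~\ref{lm:max_path} (for $\prod\beta_i\,\tilde a_t$) respectively, with no cross terms to sum over. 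Your route through the double sum $\sum_{k\le T}$ and the $O(T\rho^T)$ estimate is valid, but the $A$-split is shorter and cleaner.
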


We aim to characterize the 
optimal policy as the fixed point of the \emph{time iteration operator} $T$ defined 
as follows: for fixed $c \in \cC$ and $(a,z) \in \SS_0$, the value of the 
image $Tc$ at $(a,z)$ is defined as the $\xi \in (0,a]$ that solves
\begin{equation}
\label{eq:T_opr}
    u'(\xi) = \psi_c(\xi, a, z),
\end{equation}
where $\psi_c$ is the function on 
\begin{equation}
\label{eq:dom_T_opr}
    G := \left\{ 
            (\xi, a, z) \in \RR_+ \times (0, \infty) \times \ZZ \colon
            0 < \xi \leq a
         \right\}
\end{equation}
defined by
\begin{equation}
\label{eq:keypart_T_opr}
    \psi_c(\xi,a,z) := 
      \max \left\{
              \EE_{z} \, \hat{\beta} \hat{R}
                 (u' \circ c)[\hat{R}(a - \xi) + \hat{Y}, \, \hat{Z}], 
              \, u'(a)
           \right\}.
\end{equation}

The following theorem shows that the time iteration
operator is an $n$-step contraction mapping on a complete metric space of
candidate policies and its fixed point is the unique optimal policy.

\begin{theorem}[Existence, uniqueness and computability of optimal policies]
	\label{t:ctra_T}
    If Assumptions~\ref{a:b0}--\ref{a:y0} hold, then there exists an
    $n$ in $\NN$ such that $T^n$ is a contraction mapping on $(\cC, \rho)$.
	In particular,
	\begin{enumerate}
	    \item\label{i:ctra_T1} $T$ has a unique fixed point $c^* \in \cC$.
	    \item\label{i:ctra_T2} The fixed point $c^*$ is the unique optimal policy in $\cC$.
        \item\label{i:ctra_T3} For all $c \in \cC$ we have $\rho(T^k c, c^*)
            \to 0$ as $k \to \infty$.
	\end{enumerate}
\end{theorem}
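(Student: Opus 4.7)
The plan is to prove the theorem in three stages: first verify that $T$ is a self-map of $(\cC, \rho)$; then establish a one-step pointwise contraction estimate in marginal utility; and finally iterate this estimate $n$ times and identify the resulting prefactor with the spectral radius $G_{\beta R}$, so that Banach's fixed-point theorem closes the argument. For the first stage, fix $c \in \cC$ and $(a,z) \in \SS_0$. Since $c$ is increasing in its first argument, $u'\circ c$ is decreasing, and because $\hat R(a - \xi) + \hat Y$ decreases in $\xi$, the map $\xi \mapsto \psi_c(\xi, a, z)$ is nondecreasing in $\xi$. Meanwhile $\xi \mapsto u'(\xi)$ is strictly decreasing with $u'(\xi) \to \infty$ as $\xi \downarrow 0$, and at $\xi = a$ we have $u'(a) \leq \psi_c(a, a, z)$ because $u'(a)$ appears inside the $\max$; the intermediate value theorem then yields a unique $Tc(a, z) \in (0, a]$ solving \eqref{eq:T_opr}. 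Continuity and monotonicity of $Tc$ in $a$ follow from the implicit characterization together with dominated convergence, while \eqref{eq:C4} for $Tc$ is obtained by writing $(u' \circ Tc)(a, z) - u'(a) = \max\{\psi_c(Tc, a, z) - u'(a),\, 0\}$ and dominating the right side using \eqref{eq:C4} for $c$ together with Assumption~\ref{a:y0}.

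For the contraction estimate, fix $c, d \in \cC$ and $(a,z) \in \SS_0$, set $\xi_c := Tc(a,z)$ and $\xi_d := Td(a,z)$, and assume $\xi_c \geq \xi_d$ without loss, so that $u'(\xi_d) \geq u'(\xi_c)$. Write $A_c(\xi) := \EE_z \hat\beta \hat R \,(u' \circ c)(\hat R(a - \xi) + \hat Y, \hat Z)$ and analogously for $A_d$; both are nondecreasing in $\xi$. A short case analysis on whether the $\max$ defining $\psi_c$ and $\psi_d$ is attained at $u'(a)$, combined with the $1$-Lipschitz identity $|\max\{x,B\} - \max\{y,B\}| \leq |x - y|$, reduces $u'(\xi_d) - u'(\xi_c)$ to a bound of the form $A_d(\xi_d) - A_c(\xi_c)$. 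Adding and subtracting $A_d(\xi_c)$ and using the monotonicity of $A_d$ together with $\xi_c \geq \xi_d$ gives $A_d(\xi_d) - A_c(\xi_c) \leq A_d(\xi_c) - A_c(\xi_c)$, which is dominated by $\EE_z \hat\beta \hat R \cdot \rho(c, d)$. Consequently
\begin{equation*}
    |u'(Tc)(a, z) - u'(Td)(a, z)| \,\leq\, \EE_z[\hat\beta \hat R]\cdot \rho(c, d) \quad \text{for all } (a, z) \in \SS_0.
\end{equation*}

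Iterating this pointwise bound $n$ times and exploiting the Markov property of $\{Z_t\}$ along with the independence of $\{\epsilon_t\}, \{\zeta_t\}$ yields
\begin{equation*}
    \rho(T^n c, T^n d) \,\leq\, \max_{z \in \ZZ} \EE_z\!\left[\prod_{t=1}^n \beta_t R_t\right] \rho(c, d).
\end{equation*}
Because $\ZZ$ is finite, the prefactor equals $\|L^n \1\|_\infty$ for the nonnegative matrix $L(z, z') := P(z, z')\,\EE[\beta(z', \epsilon) R(z', \zeta)]$, whose spectral radius coincides with $G_{\beta R}$; Gelfand's formula and Assumption~\ref{a:rb0} therefore force $T^n$ to be a strict contraction on the complete metric space $(\cC, \rho)$ for all sufficiently large $n$. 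Banach's theorem produces a unique fixed point $c^* \in \cC$ of $T^n$, and since $T c^*$ is also a fixed point of $T^n$, uniqueness gives $T c^* = c^*$. By construction $c^*$ satisfies the first order condition \eqref{eq:foc}, so Proposition~\ref{pr:suff_optpol} identifies it as an optimal policy; uniqueness among optimal policies in $\cC$ follows because any such policy also satisfies \eqref{eq:foc} (by strict concavity of $u$ and the Inada condition at zero) and is therefore a fixed point of $T$. Item~\ref{i:ctra_T3} follows from the standard iteration estimate for the contraction $T^n$, applied also to $T, T^2, \ldots, T^{n-1}$.

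I expect the main technical obstacle to be the verification of \eqref{eq:C4} for $Tc$ in the first stage, since the singularity of $u'$ at zero requires uniform control of both branches of the $\max$ in $\psi_c$ simultaneously, using both the bound assumed on $c$ and the integrability Assumption~\ref{a:y0}. The spectral-radius identification of $\max_z \EE_z[\prod \beta_t R_t]^{1/n}$ with $G_{\beta R}$ is conceptually clean but requires some care because $G_{\beta R}$ is defined via the stationary distribution of $\{Z_t\}$ rather than an arbitrary initial condition.
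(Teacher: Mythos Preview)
Your approach is essentially the paper's, just without the intermediate conjugate operator: the paper transports $T$ to an operator $\tilde T$ on $\hH := u'\circ \cC$ via $Hc = u'\circ c$, checks that $\tilde T$ is order preserving, and then proves by induction the $n$-step Blackwell discount inequality $\tilde T^k(h+\gamma)(a,z)\le \tilde T^k h(a,z)+\gamma\,\EE_z\prod_{i=1}^k\beta_iR_i$, which is your iterated one-step estimate in different packaging. The one place where your write-up needs sharpening is the iteration itself: the displayed one-step bound has already collapsed the right side to the scalar $\EE_z[\hat\beta\hat R]\,\rho(c,d)$, and iterating \emph{that} inequality yields $\bigl(\max_z\EE_z\hat\beta\hat R\bigr)^n$, not $\max_z\EE_z\prod_{t=1}^n\beta_tR_t$. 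What you need (and what your appeal to the Markov property signals you intend) is the finer state-dependent bound
\[
  \sup_{a}\bigl|(u'\circ Tc)(a,z)-(u'\circ Td)(a,z)\bigr|\;\le\;(L_{\beta R}\,\delta)(z),
  \qquad \delta(\hat z):=\sup_{a'}\bigl|(u'\circ c)(a',\hat z)-(u'\circ d)(a',\hat z)\bigr|,
\]
which your case analysis already delivers if you bound $(u'\circ d-u'\circ c)(\hat a,\hat Z)$ by $\delta(\hat Z)$ rather than by $\rho(c,d)$. Iterating componentwise gives $\delta_n\le L_{\beta R}^n\,\delta_0$, hence $\rho(T^nc,T^nd)\le \|L_{\beta R}^n\mathbbm 1\|_\infty\,\rho(c,d)=\max_z\EE_z\prod_{t=1}^n\beta_tR_t\cdot\rho(c,d)$, and then Lemma~\ref{l:theta} (equivalently Gelfand's formula applied to $L_{\beta R}$) forces the prefactor below $1$ for large $n$. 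With this adjustment your argument and the paper's coincide.
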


Part (\ref{i:ctra_T3}) shows that, under our conditions, the familiar time iteration
algorithm is globally convergent, provided one starts with some policy in the candidate
class $\cC$.

\subsection{Properties of Optimal Consumption}
\label{ss:ifp_properties}

In this section we study the properties of the optimal consumption function
obtained in Theorem~\ref{t:ctra_T}.
Assumptions~\ref{a:b0}--\ref{a:y0} are held to be true throughout.
The following two propositions show the monotonicity of the consumption
function, which is intuitive.

\begin{proposition}[Monotonicity with respect to wealth]
    \label{pr:monotonea}
    The optimal consumption and savings functions $c^*(a,z)$ and $i^*(a,z) := a - c^*(a,z)$ are increasing in $a$.
\end{proposition}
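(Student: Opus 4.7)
The plan is to derive monotonicity of $c^*$ directly from membership in $\cC$, and then establish monotonicity of $i^*(a,z) = a - c^*(a,z)$ by contradiction, using the first-order condition \eqref{eq:foc}. Since $c^* \in \cC$ by Theorem~\ref{t:ctra_T}(\ref{i:ctra_T1}), and every element of $\cC$ is by definition increasing in its first argument, the monotonicity of $c^*(a,z)$ in $a$ is immediate.

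For the savings function, I would fix $z \in \ZZ$, suppose for contradiction that $a_1 < a_2$ in $(0,\infty)$ satisfy $i^*(a_1,z) > i^*(a_2,z)$, and split into three cases. The case $c^*(a_1,z) = a_1$ is trivial, since then $i^*(a_1,z) = 0 \leq i^*(a_2,z)$. When both points lie in the interior region $c^*(a_j,z) < a_j$ for $j=1,2$, equality holds in \eqref{eq:foc} at both points; since $c^*$ is increasing in its first argument and $u'$ is strictly decreasing, $u' \circ c^*$ is decreasing in $a$, so $i^*(a_1,z) > i^*(a_2,z)$ together with $\hat R \geq 0$ gives the pointwise bound
\begin{equation*}
(u' \circ c^*)\bigl(\hat R \, i^*(a_1,z) + \hat Y, \hat Z\bigr) \leq (u' \circ c^*)\bigl(\hat R \, i^*(a_2,z) + \hat Y, \hat Z\bigr).
\end{equation*}
Applying $\EE_z \hat\beta \hat R(\cdot)$ and invoking the equality FOC at both points yields $u'(c^*(a_1,z)) \leq u'(c^*(a_2,z))$, which combined with the monotonicity of $c^*$ forces $c^*(a_1,z) = c^*(a_2,z)$. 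But then $i^*(a_2,z) - i^*(a_1,z) = a_2 - a_1 > 0$, contradicting the hypothesis.

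The one remaining case, $c^*(a_1,z) < a_1$ while $c^*(a_2,z) = a_2$, is the main obstacle because one cannot appeal to equality in the FOC at $a_2$. The resolution is to use the equality FOC at $a_1$ in the form $u'(c^*(a_1,z)) = \EE_z \hat\beta \hat R (u' \circ c^*)(\hat R \, i^*(a_1,z) + \hat Y, \hat Z)$; since $i^*(a_1,z) > 0$ and $u'\circ c^*$ is decreasing in $a$, this expectation is bounded above by $\EE_z \hat\beta \hat R (u' \circ c^*)(\hat Y, \hat Z)$, which in turn is at most $u'(a_2)$ by the inequality form of \eqref{eq:foc} at the constrained point $a_2$. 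On the other hand, $c^*(a_1,z) < a_1 < a_2$ and strict monotonicity of $u'$ yield the strict inequality $u'(c^*(a_1,z)) > u'(a_2)$, which is the desired contradiction. The crucial step is precisely this combination of the inequality side of the FOC with the strict gap $a_1 < a_2$.
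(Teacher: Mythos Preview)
Your argument is correct. The paper takes a slightly different route: it introduces the closed subset $\cC_0 := \{c \in \cC : a \mapsto a - c(a,z) \text{ is increasing}\}$, proves that $T$ maps $\cC_0$ into itself, and then invokes the closed-invariant-subset principle to conclude $c^* \in \cC_0$. The core contradiction step in the paper's invariance proof is essentially the same Euler-equation comparison you perform, but applied to $Tc$ for arbitrary $c \in \cC_0$ rather than directly to the fixed point. The paper also avoids your three-way case split by using the max form of \eqref{eq:foc} at $a_2$, which automatically yields $(u' \circ \xi)(a_2,z) \geq \EE_z \hat\beta \hat R (u' \circ c)(\hat R[a_2-\xi(a_2,z)]+\hat Y,\hat Z)$ regardless of whether the constraint binds. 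Your direct fixed-point approach is more self-contained and avoids the detour through invariance, while the paper's invariance argument is more modular (the same technique is reused for concavity in Lemma~\ref{lm:self_map_cC1} and for the linear lower bound in Lemma~\ref{lm:cC2}).
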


\begin{proposition}[Monotonicity with respect to income]
    \label{pr:monotoneY} If $\{ Y_{1t} \}$ and $\{ Y_{2t} \}$ are two income processes
    satisfying $Y_{1t}\leq Y_{2t}$ for all $t$ and $c_1^*$ and $c_2^*$ are the
    corresponding optimal consumption functions, then $c_1^* \leq c_2^*$
    pointwise on $\SS_0$.
\end{proposition}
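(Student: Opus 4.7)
My plan is to lift a two-variable monotonicity through the time iteration operator of Section~\ref{ss:ifp_existence} and then pass to the limit. For $i \in \{1, 2\}$, let $T_i$ and $\psi_c^{(i)}$ denote the operator and auxiliary function in \eqref{eq:T_opr}--\eqref{eq:keypart_T_opr} built from the income process $\{Y_{it}\}$; by Theorem~\ref{t:ctra_T} each $T_i$ admits $c_i^*$ as its unique fixed point in $\cC$. I would start a common iteration from $c^{(0)}(a, z) := a$, which lies in $\cC$ since $u' \circ c^{(0)} - u' \equiv 0$ on $\SS_0$.

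The key step is to show that for any $c_1, c_2 \in \cC$ with $c_1 \leq c_2$ pointwise, one has $T_1 c_1 \leq T_2 c_2$ pointwise. Since $u'$ is strictly decreasing, $u' \circ c_1 \geq u' \circ c_2$, and combining this with $Y_{1t} \leq Y_{2t}$ together with the fact that each element of $\cC$ is increasing in wealth (Proposition~\ref{pr:monotonea}), we obtain almost surely
\begin{equation*}
    (u' \circ c_1)\bigl[\hat R(a-\xi) + \hat Y_1, \hat Z\bigr]
    \geq (u' \circ c_2)\bigl[\hat R(a-\xi) + \hat Y_1, \hat Z\bigr]
    \geq (u' \circ c_2)\bigl[\hat R(a-\xi) + \hat Y_2, \hat Z\bigr].
\end{equation*}
Taking $\EE_z[\hat\beta \hat R\,\cdot\,]$ and the maximum with $u'(a)$ yields $\psi_{c_1}^{(1)}(\xi, a, z) \geq \psi_{c_2}^{(2)}(\xi, a, z)$ for all $(\xi, a, z) \in G$.

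Next I would combine this inequality with the fact that $\xi \mapsto \psi_c^{(i)}(\xi, a, z)$ is nondecreasing (because the argument of $c$ in its definition is decreasing in $\xi$ and each $c$ in $\cC$ is increasing in wealth) while $u'$ is strictly decreasing. Writing $\xi_i := T_i c_i(a, z)$, if $\xi_1 > \xi_2$ then
\begin{equation*}
    u'(\xi_1) = \psi_{c_1}^{(1)}(\xi_1, a, z) \geq \psi_{c_1}^{(1)}(\xi_2, a, z) \geq \psi_{c_2}^{(2)}(\xi_2, a, z) = u'(\xi_2) > u'(\xi_1),
\end{equation*}
a contradiction, so $T_1 c_1 \leq T_2 c_2$. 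Inducting on $k$ from $c^{(0)} \leq c^{(0)}$ gives $T_1^k c^{(0)} \leq T_2^k c^{(0)}$ on $\SS_0$ for every $k \in \NN$. Theorem~\ref{t:ctra_T} then supplies $\rho(T_i^k c^{(0)}, c_i^*) \to 0$; because $u'$ is a continuous, strictly decreasing bijection on $(0, \infty)$, $\rho$-convergence forces pointwise convergence of consumption policies, and passing to the limit in the preceding inequality yields $c_1^* \leq c_2^*$ pointwise on $\SS_0$.

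The main obstacle is the comparison step that promotes $c_1 \leq c_2$ to $T_1 c_1 \leq T_2 c_2$: the inner maximum with $u'(a)$ in $\psi_c^{(i)}$ can place the constrained and unconstrained regimes differently for the two problems, but the short chain of inequalities above avoids case analysis by relying only on the monotonicity of $\psi_c^{(i)}$ in $\xi$ and of $u'$ on both sides of the defining equation \eqref{eq:T_opr}.
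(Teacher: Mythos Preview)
Your proof is correct and follows essentially the same route as the paper: both establish that the time iteration operators preserve the pointwise ordering (the paper separates this into $T_1 c \leq T_2 c$ for fixed $c$ plus order preservation of each $T_j$ from Lemma~\ref{lm:monot}, while you prove $T_1 c_1 \leq T_2 c_2$ in one stroke), then iterate and pass to the limit via Theorem~\ref{t:ctra_T}. One minor correction: the fact that each $c \in \cC$ is increasing in its first argument is part of the definition of $\cC$, not Proposition~\ref{pr:monotonea}, which concerns only the optimal policy $c^*$.
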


Under further assumptions we can show that the optimal policy is concave and asymptotically linear with respect to the wealth level.

\begin{proposition}[Concavity and asymptotic linearity of consumption function]
	\label{pr:optpol_concave}
	If for each $z \in \ZZ$ and $c \in \cC$ that is concave in its first argument,
	\begin{equation}
		\label{eq:concave_prop}
		x \mapsto (u')^{-1} \left[ 
		\EE_z \hat \beta \hat{R} \left( u' \circ c \right) 
		(\hat{R} x + \hat{Y}, \, \hat{Z} )
		\right]
		\; \text{ is concave on } \RR_+,
	\end{equation}
	then
	\begin{enumerate}
		\item $a \mapsto c^*(a,z)$ is concave, and
        \item there exists $\alpha(z) \in [0,1]$ such that $\lim_{a \to
            \infty} [c^*(a,z) / a] = \alpha(z)$.
	\end{enumerate}
\end{proposition}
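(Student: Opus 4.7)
The plan is to leverage the time iteration operator $T$ and its global convergence in $(\cC, \rho)$ from Theorem~\ref{t:ctra_T}(\ref{i:ctra_T3}). Let $\cC_{\mathrm{conc}} \subset \cC$ denote the subset of policies that are concave in their first argument. First I would observe that $\cC_{\mathrm{conc}}$ is closed in $(\cC, \rho)$: the metric $\rho$ controls $u' \circ c$ uniformly, and since $u'$ is continuous and strictly monotone, any $\rho$-convergent sequence in $\cC$ converges pointwise, and pointwise limits of concave functions remain concave. The key step is then to show that $T$ maps $\cC_{\mathrm{conc}}$ into itself, after which Theorem~\ref{t:ctra_T}(\ref{i:ctra_T3}) applied to a concave seed such as $c_0(a,z) = a$ yields $c^* \in \cC_{\mathrm{conc}}$ and proves part (1).

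For the invariance of $\cC_{\mathrm{conc}}$ under $T$, fix $c \in \cC_{\mathrm{conc}}$ and $z \in \ZZ$, and introduce
\[
    h(x, z) := (u')^{-1}\!\left[\EE_z\, \hat{\beta}\hat{R}\,(u' \circ c)(\hat{R} x + \hat{Y}, \hat{Z})\right],
    \qquad F(x, z) := x + h(x, z).
\]
By hypothesis \eqref{eq:concave_prop}, $h(\cdot, z)$ is concave on $[0, \infty)$; it is also nondecreasing because $c$ is nondecreasing in wealth and $u'$ is strictly decreasing, so the bracketed expectation is nonincreasing in $x$. Consequently $F(\cdot, z)$ is strictly increasing and concave, maps $[0, \infty)$ onto $[\bar{a}(z), \infty)$ with $\bar{a}(z) := h(0, z)$, and its inverse $F^{-1}(\cdot, z)$ is increasing and convex. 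Unpacking \eqref{eq:T_opr}--\eqref{eq:keypart_T_opr}, the interior Euler equation gives $Tc(a, z) = a - F^{-1}(a, z)$ for $a \geq \bar{a}(z)$, while the corner yields $Tc(a, z) = a$ for $a < \bar{a}(z)$. Each piece is concave (linear below, and above the sum of the identity and the concave function $-F^{-1}$); at the join the function is continuous with left slope $1$ and right slope $1 - (F^{-1})'_+(\bar{a}(z), z) \in [0, 1]$, so the slope is nonincreasing across the kink and $Tc(\cdot, z)$ is concave on all of $(0, \infty)$.

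For part (2), concavity of $c^*(\cdot, z)$ on $(0, \infty)$ together with $0 \leq c^*(a, z) \leq a$ and continuity forces $c^*(a, z) \to 0$ as $a \downarrow 0$. The standard fact that a nonnegative concave function vanishing at $0$ has nonincreasing average then applies: for $0 < a < b$, concavity yields $c^*(a, z) \geq (a/b)\, c^*(b, z)$, hence $c^*(a, z)/a \geq c^*(b, z)/b$. Being monotone and bounded in $[0, 1]$, this ratio converges to some $\alpha(z) \in [0, 1]$ as $a \to \infty$.

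The main obstacle is the concavity check at the kink: \eqref{eq:concave_prop} delivers concavity directly on the unconstrained interior, but one must carefully verify that concavity is preserved at the boundary between the binding and non-binding regions, which ultimately hinges on the monotonicity of $F^{-1}$.
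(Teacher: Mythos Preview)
Your proof is correct and follows the same high-level architecture as the paper: identify the closed subset of concave policies, show it is $T$-invariant, and conclude via the contraction in Theorem~\ref{t:ctra_T}. The part~(2) argument is also the same.

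The genuine difference lies in how you establish $T$-invariance. The paper argues by contradiction on the interior region: assuming $\xi = Tc$ fails concavity at some convex combination, it uses the Euler equation and monotonicity of $u'\circ c$ to bound $(u'\circ\xi)$ from above, then applies hypothesis~\eqref{eq:concave_prop} to reverse the assumed inequality. It glues the binding and non-binding regions using $Tc \in \cC_0$ (savings increasing), which is why the paper works inside $\cC_1 \subset \cC_0$. Your route is more structural: writing savings $s = a - \xi$, the interior Euler equation becomes $a = s + h(s,z) =: F(s,z)$, so $Tc(a,z) = a - F^{-1}(a,z)$; concavity of $F$ (from \eqref{eq:concave_prop}) yields convexity of $F^{-1}$, hence concavity of $Tc$ directly, and the kink is handled by the explicit slope bound $(F^{-1})'_+ \in [0,1]$. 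Your approach is arguably cleaner in that it bypasses the auxiliary class $\cC_0$ and the contradiction, and it makes the role of \eqref{eq:concave_prop} transparent as concavity of the savings-to-wealth map. The paper's approach, on the other hand, works entirely at the level of the Euler equation without needing to invert $F$ or discuss one-sided derivatives.
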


\begin{remark}
	\label{rm:concave}
	Condition \eqref{eq:concave_prop} imposes some concavity structure on utility.
    It holds for the constant relative risk aversion (CRRA) utility function
    \begin{equation}
	    \label{eq:crra_utils}
	    u(c) = \frac{c^{1 - \gamma}}{1 - \gamma}
	    \quad \text{if } \gamma > 0 
	    \quad \text{and} \quad
	    u(c) = \log c 
	    \quad \text{if } \gamma = 1,
    \end{equation}
    as shown in Appendix~\ref{s:proof_opt}.
\end{remark}

Proposition~\ref{pr:optpol_concave} states that $c^*(a, z) \approx \alpha(z) a
+ b(z)$ for some function $b(z)$ when $a$ is large.  This provides
justification for linearly extrapolating the policy functions when computing
them at high wealth levels. 

Together, parts (1) and (2) of Proposition~\ref{pr:optpol_concave} imply the
linear lower bound $c^*(a,z) \geq \alpha(z)a$, although they do not provide a concrete
number for $\alpha(z)$. The following proposition establishes an explicit
linear lower bound.

\begin{proposition}[Linear lower bound on consumption]
	\label{pr:optpol_linbound}
    If there exists a nonnegative constant $\bar s$ such that 
    \begin{equation}
        \label{eq:suff_linbound}
        \bar s < 1 
        \qquad \text{and} \qquad 
        \EE_z \, \hat{\beta} \hat{R} \, u' (\hat{R} \, \bar s \, a) \leq u'(a) \text{ for all } (a,z) \in \SS_0,
    \end{equation}
    then $c^*(a,z) \geq (1-\bar s) a$ for all $(a,z) \in \SS_0$.\footnote{We
        adopt the convention $0 \cdot \infty = 0$, so condition
        \eqref{eq:suff_linbound} does not rule out the case $\PP \{R_t =0 \mid
        Z_{t-1} = z\} > 0$. Indeed, as shown in the proofs, the conclusions
    still hold if we replace this condition by the weaker alternative $\EE_z
\hat{\beta} \hat{R} \, u'[ \hat{R} \bar s a + (1 - \bar s) \hat{Y}]  \leq u'(a)$
for all $(a,z) \in \SS_0$.}
\end{proposition}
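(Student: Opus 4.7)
My plan is to exhibit a nonempty, closed, $T$-invariant subset $\cC_0$ of $(\cC,\rho)$ on which $c(a,z) \geq (1-\bar s)\, a$, and then apply Theorem~\ref{t:ctra_T}(\ref{i:ctra_T3}) to conclude that the fixed point $c^*$ lies in $\cC_0$. Define
\begin{equation*}
    \cC_0 := \bigl\{c \in \cC : c(a,z) \geq (1-\bar s)\, a \ \text{for all}\ (a,z) \in \SS_0\bigr\}.
\end{equation*}
Nonemptiness is immediate from the identity policy $c_0(a,z) = a$, which lies in $\cC$ (it is continuous, strictly increasing, and satisfies $\|u' \circ c_0 - u'\| = 0$) and trivially obeys $a \geq (1-\bar s)\, a$. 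Closedness follows since $\rho(c_n, c) \to 0$ yields pointwise convergence of $u' \circ c_n$ to $u' \circ c$; the bounds $u'(c_n(a,z)) \leq u'((1-\bar s)\, a)$, which are equivalent to membership in $\cC_0$ because $u'$ is strictly decreasing, therefore pass to the limit.

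The main step is $T$-invariance. Fix $c \in \cC_0$, $(a,z) \in \SS_0$ and set $\xi := (Tc)(a,z)$. Suppose for contradiction that $\xi < (1-\bar s)\, a$. Then $\xi < a$, so \eqref{eq:T_opr}--\eqref{eq:keypart_T_opr} give
\begin{equation*}
    u'(\xi) = \EE_z\, \hat\beta \hat R \, (u' \circ c)\bigl(\hat R(a-\xi) + \hat Y,\, \hat Z\bigr).
\end{equation*}
Since $c(a',z') \geq (1-\bar s)\, a'$ and $u'$ is strictly decreasing,
\begin{equation*}
    (u' \circ c)(\hat R(a-\xi) + \hat Y, \hat Z) \;\leq\; u'\bigl((1-\bar s)(\hat R(a-\xi) + \hat Y)\bigr) \;\leq\; u'\bigl(\hat R \bar s (1-\bar s)\, a\bigr),
\end{equation*}
where the second inequality drops the nonnegative term $(1-\bar s)\hat Y$ and uses $a - \xi > \bar s a$; on the event $\{\hat R = 0\}$ the convention $0 \cdot \infty = 0$ makes both sides contribute zero after multiplication by $\hat R$. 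Taking $\EE_z\, \hat\beta \hat R$ of both sides and applying the hypothesis at $(1-\bar s)\, a$ in place of $a$ yields
\begin{equation*}
    u'(\xi) \;\leq\; \EE_z\, \hat\beta \hat R\, u'\bigl(\hat R \bar s (1-\bar s)\, a\bigr) \;\leq\; u'\bigl((1-\bar s)\, a\bigr),
\end{equation*}
so $\xi \geq (1-\bar s)\, a$, contradicting the assumption. Hence $Tc \in \cC_0$.

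Putting the three pieces together, Theorem~\ref{t:ctra_T}(\ref{i:ctra_T3}) gives $\rho(T^k c_0, c^*) \to 0$, and by $T$-invariance every iterate $T^k c_0$ lies in $\cC_0$, whence closedness forces $c^* \in \cC_0$, which is the stated bound. The weaker hypothesis flagged in the footnote is accommodated by the same skeleton: keep the additive $(1-\bar s)\hat Y$ term rather than discarding it, and then invoke the alternative inequality at $(1-\bar s)\, a$ to close the chain. I expect the main obstacle to be purely bookkeeping within the inequality chain, in particular the careful treatment of the zero-return event under the convention $0 \cdot \infty = 0$; conceptually, everything reduces to monotonicity of $u'$ together with the substitution $a \mapsto (1-\bar s)\, a$ in the hypothesis.
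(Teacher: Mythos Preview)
Your proof is correct and follows essentially the same approach as the paper: define the closed subset of $\cC$ on which $c(a,z)\geq (1-\bar s)a$, show $T$ leaves it invariant via the same contradiction chain (apply $c\geq (1-\bar s)\,\cdot$, use $a-\xi>\bar s a$, then invoke \eqref{eq:suff_linbound} at $(1-\bar s)a$), and conclude $c^*$ lies in it. The only cosmetic difference is that the paper names this set $\cC_2$ (your $\cC_0$ collides with a different set the paper has already defined), and the paper phrases the invariance conclusion via ``closed invariant subset of a contraction'' rather than via iterate convergence, but these are equivalent.
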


The second inequality in \eqref{eq:suff_linbound} restricts marginal utility derived from
transferring wealth to the next period and then consuming versus consuming
wealth today.  The value $\bar s$ can be clarified once primitives are specified,
as the next example illustrates.

\begin{example}
    \label{eg:CRRA}
    Suppose that utility is CRRA, as in \eqref{eq:crra_utils}.  If we now take
    \begin{equation}
        \label{eq:suff_crra}
        \bar s := 
        \left(
            \max_{z \in \ZZ} \EE_z \hat \beta \hat R^{1 - \gamma} 
        \right)^{1 / \gamma}
    \end{equation}
    and $\bar s < 1$, then the conditions of
    Proposition~\ref{pr:optpol_linbound} hold.  In particular, the second
    inequality in \eqref{eq:suff_linbound} holds, as follows directly
    from the definition of $\bar s$ and $u'(x) = x^{-\gamma}$.
    In the case of \cite{benhabib2015wealth}, where the discount rate is
    constant and returns are {\sc iid}, the expression in \eqref{eq:suff_crra}
    reduces to $\bar s := (\beta \EE R_t^{1 - \gamma} )^{1 / \gamma}$.
    The requirement $\bar s < 1$ then reduces to $\beta
    \EE R_t^{1 - \gamma} < 1$, which is one of their
    assumptions (see Example~\ref{eg:bh1}).  
\end{example}

\section{Stationarity, Ergodicity, and Tail Behavior}
\label{s:sto_stability}

This section focuses on stationarity, ergodicity and tail behavior of wealth under the unique optimal policy $c^*$ obtained in Theorem~\ref{t:ctra_T}.  
So that this policy exists, Assumptions~\ref{a:b0}--\ref{a:y0} are always
taken to be valid. We extend $c^*$ to $\SS$ by setting $c^* (0,z) = 0 $ for
all $z \in \ZZ$ and consider dynamics of $(a_t, Z_t)$ on $\SS := \RR_+ \times
\ZZ$, the law of motion for which is
\begin{subequations}
    \begin{align}
    \label{eq:dyn_sys}
      a_{t+1} &= R \left( 
                      Z_{t+1}, \zeta_{t+1} 
                   \right) 
                 \left[ a_t - c^* \left(a_t, Z_t \right) \right]
                 + Y \left( 
                        Z_{t+1}, \eta_{t+1} 
                     \right), 
                      \\
      %\tag{DS1}
      Z_{t+1} &\sim P \left( Z_t, \, \cdot \, \right)
    \end{align}
\end{subequations}
Let $Q$ be the joint stochastic kernel of $(a_t, Z_t)$ on $\SS$. See
Appendix~\ref{s:prelm} for this and related definitions.

\subsection{Stationarity}
To obtain existence of a stationary distribution we need to restrict the
asymptotic growth rate for asset returns $G_R$ defined in 
\eqref{eq:gr}.

\begin{assumption}
    \label{a:r0}
    There exists a constant $\bar s$ such that \eqref{eq:suff_linbound} holds
    and $\bar s \, G_R < 1$.
\end{assumption}

Below is one straightforward example of a setting where this holds, with more complex applications
deferred to Sections~\ref{s:gc}--\ref{s:app}.

\begin{example}
    \label{eg:bh2}
    Assumption~\ref{a:r0} holds in the setting of
    \cite{benhabib2015wealth}.  As shown in Example~\ref{eg:CRRA},
    with $\bar s := (\beta \EE R_t^{1 - \gamma} )^{1 / \gamma}$ and the
    assumptions of \cite{benhabib2015wealth} in force, the conditions of \eqref{eq:suff_linbound} hold. 
    Moreover, in their {\sc iid} setting we have $G_R = \EE R_t$, so $\bar s \, G_R < 1$
    reduces to  $(\beta \EE R_t^{1 - \gamma})^{1/\gamma} \EE R_t < 1$. This
    is one of their conditions, as discussed in Example~\ref{eg:bh1}.
\end{example}

By Proposition~\ref{pr:optpol_linbound}, the value $\bar s$ in
Assumption~\ref{a:r0} is an upper bound on the rate of savings.  $G_R$ is an
asymptotic growth rate for each unit of savings invested.  If the product
of these is less than one, then probability mass contained in the wealth
distribution will not drift to
$+\infty$, which allows us to obtain the following result.\footnote{Assumption~\ref{a:r0} is weaker than any restriction implying wealth is
    bounded from above---a common device for compactifying the state space and
    thereby obtaining a stationary distribution.  Indeed, under many
    specifications of $\{Y_t\}$ and $\{R_t\}$ that fall within our framework,
    wealth of a given household can and will, over an infinite horizon, exceed any
    finite bound with probability one.  See, for example,
\cite{benhabib2015wealth}, Proposition~6.}

\begin{theorem}
    [Existence of a stationary distribution]
	\label{t:sta_exist}
    If Assumption~\ref{a:r0} holds, then $Q$ admits at least one stationary
    distribution on $\SS$.
\end{theorem}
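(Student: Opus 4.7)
My plan is to establish the existence of a stationary distribution via a Foster--Lyapunov drift analysis on a linear weighted function, combined with the Feller property of $Q$ and the Krylov--Bogolyubov theorem. Since $\SS = \RR_+ \times \ZZ$ is Polish and $\ZZ$ is finite, the only substantive work is controlling the asset coordinate, and the role of Assumption~\ref{a:r0} will be to supply a weight function on $\ZZ$ that compensates for the state-dependent size of $\EE R$.

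I would first represent $G_R$ as a spectral radius. Setting $\sigma(z') := \EE R(z', \zeta_1)$ and $M(z,z') := P(z,z') \sigma(z')$, iterating conditional expectations along the path of $\{Z_t\}$ gives $\EE \prod_{t=1}^n R_t = \nu_0^\top M^n \mathbf{1}$ for any initial law $\nu_0$ on $\ZZ$, so Gelfand's formula yields $G_R = \rho(M)$. Because $\bar s \, \rho(M) < 1$, pick $\lambda \in (\bar s G_R, 1)$ and set $h := \sum_{n=0}^\infty (\bar s M/\lambda)^n \mathbf{1}$. This Neumann series converges, $h \geq \mathbf{1} > 0$ on $\ZZ$, and the defining identity $(I - \bar s M/\lambda)h = \mathbf{1}$ gives $\bar s M h \leq \lambda h$ pointwise. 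Taking $V(a,z) := h(z) a + 1$ and using the savings bound $a_t - c^*(a_t, Z_t) \leq \bar s \, a_t$ from Proposition~\ref{pr:optpol_linbound},
\begin{align*}
  \EE[V(a_{t+1}, Z_{t+1}) \mid a_t = a, Z_t = z]
  &\leq \bar s \, a \, (Mh)(z) + \sum_{z'} P(z,z') h(z') \EE Y(z', \eta_1) + 1 \\
  &\leq \lambda h(z) a + K,
\end{align*}
where $K < \infty$ since $\EE Y(z', \eta_1) < \infty$ for each $z'$ (this follows from Assumption~\ref{a:y0} and the fact that, by irreducibility, the stationary distribution of $\{Z_t\}$ charges every state). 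Rewriting gives the one-step drift condition $\EE[V(a_{t+1}, Z_{t+1}) \mid a_t, Z_t] \leq \lambda V(a_t, Z_t) + K'$, and iterating yields $\sup_{t \geq 0} \EE_{(a,z)} V(a_t, Z_t) < \infty$ for every starting point.

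By Markov's inequality this uniform bound implies tightness of $\{Q^t((a,z), \cdot)\}_{t \geq 0}$ on $\SS$, and therefore tightness of the Cesaro averages $\mu_n := \frac{1}{n}\sum_{t=0}^{n-1} Q^t((a,z), \cdot)$. To upgrade tightness to invariance I would verify that $Q$ is Feller: continuity of $c^*$ (Theorem~\ref{t:ctra_T}), finiteness of $\ZZ$, and dominated convergence together show that $(a,z) \mapsto \int f \dee Q((a,z),\cdot)$ is continuous for every bounded continuous $f$ on $\SS$. Prokhorov's theorem then extracts a weakly convergent subsequence from $\{\mu_n\}$, and the standard Krylov--Bogolyubov identity identifies any such weak limit as a $Q$-invariant probability measure, proving the theorem. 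The main obstacle is the construction of the weight $h$: the naive choice $h \equiv \mathbf{1}$ would demand $\bar s \max_z \EE_z R < 1$, which is strictly stronger than Assumption~\ref{a:r0}, so the spectral Neumann-series construction is what converts the asymptotic growth bound $\bar s G_R < 1$ into a usable one-step drift inequality; the Feller verification is comparatively routine once continuity of $c^*$ is in hand.
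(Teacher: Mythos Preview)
Your proof is correct and follows the same high-level structure as the paper's: establish boundedness-in-probability of the state process, verify the Feller property of $Q$, and apply Krylov--Bogolyubov. The difference lies in how the uniform moment bound $\sup_t \EE_{a,z}\, a_t < \infty$ (equivalently, tightness) is obtained.

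The paper takes a \emph{multi-step} route with a flat weight: using Lemma~\ref{l:theta} it finds an integer $n$ such that $\gamma := \bar s^{\,n} \max_z \EE_z \prod_{t=1}^n R_t < 1$, then decomposes $t = kn + j$ and iterates the inequality $a_{t+1} \leq \bar s R_{t+1} a_t + Y_{t+1}$ in $n$-step blocks to bound $\EE_{a,z} a_t$ uniformly (Lemma~\ref{lm:bd_in_prob_at}). This avoids introducing any $z$-dependent weight but costs some bookkeeping in handling the cross terms $\bar s^{\,n-\ell}\EE_z R_n\cdots R_{\ell+1}Y_\ell$. You instead convert the spectral condition $\bar s \, r(L_R) < 1$ into a \emph{one-step} geometric drift by building the weight $h = \sum_{n\geq 0}(\bar s M/\lambda)^n \mathbf 1$, which satisfies $\bar s M h \leq \lambda h$ and yields $\EE[V(a_{t+1},Z_{t+1})\mid a_t,Z_t] \leq \lambda V(a_t,Z_t) + K'$ directly. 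Your remark that the flat choice $h\equiv\mathbf 1$ would demand the strictly stronger bound $\bar s \max_z \EE_z \hat R < 1$ is exactly right, and is precisely why the paper must pass to an $n$-step kernel instead. The two constructions are dual ways of extracting a contraction from a spectral-radius condition on a nonnegative matrix; yours gives a cleaner one-step Lyapunov inequality and is arguably more reusable downstream (e.g., for the geometric-ergodicity drift in Lemma~\ref{lm:geo_drift}), while the paper's keeps the Lyapunov function state-independent at the price of a coarser multi-step argument.
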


Stationarity of the form obtained in Theorem~\ref{t:sta_exist} is required to
establish existence of stationary recursive equilibria in 
heterogeneous agent models with idiosyncratic risk, such as
\cite{huggett1993risk} or \cite{aiyagari1994uninsured}.\footnote{For models
    with aggregate shocks, such as \cite{krusell1998income}, a fully specified recursive equilibrium requires
    that households take the wealth distribution as one component of the
    state in their savings problem, and that stationarity holds for the entire
    joint distribution (defined over a product space encompassing both the
    wealth distribution and the exogenous state process).  These
    problems fall outside the scope of
    Theorem~\ref{t:sta_exist}, since $\{Z_t\}$ is finite-valued.
    For a careful treatment of stationary recursive equilibrium
in Krusell--Smith type models, see \cite{cao2020recursive}.}

\subsection{Ergodicity}
While Assumption~\ref{a:r0} implies existence of a stationary distribution, it
is not in general sufficient for uniqueness or stability.  For these additional properties to hold, we must impose
sufficient mixing.
In doing so, we consider the following two cases:
\begin{enumerate}
    \item[(Y1)] The support of $\{Y_t\}$ is finite.
	\item[(Y2)] The process $\{Y_t\}$ admits a density representation.
\end{enumerate}

Condition (Y2) means that there exists a function $f$ from $\RR_+ \times \ZZ$
to $\RR_+$ such that
\begin{equation}
    \label{eq:yden}
	\PP \{ Y_t \in A \mid Z_t = z \} 
	= \int_A f (y \mid z) \diff y
\end{equation}
for all Borel sets $A \subset \RR_+$ and all $z$ in $\ZZ$.

\begin{assumption}
	\label{a:pos_dens}
    There exists a $\bar z$ in $\ZZ$ such that $P(\bar z, \bar z) > 0$. Moreover,
    with $y_\ell \geq 0$ defined as the greatest lower bound of the support of $\{Y_t\}$,  
    either
    \begin{itemize}
        \item (Y1) holds and $\PP\{Y_t = y_\ell \mid Z_t = \bar z\} > 0$, or
        \item (Y2) holds and there exists a $\delta > y_\ell$ such that 
            $f \left( \cdot \mid \bar z \right) > 0$ on $(y_\ell, \delta)$.
    \end{itemize}
\end{assumption}

Assumption~\ref{a:pos_dens} requires that there is a positive probability of
receiving low labor income at some relatively persistent state of the world
$\bar{z}$.  This is a mixing condition that enforces social mobility.  The
reason is that $\{Z_t\}$ is already assumed to be irreducible, so $\bar z$ is
eventually visited by each household.  For any such household, there is a
positive probability of low labor income over a long period. Wealth then
declines.  In other words, currently rich households or dynasties will not be
rich forever.  This guarantees sufficient social mobility between rich and
poor, generating ergodicity.

To state our uniqueness and stability results, let $Q^t$
be the $t$-step stochastic kernel, let $\| \cdot \|_{TV}$ be total
variation norm and let 
	$V(a,z) := a + m_V$,
where $m_V$ is a constant to be defined in the proof. For any integrable real-valued
function $h$ on $\SS$, 
let
\begin{equation*}
    \bar h(a, z) := h(a, z) - \EE h(a_t, Z_t)
\end{equation*}
and 
\begin{equation*}
    \gamma_h^2 
    := \EE \left[ \, \bar{h}^2 (a_0, Z_0) \right] + 
                2 \sum_{t=1}^{\infty} 
                 \EE \left[ \, 
                     \bar{h}(a_0, Z_0) \bar{h}(a_t, Z_t) 
                     \right],
\end{equation*}
where, here and in the theorem below, $\EE$ indicates expectation under
stationarity.

\begin{theorem}
    [Uniqueness, stability, ergodicity and mixing]
	\label{t:gs_gnl_ergo_LLN}
    If Assumptions~\ref{a:r0} and \ref{a:pos_dens} hold, then
	\begin{enumerate}
        \item the stationary distribution $\psi_\infty$ of $Q$ is unique and
            there exist constants $\lambda < 1$ and $M < \infty$ such that,
            \begin{equation*}
                \left\| Q^t \left( (a,z), \cdot \right) - \psi_\infty \right\|_{TV}
                \leq \lambda^t M V(a, z)
                \quad \text{for all } (a, z) \in \SS.
            \end{equation*}
		\item For all $(a,z) \in \SS$ and real-valued function $h$ on $\SS$
            such that $\EE |h(a_t, Z_t)| < \infty$, 
		\begin{equation*}
			\PP_{a,z}
			\left\{
			\lim_{T \to \infty} 
                \frac{1}{T} 
                \sum_{t=1}^T h(a_t, Z_t) = \EE h(a_t, Z_t)  
			\right\}
			= 1.
		\end{equation*}
		\item $Q$ is $V$-geometrically mixing. Moreover, if $\gamma_h^2  > 0$ and $h^2 / V$ is bounded, 
		\begin{equation*}
            \frac{1}{\sqrt{T \gamma_h^2}  }
            \sum_{t=1}^{T} \bar h(a_t, Z_t) 
			 \tod \, N(0,1)
			\quad \text{as } \, T \to \infty.
		\end{equation*}
	\end{enumerate} 
\end{theorem}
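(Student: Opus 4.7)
The plan is to verify the two standard inputs of the Meyn--Tweedie theory of $V$-uniform ergodicity for Markov chains on general state spaces: a one-step geometric drift condition for a suitable Lyapunov function, and the petite-set property for the sublevel sets of that Lyapunov function. Granting both, parts (1)--(3) follow from standard results on geometrically ergodic chains; in particular, (2) is the Birkhoff ergodic theorem for positive Harris chains, while (3) combines the standard fact that $V$-uniform ergodicity implies $V$-geometric mixing with the Markov-chain central limit theorem, whose moment condition is precisely that $h^2/V$ is bounded.

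For the drift condition, Assumption~\ref{a:r0} supplies $\bar s$ with $\bar s\, G_R < 1$ and, through Proposition~\ref{pr:optpol_linbound}, the savings bound $a - c^*(a,z) \leq \bar s a$. Because $\bar s\, \EE_z \hat R$ may exceed $1$ at individual $z$, I work not with $V(a,z) = a + m_V$ directly but with the $z$-weighted Lyapunov function $\tilde V(a,z) := a\, v(z) + m$, where $v:\ZZ\to(0,\infty)$ is the positive Perron--Frobenius eigenvector of the nonnegative matrix $K(z,z') := P(z,z')\,\EE R(z',\zeta)$. Independence of $\{\zeta_t\}$ gives $(K^n \mathbbm 1)(z) = \EE_z R_1 \cdots R_n$, and Gelfand's formula identifies $\rho(K) = G_R$, so $Kv = G_R\, v$. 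Finiteness of $\EE Y$ together with finiteness of $\ZZ$ then yields $(Q\tilde V)(a,z) \leq (\bar s G_R)\, a\, v(z) + c_0$ for some constant $c_0$, which rearranges into $Q\tilde V \leq \lambda'\tilde V + b\,\mathbbm 1_C$ for any $\lambda' \in (\bar s G_R, 1)$ and the sublevel set $C := \{\tilde V \leq M_0\}$ with $M_0$ large enough. Because $v$ is bounded above and below by positive constants on the finite set $\ZZ$, $\tilde V$ is equivalent to $V$, so the statement of (1) in terms of $V$ follows after rescaling the constant $M$.

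For the petite-set property, $C$ is contained in $[0, a_{\max}] \times \ZZ$ for some finite $a_{\max}$. Combining irreducibility of $\{Z_t\}$ with $P(\bar z, \bar z) > 0$ from Assumption~\ref{a:pos_dens} gives integers $j, k$ such that from every $(a,z) \in C$ the chain reaches $\bar z$ within $j$ steps with probability at least $p_0 > 0$ and then remains at $\bar z$ for a further $k$ steps. During this sojourn the inequality $a_{t+1} \leq \bar s\, R(\bar z, \zeta_{t+1})\, a_t + Y(\bar z, \eta_{t+1})$ contracts the wealth distribution into a region depending only on $\bar z$, not on the starting $(a,z)$, provided $k$ is chosen so large that $(\bar s G_R)^k$ is arbitrarily small. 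A final transition then exploits Assumption~\ref{a:pos_dens}: under (Y2), the positive density of $\hat Y$ on $(y_\ell, \delta)$ convolves with the contracted wealth distribution to produce a uniform positive transition density on a common interval; under (Y1), the atom at $y_\ell$ produces a uniform mass at a common value once the contraction has erased dependence on the starting wealth. In either case I obtain a minorization $Q^{j+k+1}((a,z), \cdot) \geq \epsilon\, \nu(\cdot)$ uniform over $(a,z) \in C$ for a common probability measure $\nu$, so $C$ is petite.

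The principal obstacle is this last step: the petite-set argument must simultaneously cope with the randomness of both $R$ and $Y$, handle the discrete and density cases (Y1) and (Y2) with parallel but distinct constructions, and deliver a minorizing measure $\nu$ that is genuinely uniform over the entire sublevel set $C$. Once the spectral identification $\rho(K) = G_R$ is observed and the $z$-weighted Lyapunov function is chosen, the drift step is structurally clean, and the passage from drift plus petite set to the three conclusions of the theorem reduces to direct application of standard Meyn--Tweedie results on geometrically ergodic Markov chains.
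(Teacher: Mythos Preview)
Your drift argument is a legitimate alternative to the paper's. The paper establishes an $n$-step drift for the simple function $V(a,z)=a+m_V$, using Lemma~\ref{l:theta} to choose $n$ so that $\bar s^n\max_z\EE_z\prod_{t=1}^n R_t<1$; you instead obtain a one-step drift by weighting with the Perron eigenvector of $K(z,z')=P(z,z')\,\EE R(z',\zeta)$, which is cleaner and avoids passing through the $n$-skeleton. (A small caveat: positivity of the Perron eigenvector requires $K$ to be irreducible, which fails if $\EE R(z',\zeta)=0$ for some $z'$; the paper's $n$-step argument sidesteps this.)

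The petite-set argument, however, has a real gap. Your claim that a long sojourn at $\bar z$ ``contracts the wealth distribution into a region depending only on $\bar z$'' conflates contraction in expectation with erasure of dependence on the starting point. The bound $a_{t+1}\le \bar s\,R(\bar z,\zeta_{t+1})\,a_t+Y(\bar z,\eta_{t+1})$ makes $\EE a_{t+k}$ small, but the realized $a_{t+k}$ still carries the factor $\bar s^k R_{t+k}\cdots R_{t+1}\,a_t$, whose distribution depends on $a_t$. Even after conditioning on this term being small, the ``final transition'' does not deliver a uniform minorization: the next wealth is $\hat R\,(a-c^*(a,\bar z))+\hat Y$ with $a-c^*(a,\bar z)>0$ in general, so under (Y1) the atom sits at $\hat R\,(a-c^*(a,\bar z))+y_\ell$, whose location varies with $a$, and under (Y2) the density is shifted by an $a$-dependent amount with no continuity or boundedness assumed on $f$ or $R$. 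There is no common $\nu$ unless savings is exactly zero.

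This is precisely the mechanism the paper exploits and you are missing. The paper proves (Lemmas~\ref{lm:bind_fntime} and \ref{lm:bind_fntime_2}) that the borrowing constraint \emph{binds} in finite time with positive probability: if it never bound along a path with $Z_i\equiv\bar z$, the Euler equality would give $(u'\circ c^*)(a,\bar z)=\EE_{a,\bar z}\prod_{i=1}^t\beta_iR_i\,(u'\circ c^*)(a_t,Z_t)$, and the right side tends to zero by $G_{\beta R}<1$, contradicting $u'>0$. Once $c_t=a_t$, savings vanish and the next wealth is literally $\hat Y$, whose law is independent of current wealth; this, combined with a monotonicity lemma (Lemma~\ref{lm:mono_in_a}) to push the binding probability down from $a=d$ to all $a\le d$, yields the uniform minorization over $[0,d]\times\ZZ$. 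Your proposal needs this Euler-equation step (or an equivalent device forcing exact zero savings) to close the argument.
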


Part 1 of Theorem~\ref{t:gs_gnl_ergo_LLN} states that the stationary
distribution $\psi_\infty$ is unique and asymptotically attracting at a geometric
rate.  Part 2 states that the state process is ergodic, and hence long-run
sample moments for individual households coincide with cross-sectional
moments.  The notion of mixing discussed in Part~3 is defined in the appendix.
It states that social mobility holds asymptotically and mixing occurs at a
geometric rate, although the rate may be arbitrarily slow.
This mixing is enough to provide a Central Limit Theorem for the state
process, which is the second claim in Part~3.

\subsection{Tail Behavior}

Having established the stationarity and ergodicity of wealth, we now study the tail behavior of the wealth distribution. We show that the wealth distribution is either bounded or (unbounded and) heavy-tailed under mild conditions. To prove this result we introduce the following assumption.

\begin{assumption}
	\label{a:wealth_growth}
    The assumptions of Proposition \ref{pr:optpol_concave} are satisfied, so
    the optimal policy $a \mapsto c^*(a,z)$ is concave and asymptotically
    linear: $\lim_{a \to \infty} c^*(a,z)/a = \alpha(z)\in [0,1]$.
    Furthermore, there exists $\bar z \in \ZZ$ such that $P(\bar z, \bar z) >
    0$ and
    \begin{equation}
        \label{eq:wealth_growth}
        \PP_{\bar z} \{ R(\bar z,\hat{\zeta})(1-\alpha(\bar z)) > 1 \} > 0.
    \end{equation}
\end{assumption}

\begin{remark}
    Condition \eqref{eq:wealth_growth} implies that wealth grows with nonzero
    probability when it is large. Indeed, using the law of motion
    \eqref{eq:dyn_sys} and noting that $Y\geq 0$, if $Z_t=Z_{t+1}=\bar z$, then by
    \eqref{eq:wealth_growth} we have
    \begin{equation*}
    \frac{a_{t+1}}{a_t} \geq R \left( \bar z, \zeta_{t+1} \right) 
                     \left[ 1 - c^*(a_t, \bar z )/a_t \right] > 1
    \end{equation*}
    with positive probability if $a_t$ is large enough.
\end{remark}

To state our result on tail behavior, we introduce the following notation. For
any nonnegative function $A(z,\hat{z},\hat{\zeta})$, define the $\ZZ \times
\ZZ$ matrix-valued function $M_A$ by
\begin{equation}\label{eq:MGF}
(M_A(s))(z,\hat{z})=\EE_{z,\hat{z}} A(z,\hat{z},\hat{\zeta})^s.
\end{equation}
Elements of $M_A(s)$ are conditional moment generating functions of $\log A$.
In the statement below, $\odot$ denotes the Hadamard (entry-wise) product, and
$r(\cdot)$ returns the spectral radius of a matrix.
Also $a_\infty$ is a random variable with distribution $\psi_\infty(\cdot , \ZZ)$.

\begin{theorem}[Tail behavior]\label{t:heavy_tail}
    Let Assumptions~\ref{a:r0}--\ref{a:wealth_growth} hold and define
    \begin{subequations}
        \begin{align}
        G(z,\hat{z},\hat{\zeta})&=R(\hat{z},\hat{\zeta})(1-\alpha(z)), \label{eq:defG}\\
        A(z,\hat{z},\hat{\zeta})&=G(z,\hat{z},\hat{\zeta}) \1 \{
        G(z,\hat{z},\hat{\zeta}) > 1 \}, \text{ and } \label{eq:defA}\\
        \lambda(s)&=r(P \odot M_A(s)). \label{eq:lambda}
        \end{align}
    \end{subequations}
    Then $\lambda$ is convex in $s \geq 0$.
    Assume that there exists $s>0$ in the interior of the domain of $\lambda$ such that $1<\lambda(s)<\infty$ and let
    \begin{equation}
        \kappa
        :=\inf\{ s>0 \, | \, \lambda(s)>1 \}.\label{eq:kappa}
    \end{equation}
    If $a_\infty$ has unbounded support, then it is
    heavy-tailed. In particular, for any $\epsilon>0$,
    \begin{equation}\label{eq:heavy_tail}
    \liminf_{a\to\infty} a^{\kappa+\epsilon}\PP \{ a_\infty \geq a \}>0.
    \end{equation}
\end{theorem}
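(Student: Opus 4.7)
The proof has three main movements: convexity of $\lambda$, a multiplicative comparison of wealth dynamics at high levels based on the asymptotic linearity of $c^*$, and an Esscher-tilt extraction of the Pareto lower bound from the spectral data of $P\odot M_{A^\epsilon}(s)$.

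For \emph{convexity}, I would apply Kingman's log-convexity theorem for Perron eigenvalues. Each entry $(P\odot M_A(s))(z,\hat z) = P(z,\hat z)\,\EE_{z,\hat z}[A(z,\hat z,\hat\zeta)^s]$ is the product of an $s$-independent weight and a conditional Laplace transform (in $s$) of $\log A$, hence log-convex in $s$. Kingman's theorem then gives that $s \mapsto \log \lambda(s)$ is convex, and therefore $\lambda$ is convex.

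For the \emph{multiplicative comparison}, Proposition~\ref{pr:optpol_concave} together with $c^*(0,z)=0$ forces $a\mapsto c^*(a,z)/a$ to be non-increasing and to converge down to $\alpha(z)$. So for any $\epsilon>0$ there exists $A^*=A^*(\epsilon)$ with $c^*(a,z) \leq (\alpha(z)+\epsilon)a$ for all $a \geq A^*$, yielding
\begin{equation*}
a_{t+1} \geq R(Z_{t+1},\zeta_{t+1})(1-\alpha(Z_t)-\epsilon)\,a_t =: G^\epsilon_{t+1}\, a_t \quad \text{whenever } a_t \geq A^*.
\end{equation*}
Set $A^\epsilon_t = G^\epsilon_t\,\1\{G^\epsilon_t > 1\}$. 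On the event $E^\epsilon_n = \{G^\epsilon_t > 1, \ t = 1,\dots, n\}$ the wealth path is non-decreasing, so starting from $a_0 \geq A^*$ we get $a_n \geq a_0 \prod_{t=1}^n A^\epsilon_t$.

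For the \emph{tail lower bound}, stationarity and the comparison above yield, for every $n$,
\begin{equation*}
\PP\{a_\infty \geq x\} \geq \EE_{\psi_\infty}\!\Bigl[\1\{a_0 > A^*\}\,\PP_{Z_0}\!\bigl\{\textstyle\prod_{t=1}^n A^\epsilon_t \geq x/a_0\bigr\}\Bigr],
\end{equation*}
and the indicator has positive expectation because $a_\infty$ has unbounded support. The conditional moment generating function of $S_n := \sum_{t=1}^n \log A^\epsilon_t$ is encoded by $P \odot M_{A^\epsilon}(s)$, whose spectral radius $\lambda^\epsilon(s)$ converges to $\lambda(s)$ as $\epsilon \downarrow 0$ by continuity of $r(\cdot)$ in the entries. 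Fix $s \in (\kappa, \kappa + \epsilon/2)$ and then pick $\epsilon$ small enough that $\lambda^\epsilon(s) > 1$. An Esscher tilt driven by the Perron right eigenvector of $P \odot M_{A^\epsilon}(s)$ then produces a Markov-modulated large-deviation lower bound $\PP_z\{S_n \geq \log(x/A^*)\} \geq c\,x^{-s}$ for $n$ chosen proportional to $\log x$. Combined with the previous display, this delivers $\PP\{a_\infty \geq x\} \geq c'\,x^{-s}$ for all large $x$, giving the claimed $\liminf$ bound.

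The decisive obstacle is the final step: converting the spectral information about $\lambda(s)$ into a genuine $\liminf$-type tail bound rather than a mere moment blow-up. A pure moment argument (iterating the Perron--Frobenius recursion to show $\EE a_\infty^s = \infty$ for every $s > \kappa$) is not enough, since a monotone CDF can satisfy $\EE X^s = \infty$ while $\liminf_{x\to\infty} x^s \bar F(x) = 0$ through oscillations. The Esscher tilt must be implemented with care, handling the atom at $A^\epsilon = 0$ (which occurs when $G^\epsilon \leq 1$), possible reducibility of $P \odot M_{A^\epsilon}(s)$, and the $\epsilon \downarrow 0$ passage. Assumption~\ref{a:wealth_growth}---and in particular the conjunction $P(\bar z, \bar z) > 0$ together with $\PP_{\bar z}\{R(\bar z, \hat\zeta)(1-\alpha(\bar z)) > 1\} > 0$---is precisely what guarantees that the tilted walk is nondegenerate and that the relevant Perron eigenvector has a strictly positive coordinate.
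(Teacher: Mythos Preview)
Your structural outline---convexity of $\lambda$ via log-convexity of the entries, a multiplicative lower bound $a_{t+1}\geq G^\epsilon_{t+1}a_t$ valid above a threshold, and then a tail extraction from spectral data---matches the paper exactly. Your additive parametrization $1-\alpha(z)-\epsilon$ is equivalent to the paper's multiplicative one $k(1-\alpha(z))$ with $k<1$; both exploit that $c^*(a,z)/a$ decreases to $\alpha(z)$. The comparison process $S_{t+1}=\tilde A_{t+1}S_t$ and the use of stationarity to condition on $\{a_0\geq A^*\}$ also appear in the paper.

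The divergence is entirely in the last step, and here the paper takes a route that sidesteps the obstacle you correctly flag. Rather than attempting a direct Esscher tilt with $n\propto\log x$ (which, as you note, must cope with the atom of $A^\epsilon$ at zero, potential reducibility of $P\odot M_{A^\epsilon}(s)$, and a delicate $\epsilon\downarrow 0$ limit), the paper introduces a \emph{geometric killing time}: it lets $T$ be geometric with parameter $p$, independent of everything, and defines $\tilde\lambda(s)=(1-p)\,r(P\odot M_{\tilde A}(s))$. Then Theorem~3.4 of \cite{BeareToda-dPL}---a black-box result on the tail of Markov-modulated geometrically stopped random walks---delivers $\liminf_{a\to\infty}a^{\tilde\kappa}\PP_{a_0,z_0}\{S_T>a\}>0$ directly, where $\tilde\kappa$ solves $\tilde\lambda(\tilde\kappa)=1$. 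Continuity in $(k,p)$ then pushes $\tilde\kappa$ into $(\kappa,\kappa+\epsilon)$. Finally, under stationarity $a_T\stackrel{d}{=}a_\infty$, so $\PP\{a_\infty>a\}\geq \PP\{a_0\geq\bar a\}\,\PP\{S_T>a\mid a_0\geq\bar a\}$ gives the claim.

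So the gap in your proposal is exactly the one you identify: the Esscher/large-deviation lower bound is asserted but not proved, and making it rigorous in the Markov-modulated, sub-stochastic setting is essentially the content of the Beare--Toda theorem the paper invokes. The geometric-stopping device is what makes that citation applicable and is the one idea you are missing; with it, the ``decisive obstacle'' dissolves into a reference.
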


\begin{remark}\label{rm:lambda}
The assumption $1 < \lambda(s) < \infty$ for some $s>0$ is weak. Because the
$(\bar{z},\bar{z})$-th element of $P \odot M_A(s)$ is
\begin{equation*}
P(\bar{z},\bar{z})\EE_{\bar{z},\bar{z}}G(\bar z,\bar z,\hat{\zeta})^s \1 \{ G(\bar z, \bar z,\hat{\zeta}) > 1 \},
\end{equation*}
by the definition of $G$ in \eqref{eq:defG} and condition \eqref{eq:wealth_growth}, we always have $\lambda(s)\to\infty$ as $s\to\infty$. Hence there exists $s>0$ such that $\lambda(s)\in (1,\infty)$ if, for example, $\hat{\zeta}$ has a compact support.
\end{remark}

Condition \eqref{eq:heavy_tail} implies that for any $\epsilon>0$, there exists a constant $C(\epsilon)>0$ such that
\begin{equation*}
\PP \{ a_\infty \geq a \} \geq C(\epsilon) a^{-\kappa-\epsilon}
\end{equation*}
for large enough $a$, so the upper tail of the wealth distribution is at least Pareto.

\begin{remark}
\cite{Toda2019JME} constructs an example of a \cite{huggett1993risk} economy with Pareto-tailed wealth distribution when discount factors are random. Theorem \ref{t:heavy_tail} is significantly more general as we allow for stochastic returns and income. \cite{stachurski2019impossibility} prove that with constant discount factor, constant asset return, and light-tailed income, the wealth distribution is always light-tailed. Theorem \ref{t:heavy_tail} shows that sufficient heterogeneity in discount factor or returns generates heavy tails.
\end{remark}

\begin{example}
    \label{eg:bh3}
    The CRRA-{\sc iid} setting of \cite{benhabib2015wealth} satisfies the
    assumptions of Theorem \ref{t:heavy_tail}. When utility is CRRA, by
    Proposition~5 of \cite{benhabib2015wealth}, condition
    \eqref{eq:wealth_growth} holds if $R(\bar z,\hat{\zeta}) > 1/\bar{s}$ with
    positive probability, where $\bar{s}$ is given in Example \ref{eg:CRRA}. In
    the {\sc iid} case, this condition reduces to $\PP \{(\beta \EE
    R_t^{1-\gamma})^{1/\gamma} R_t > 1 \} > 0$, which holds under the conditions
    of \cite{benhabib2015wealth}.\footnote{\cite{benhabib2015wealth} assume that
    $\PP \{ \beta R_t > 1 \} > 0$,
    so it suffices to show that $(\beta
    \EE R_t^{1-\gamma})^{1/\gamma}\ge \beta$ or, equivalently,
    $\EE (\beta R_t)^{1-\gamma} \ge 1$.  By Jensen's inequality and their
    restriction $\gamma \geq 1$, the last bound is true whenever
    $(\EE \beta R_t)^{1-\gamma} \ge 1$.  But this must hold because, under their
    conditions, we have $\beta \EE R_t < 1$, as shown in Example~\ref{eg:bh1}.}
     Thus, Assumption~\ref{a:wealth_growth} holds. The existence
    of $s>0$ with $\lambda(s)\in (1,\infty)$ follows from Remark~\ref{rm:lambda}
    and the assumption that $R_t$ has a compact support.  
\end{example}

\section{Testing the Growth Conditions}

\label{s:gc}

The three key conditions in the paper are the restrictions on the growth rates
$G_\beta$, $G_{\beta R}$ and $G_R$, with the first two required for optimality
and the last for stationarity (see Assumptions~\ref{a:b0}, \ref{a:rb0}
and \ref{a:r0} respectively).  In this section we explore the restrictions
implied by these conditions.  We begin with the following result, which yields
a straightforward method for computing these growth rates.

\begin{lemma}
    [Long-run growth rates and spectral radii]
    \label{l:besr}
    Let $\phi_t = \phi(Z_t, \xi_t)$, where $\phi$ is a nonnegative
    measurable function and $\{\xi_t\}$ is an {\sc iid} sequence with marginal
    distribution $\pi$.  In this setting we have
    \begin{equation}
        \label{eq:deflb}
        G_\phi = r(L_\phi),
        \quad \text{where} \quad
        G_\phi :=
        \lim_{n \to \infty} 
            \left(\EE \prod_{t=1}^n \phi_t \right)^{1/n}
    \end{equation}
    and $r(L_\phi)$ is the spectral radius of the matrix defined by 
    \begin{equation}
        \label{eq:lfunc}
        L_\phi(z, \hat z) 
        = P(z, \hat z) \int \phi(\hat z, \hat \xi) \pi (\diff \hat \xi). 
        %= P(z, \hat z) \EE_{\hat z} \varphi.
    \end{equation}
\end{lemma}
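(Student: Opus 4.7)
The strategy is to reduce the expectation that defines $G_\phi$ to a matrix-vector expression involving powers of $L_\phi$, and then invoke Gelfand's formula together with equivalence of norms on the finite-dimensional matrix space.

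First, I would define $g(z) := \int \phi(z, \xi) \, \pi(\diff \xi)$, which is finite for each $z \in \ZZ$ because $\phi$ is nonnegative (allowing $+\infty$ is not a problem since everything below is monotone). Using that $\{\xi_t\}$ is iid and independent of $\{Z_t\}$ (as is standard throughout the paper), I would condition on the Markov path and obtain
\begin{equation*}
    \EE \prod_{t=1}^n \phi_t
    = \EE \!\left[\prod_{t=1}^n \EE[\phi(Z_t,\xi_t)\mid Z_t] \right]
    = \EE \prod_{t=1}^n g(Z_t).
\end{equation*}
Letting $\mu$ denote the (unique, strictly positive) stationary distribution of $\{Z_t\}$ under which the unsubscripted expectation is taken, I would then expand the right-hand side as a sum over sample paths:
\begin{equation*}
    \EE \prod_{t=1}^n g(Z_t)
    = \sum_{z_0,\dots,z_n} \mu(z_0) \prod_{t=0}^{n-1} P(z_t,z_{t+1})\, g(z_{t+1})
    = \mu^\top L_\phi^{\,n}\, \mathbf{1},
\end{equation*}
where $L_\phi(z,\hat z) = P(z,\hat z)\, g(\hat z)$ matches \eqref{eq:lfunc} and $\mathbf{1}$ is the all-ones vector. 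So the matrix $L_\phi$ appears naturally as the one-step transfer operator for this multiplicative functional.

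The remaining task is to show $(\mu^\top L_\phi^{\,n} \mathbf{1})^{1/n} \to r(L_\phi)$. Since $\mathbf{1} > 0$ entrywise and $\mu > 0$ entrywise (irreducibility of $P$ on the finite state space $\ZZ$ gives a strictly positive stationary distribution), and since $L_\phi \geq 0$, I would sandwich
\begin{equation*}
    \min_{z}\mu(z) \cdot \|L_\phi^{\,n}\|_{\mathrm{sum}}
    \;\leq\; \mu^\top L_\phi^{\,n} \mathbf{1}
    \;\leq\; \max_{z}\mu(z) \cdot \|L_\phi^{\,n}\|_{\mathrm{sum}},
\end{equation*}
where $\|A\|_{\mathrm{sum}} := \sum_{i,j}|A(i,j)|$. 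The outer constants are strictly positive and finite and disappear after taking $n$th roots. Because all norms on the finite-dimensional space of $|\ZZ|\!\times\!|\ZZ|$ matrices are equivalent and Gelfand's formula gives $\|L_\phi^{\,n}\|_{\mathrm{op}}^{1/n} \to r(L_\phi)$, equivalence transfers this limit to the sum norm. Combining with the sandwich yields the desired identity $G_\phi = r(L_\phi)$.

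The main obstacle, such as it is, is the lower bound in the sandwich: one must use strict positivity of $\mu$ to turn the inner product into an honest multiple of a matrix norm, and this is precisely where finite-state irreducibility of $P$ is used. A minor degenerate case to mention is $r(L_\phi) = 0$, where $L_\phi$ is nilpotent so both sides vanish for $n$ large enough and the claim is trivially true; otherwise all quantities are eventually strictly positive and the $n$th-root analysis is justified.
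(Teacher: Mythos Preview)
Your proof is correct and follows essentially the same route as the paper: reduce $\EE\prod_{t=1}^n\phi_t$ to a matrix expression $\mu^\top L_\phi^{\,n}\mathbf{1}$ and then extract the spectral radius via a Gelfand-type limit. The only difference is cosmetic: the paper invokes a result of Krasnosel'skii stating directly that $r(L_\phi)=\lim_n\|L_\phi^{\,n}h\|^{1/n}$ for any norm and any strictly positive vector $h$, whereas you obtain the same conclusion by the elementary sandwich $\min_z\mu(z)\cdot\|L_\phi^{\,n}\|_{\mathrm{sum}}\le \mu^\top L_\phi^{\,n}\mathbf{1}\le \max_z\mu(z)\cdot\|L_\phi^{\,n}\|_{\mathrm{sum}}$ together with norm equivalence and the standard Gelfand formula. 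Your version is slightly more self-contained; the paper's is slightly shorter by outsourcing that step to a citation.
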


The matrix $L_\phi$ is expressed as a function on $\ZZ \times \ZZ$ in
\eqref{eq:lfunc} but can be represented in traditional matrix notation by
enumerating $\ZZ$.\footnote{Specifically, if  
$\ZZ := \{z_1, \dots, z_N \}$, then
	$L_\varphi = P D_\varphi$
where $P$ is, as before, the transition matrix for the exogenous state, and
$D_\varphi := \diag \left( \EE_{z_1} \varphi, \dots, \EE_{z_N} \varphi
\right)$ when $\EE_{z} \varphi := \EE_z \varphi (z, \hat \xi) $.
In what follows, $D_\beta$, $D_R$ and $D_{\beta R}$ are defined analogously to
$D_\phi$.}

What factors determine
the long-run average growth rates embedded in our assumptions, such as $G_\beta$ or $G_R$?
Lemma~\ref{l:besr} tells us how to compute these values for a given
specification of dynamics, but how should we understand them intuitively and what
factors determine their size?  To address these questions, let us 
consider an AR(1) discount factor process, which has been adopted in
several recent quantitative studies (see, e.g.,
\cite{hubmer2018comprehensive} or \cite{hills2018fiscal}).
In particular, suppose that the state
process follows a discretized version of
\begin{equation}
    \label{eq:zsh}
    Z_{t+1} = (1 - \rho) \mu + \rho Z_t + (1 - \rho^2)^{1/2} \sigma \upsilon_{t+1}, 
    \quad \{\upsilon_t\} \iidsim N(0, 1),
\end{equation}
and $\beta_t = Z_t$.  
(The discretization implies that $\beta_t$ is always positive.)
To simplify interpretation, the process \eqref{eq:zsh} is structured so that the stationary
distribution of $\{Z_t\}$ is $N(\mu, \sigma^2)$.  We use \cite{rouwenhorst1995}'s method to discretize $\{Z_t\}$ 
and then calculate $G_\beta$ using Lemma~\ref{l:besr}, studying
how $G_\beta$ is affected by the parameters in \eqref{eq:zsh}.

Since $\beta_t = Z_t$ for all $t$, the structure of \eqref{eq:zsh} implies
that $\mu$ is the long-run unconditional mean of $\{\beta_t\}$.  It can
therefore be set to standard calibrated value for the discount factor, such as
$0.99$ from \cite{krusell1998income}.  What we wish to understand is how the
remaining parameters $\rho$ and $\sigma$ affect the value of $G_\beta$.  While
no closed form expression is available in this case, Figure~\ref{fig:G_beta}
sheds some light by providing a contour plot of $G_\beta$ over a set of
$(\rho, \sigma)$ pairs.  The figure shows that $G_\beta$ grows with both the
persistence term $\rho$ and volatility term $\sigma$.  In particular, the
condition $G_\beta < 1$ fails when the persistence and volatility of the
discount factor process are sufficiently high.  
This is because
$G_\beta$ is the limit of $\left(\EE \prod_{t=1}^n \beta_t \right)^{1/n}$ and,
for positive random variables, sequence of large outcomes have a strong
compounding effect on their product.  High volatility and high persistence
reinforce this effect.  

%We can gain further intuition on these effects by examining the value of 
%$G_\beta = \lim_{n\to\infty} \left(\EE \prod_{t=1}^n \beta_t \right)^{1/n}$ prior to
%discretizing the AR(1) process $\{Z_t\}$.
%This is in some sense outside the scope of the present paper, since we focus
%on a finite exogenous state process, but is nonetheless valuable because the
%non-discretized process has an analytical solution for $G_\beta$ that closely matches outcomes for the
%discretized version.

\begin{figure} 
    \includegraphics[width=0.8\linewidth]{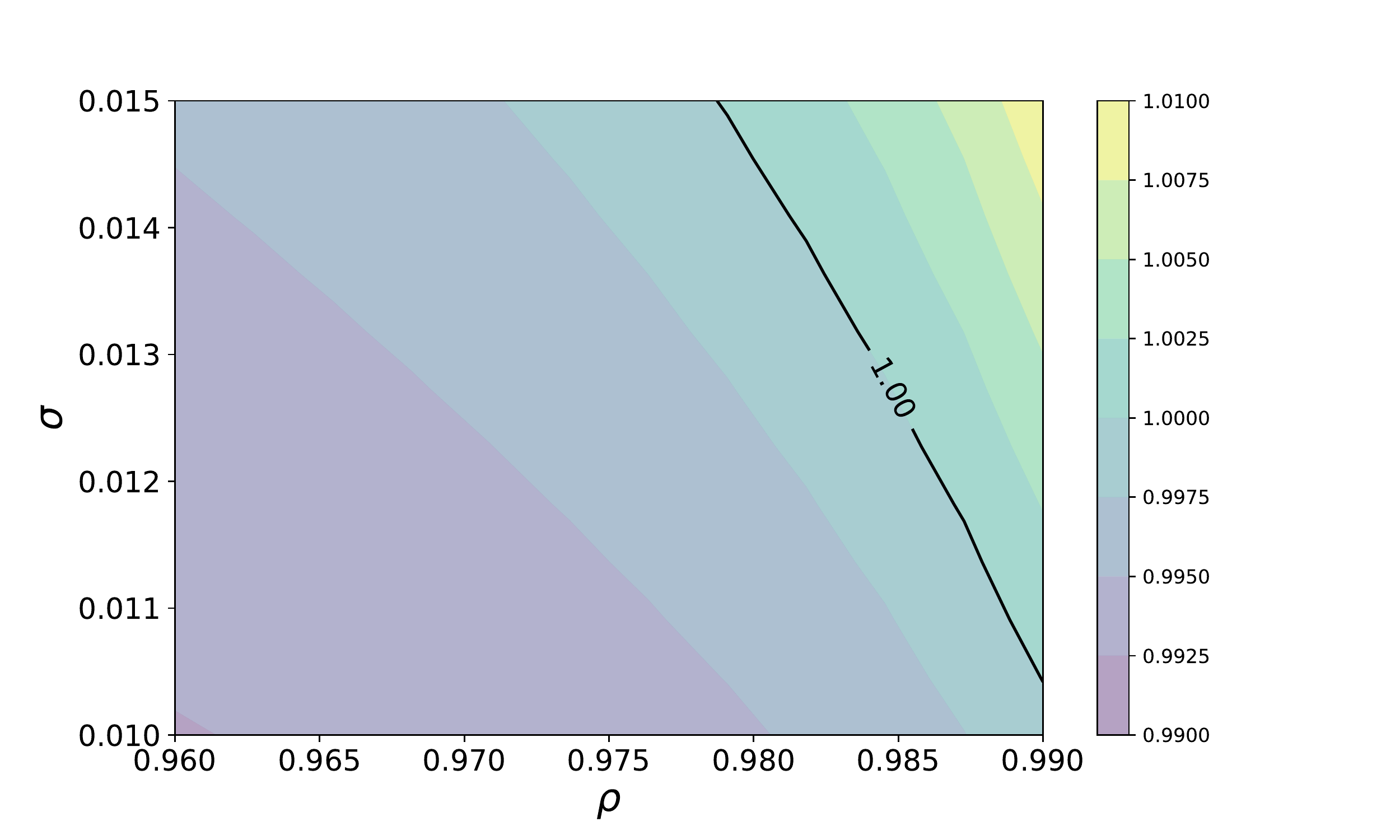}
    \caption{\label{fig:G_beta} Contour plot of $G_\beta$ under AR(1) discounting}
\end{figure}

This discussion has focused on $G_\beta$ but similar intuition applies to both
$G_R$ and $G_{\beta R}$.  If $\beta_t$ and $R_t$ are both increasing functions
of the state process, then these asymptotic growth rates also increase with 
greater persistence and volatility in the state process, as well as higher
unconditional mean.  The next section further illustrates these points.

\section{Application: Stochastic Volatility and Mean Persistence}

\label{s:app}

We showed in Examples~\ref{eg:bh1}, \ref{eg:CRRA} and \ref{eg:bh2} that, in the setting of 
\cite{benhabib2015wealth}, where the discount factor is constant and returns
and labor income are {\sc iid}, Assumptions~\ref{a:b0}--\ref{a:y0} and
Assumption~\ref{a:r0} are all satisfied. Hence,
by Theorems~\ref{t:ctra_T} and \ref{t:sta_exist}, the household optimization 
problem has a unique optimal policy and the wealth process under this policy
has a stationary solution.  If, in addition, the support of $Y_t$ is finite or
$Y_t$ has a positive density, say, then the conditions of
Theorem~\ref{t:gs_gnl_ergo_LLN} also hold and the stationary solution is
ergodic, geometrically mixing and its time series averages are asymptotically
normal.  

Let us now bring the model closer to the data by relaxing the {\sc iid} restrictions on financial and non-financial
returns, introducing both mean persistence and time varying
volatility in returns on assets.\footnote{The importance of these features for wealth dynamics
was highlighted in \cite{fagereng2016heterogeneityNBER}.}
In particular, we set
\begin{equation}
    \label{eq:app_Rt}
    \log R_t = \mu_t + \sigma_t \zeta_t,
\end{equation}
where $\{ \zeta_t\}$ is {\sc iid} and standard normal and $\{\mu_t\}$ and
$\{\sigma_t\}$ are finite-state Markov chains, discretized from 
\begin{equation*}
    \mu_t = (1 - \rho_\mu) \bar{\mu} + \rho_\mu \mu_{t-1} +\delta_\mu \upsilon_t^\mu 
    \quad \text{and} \quad
    \log \sigma_t = (1- \rho_\sigma) \bar{\sigma} 
        + \rho_\sigma \log \sigma_{t-1} + \delta_\sigma \upsilon_t^\sigma .
\end{equation*}
Innovations are {\sc iid} and standard normal.
Using the data in \cite{fagereng2016heterogeneityAERPP} on
Norwegian financial returns over 1993--2003,
we estimate these AR(1) models to obtain
$\bar{\mu} = 0.0281$, $\rho_\mu = 0.5722$, $\delta_\mu = 0.0067$, 
$\bar{\sigma}=-3.2556$, $\rho_\sigma=0.2895$ and $\delta_\sigma=0.1896$. 
Based on this calibration, the stationary mean and standard deviation of  
$\{R_t\}$ are around $1.03$ and $4\%$, respectively.

To distinguish the effects of stochastic volatility and mean persistence,
we consider two subsidiary models. The first reduces 
$\{ \mu_t\}$ to its stationary mean $\bar{\mu}$, while the second reduces 
$\{ \sigma_t \}$ to its stationary mean 
$\tilde{\sigma} := \me^{\bar{\sigma} + \delta_\sigma^2/ 2(1 - \rho_\sigma^2)}$. 
In summary, 
\begin{align*}
    &\log R_t = \bar{\mu} + \sigma_t \zeta_t  
    \qquad (\text{Model \rom{1}})    \\
    &\log R_t = \mu_t + \tilde{\sigma} \zeta_t
    \qquad (\text{Model \rom{2}})
\end{align*}
We set $\beta=0.95$ and $\gamma=1.5$. To test the stability properties of 
Model~\rom{1}, we explore a neighborhood of the calibrated
$(\rho_\sigma, \delta_\sigma)$ values, while in Model~\rom{2}, we 
do likewise for $(\rho_\mu, \delta_\mu)$ pairs. In each scenario, other
parameters are fixed to the benchmark. The results are shown in
Figures~\ref{fig:m1} and \ref{fig:m2}. 

In part~(a) of each figure, we see that $G_{\beta R}$ is increasing in the
persistence and volatility parameters of the state process.  The intuition
behind this feature was explained in Section~\ref{s:gc} for the case of
$G_\beta$ and is similar here. (Note that $G_{\beta R} = \beta G_R$ in the
present case, since $\beta_t \equiv \beta$ is a constant, so $G_{\beta R}$ has
the same shape as $G_R$ in terms of contours.) The dots in the figures
show that $G_{\beta R} < 1$ at the estimated parameter values.

Part~(b) of each figure shows the set of parameters under which the model is
globally stable and ergodic. The stability threshold is the boundary of the
set of parameter pairs that produce $\max \{ G_{\beta R}, \bar s, \bar s G_R
\} < 1$, where $\bar s$ is given by \eqref{eq:suff_crra}.  For such pairs,
Assumptions~\ref{a:rb0} and \ref{a:r0} both hold, 
so the conditions of Theorems~\ref{t:sta_exist}--\ref{t:gs_gnl_ergo_LLN} are satisfied.
(We are continuing to suppose that 
$Y_t$ is finite or has a positive density, so that Assumption~\ref{a:pos_dens}
holds.  Assumptions~\ref{a:b0} and
\ref{a:y0} are always valid in the current setting).  Observe that the estimated
parameter values (dot points) lie inside the stable set. 

%Hence, both models are globally stable and the stationary
%wealth distributions can be computed by Theorem~\ref{t:gs_gnl_ergo_LLN}.
%Moreover, the models are stable for a broad range of parameter values,
%including those corresponding to highly persistent and volatile $\{R_t\}$
%processes. For example, although the estimated parameter values are $(0.2895,
%0.1896)$ in Model I, the economy is globally stable even when $\rho_\sigma$ is
  %higher than 0.95, or when $\delta_\sigma$ is close to $1$. 

\begin{figure}
\centering
\begin{subfigure}[a]{0.8\textwidth}
   \includegraphics[width=1\linewidth]{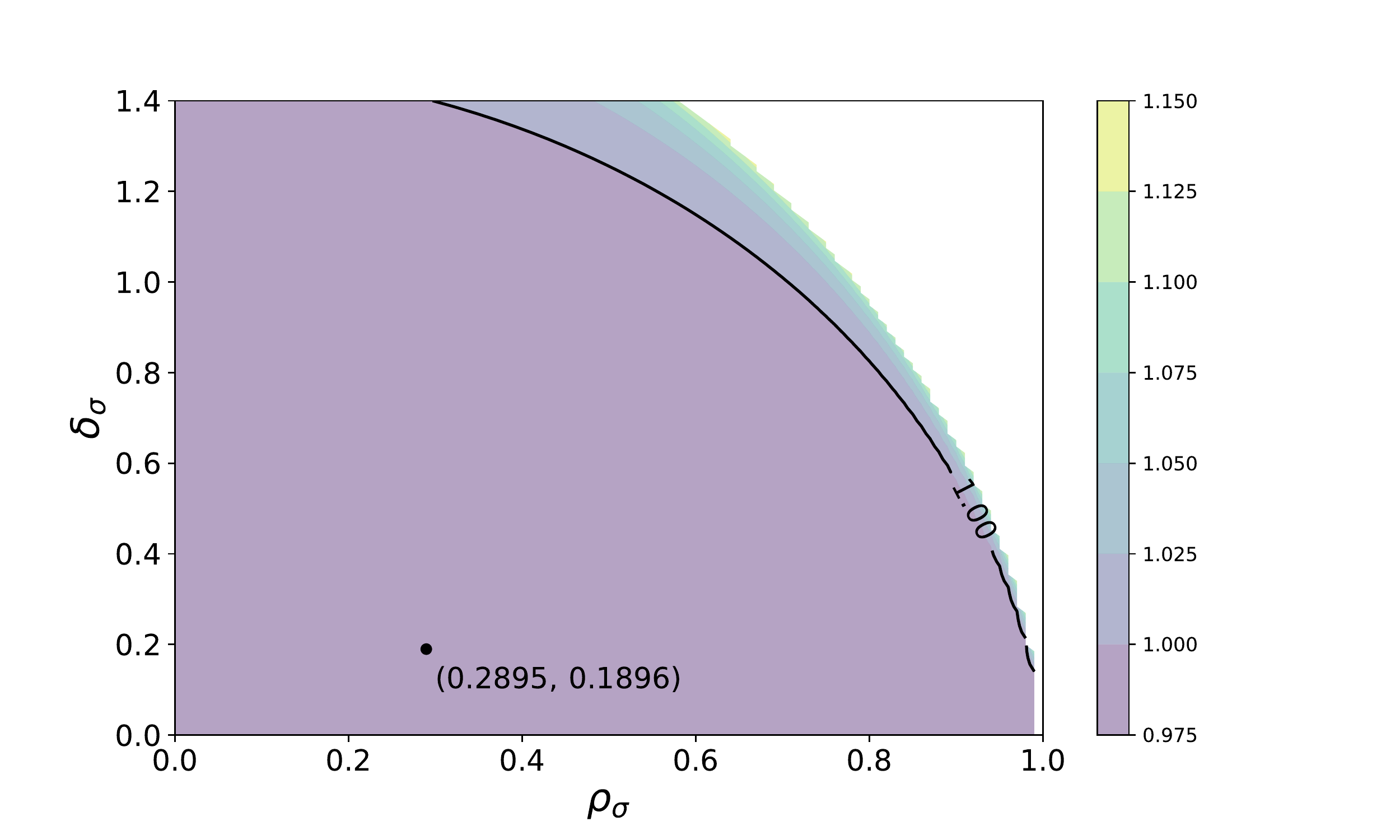}
   \caption*{(a) Contour plot of $G_{\beta R}$}
   \label{fig:m1_G_betR} 
\end{subfigure}
\vspace{0.cm}
\begin{subfigure}[b]{0.7\textwidth}
   \includegraphics[width=1\linewidth]{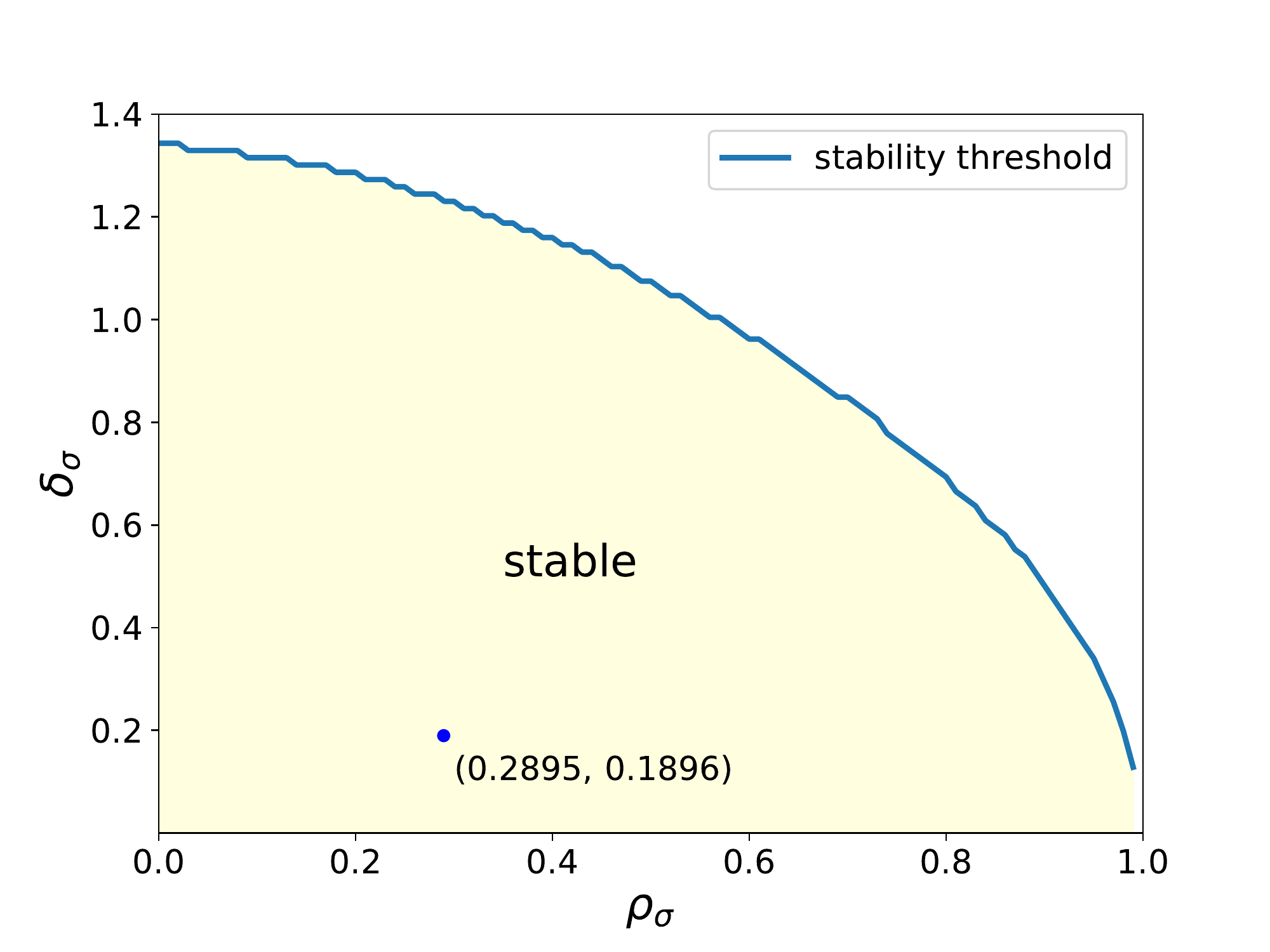}
   \caption*{(b) Range and threshold of stability}
   \label{fig:m1_stb}
\end{subfigure}
\caption[]{Stability tests for Model \rom{1}}
\label{fig:m1}
\end{figure}

%$\beta = 0.95$, \, $\gamma = 2$, \, $\bar{\mu}=0.0281$

\begin{figure}
\centering
\begin{subfigure}[a]{0.8\textwidth}
   \includegraphics[width=1\linewidth]{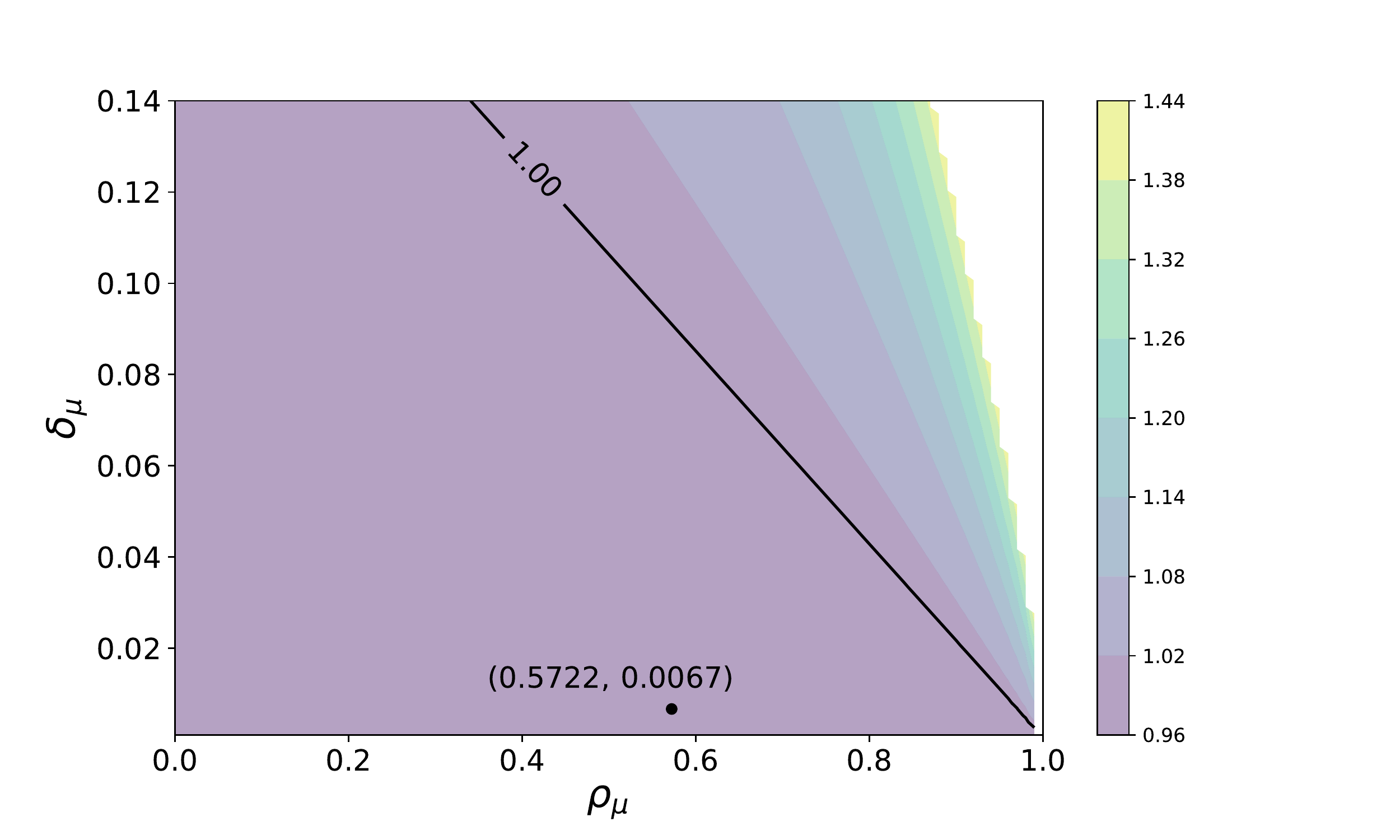}
   \caption*{(a) Contour plot of $G_{\beta R}$}
   \label{fig:m2_G_betR} 
\end{subfigure}
\vspace{0.cm}
\begin{subfigure}[b]{0.7\textwidth}
   \includegraphics[width=1\linewidth]{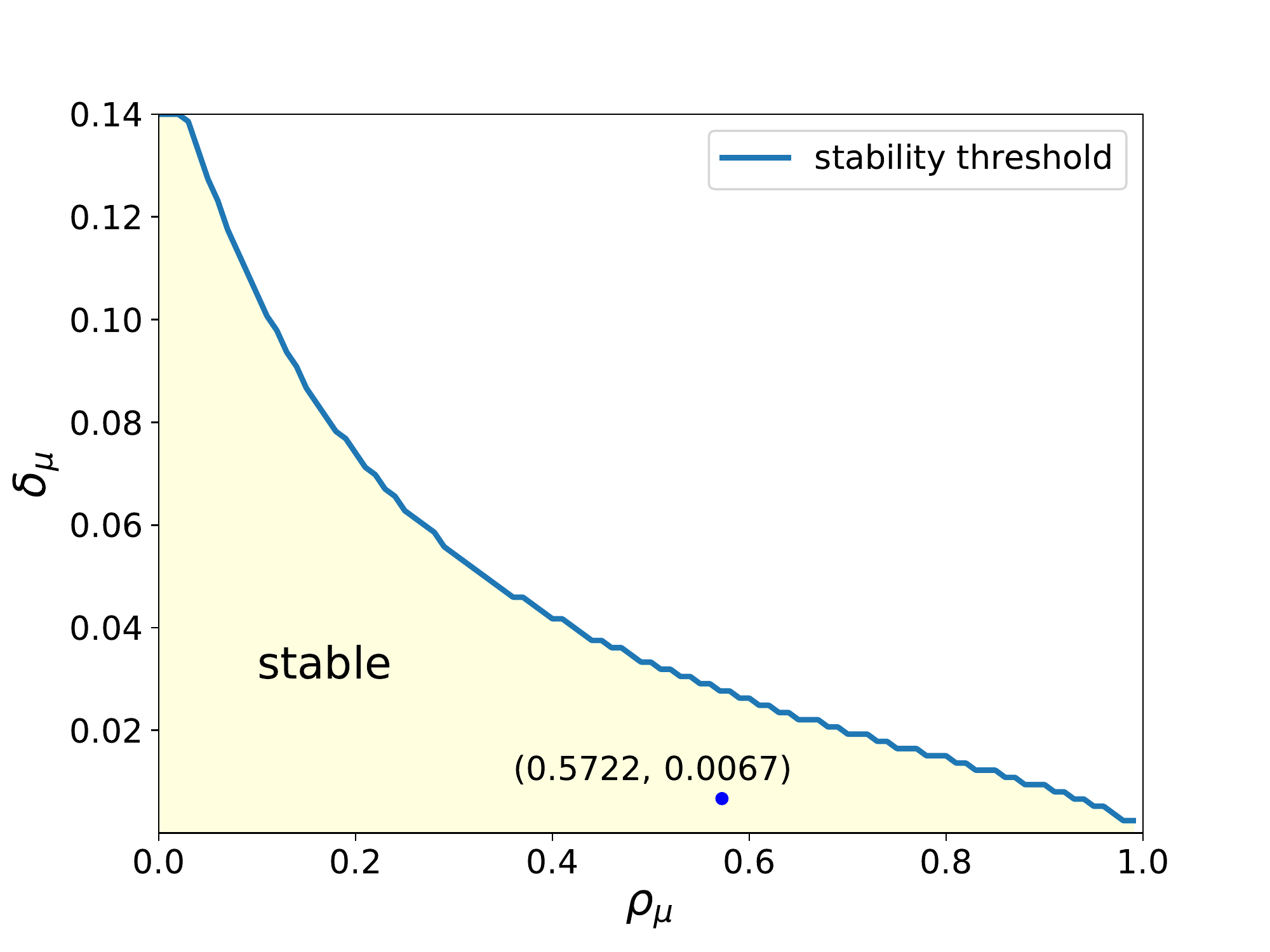}
   \caption*{(b) Range and threshold of stability}
   \label{fig:m2_stb}
\end{subfigure}
\caption[]{Stability tests for Model \rom{2}}
\label{fig:m2}
\end{figure}

\section{Conclusion}

\label{s:c}

We studied an updated version of the income fluctuation problem, the ``common
ancestor'' of modern macroeconomic theory (\cite{ljungqvist2012recursive},
p.~3.)  Working in a setting where returns on financial assets, non-financial
income and impatience are all state dependent and fluctuate over time, we
obtained conditions under which the household savings problem has a unique
solution that can be computed by successive approximations and the wealth
process under the optimal savings policy has a unique stationary distribution
with Pareto right tail.
We also obtained conditions under which wealth is ergodic and exhibits
geometric mixing and asymptotic normality.  We investigated the nature of our
conditions and provided methods for testing them in applications.
While our work was motivated by the desire to better understand the joint
distribution of income and wealth, the income fluctuation problem also has 
applications in asset pricing,
life-cycle choice, fiscal policy, monetary policy, optimal taxation, and
social security.  The ideas contained in this paper should be helpful for
those fields after suitable modifications or extensions.

\appendix

\section{Preliminaries}
\label{s:prelm}

Given a topological space $\TT$, let $\bB(\TT)$ be the Borel
$\sigma$-algebra and $\pP(\TT)$ be the probability measures on
$\bB(\TT)$.
A \emph{stochastic kernel} $\Pi$ on $\TT$ is a map $\Pi \colon \TT \times
\bB(\TT)
\to [0, 1]$ such that $x \mapsto \Pi(x, B)$ is $\bB(\TT)$-measurable for each $B
\in \bB(\TT)$ and $B \mapsto \Pi(x, B)$ is a probability measure on $\bB(\TT)$
for each $x \in \TT$.  For all $t \in \NN$, $x, y \in \TT$ and $B \in
\bB(\TT)$, we define
	$\Pi^1 := \Pi$ and $\Pi^t (x, B) := \int \Pi^{t-1}(y, B) \Pi (x, \diff
    y)$.
Furthermore, for all $\mu \in \pP(\TT)$, let
  $(\mu \Pi^t)(B) := \int \Pi^t (x, B) \mu(\diff x)$.
$\Pi$ is called \emph{Feller} if $x \mapsto \int h(y) \Pi(x, \diff y)$ is
continuous on 
$\TT$ whenever $h$ is bounded and continuous on $\TT$.  We call $\psi \in \pP(\TT)$ \emph{stationary}
for $\Pi$ if $\psi \Pi = \psi$. 

A sequence $\{ \mu_n \} \subset \pP(\TT)$ is called \emph{tight}, if, for all 
$\epsilon>0$, there exists a compact $K \subset \TT$ such that 
$\mu_n(\TT \backslash K) \leq \epsilon$ for all $n$. 
A stochastic kernel $\Pi$ is called \emph{bounded in probability} if the
sequence $\{ Q^t(x, \cdot)\}_{t \geq 0}$ is tight for all $x \in \TT$. 
Given $\mu \in \pP(\TT)$, we define the total variation norm 
$\|\mu\|_{TV} := \sup_{g: |g| \leq 1} \left| \int g \diff \mu \right|$.
Given any measurable map $V \colon \TT \to [1, \infty)$, we say that 
$\Pi$ is \textit{$V$-geometrically mixing} if there exist constants $M < \infty$
and $\lambda < 1$ such that, for all $x \in \TT$ and $t \geq 0$, the corresponding Markov 
process $\{X_t\}$ satisfies $\sup_{k \geq 0; \, h^2, \, g^2 \leq V} \left|
        \EE_x g(X_t) h(X_{t + k}) - \left[\EE_x g(X_t) \right] \left[\EE_x h(X_{t+k}) \right]  
    \right|
    \leq \lambda^t M V(x)$.

Below we use $(\Omega, \fF, \PP)$ to denote a fixed probability space on which
all random variables are defined. $\EE$ is expectations with respect to $\PP$.
The state process $\{Z_t\}$ and the innovation processes $\{\epsilon_t\}$,
$\{\zeta_t\}$ and $\{\eta_t\}$ introduced in \eqref{eq:RY_func} live on this
space.  In what follows, $\{Z_t\}$ is a stationary version of the chain, where
$Z_0$ is drawn from its unique stationary distribution---henceforth denoted
$\pi_Z$.  The marginal distributions of the innovations are denoted by
$\pi_\epsilon$, $\pi_\zeta$ and $\pi_\eta$ respectively.  We let $\{\fF_t\}$
be the natural filtration generated by $\{Z_t\}$ and the
three innovation processes.   $\PP_z$ conditions on $Z_0 = z$ and $\EE_z$ is
expectation under $\PP_z$.

We first prove Lemma~\ref{l:besr}, since its implications will be used
immediately below.  In the proof, we consider the matrix $L_\phi$ as a linear
operator on $\RR^{\ZZ}$ and identify vectors in $\RR^{\ZZ}$ with real-valued
functions on $\ZZ$.

\begin{proof}[Proof of Lemma~\ref{l:besr}]
    A proof by induction confirms that, for any function $h \in \RR^{\ZZ}$, 
    \begin{equation}
        \label{eq:lbev}
        L_\phi^n \, h(z) 
        = \EE_z \prod_{t=1}^n \phi_t h(Z_t),
    \end{equation}
    where $L_\phi^n$ is the $n$-th composition of the operator $L_\phi$ with
    itself (or, equivalently, the $n$-th power of the matrix $L_\phi$).  The
    positivity of $L_\phi$ and Theorem~9.1 of
    \cite{krasnosel2012approximate} imply that $r(L_\phi)
    = \lim_{n \to \infty} \| L_\phi^n \, h \|^{1/n}$ when $\| \cdot \|$ is any
    norm on $\RR^{\ZZ}$ and $h$ is everywhere positive on $\ZZ$.  With $h
    \equiv 1$ and $\| f \| =  \EE |f(Z_0)|$, this becomes
    \begin{equation}
        \label{eq:lbr}
        r(L_\phi) 
        = \lim_{n \to \infty} 
        \left( \EE \, L_\phi^n \, \1(Z_0) \right)^{1/n}
        = \lim_{n \to \infty} 
        \left(\EE\,  \EE_{Z_0} \prod_{t=1}^n \phi_t \right)^{1/n}
        = \lim_{n \to \infty}
        \left(\EE\, \prod_{t=1}^n \phi_t \right)^{1/n}
    \end{equation}
    where the second equality is due to \eqref{eq:lbev} and $h = \1$ and
    the third is by the law of iterated expectations.  
\end{proof}

\begin{lemma}
    \label{l:theta}
    Let $\{\phi_t \}$ and $G_\phi$ be as defined in Lemma~\ref{l:besr}.  If
    $G_\phi < 1$, then there exists an $N$ in $\NN$ and a $\delta < 1$ such
    that $\max_{z \in \ZZ} \EE_z \prod_{t=1}^n \phi_t < \delta^n$ whenever $n \geq N$.
\end{lemma}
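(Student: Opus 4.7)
The plan is to leverage Lemma~\ref{l:besr} directly: the identity established inside its proof says $L_\phi^n \mathbf{1}(z) = \EE_z \prod_{t=1}^n \phi_t$, so the quantity we want to bound geometrically is nothing other than the $z$-th coordinate of the vector $L_\phi^n \mathbf{1}$. Thus the claim reduces to a standard statement about powers of a matrix whose spectral radius is strictly less than one.

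Concretely, I would first invoke Lemma~\ref{l:besr} to convert the hypothesis $G_\phi < 1$ into $r(L_\phi) < 1$. Then I would pick any $\delta$ with $r(L_\phi) < \delta < 1$ and apply Gelfand's spectral radius formula to the finite-dimensional operator $L_\phi$ on $\RR^{\ZZ}$ equipped with, say, the sup-norm $\|v\|_\infty = \max_{z \in \ZZ} |v(z)|$ and the corresponding induced operator norm $\vertiii{\,\cdot\,}$. Gelfand's formula gives $\vertiii{L_\phi^n}^{1/n} \to r(L_\phi) < \delta$, so there exists $N \in \NN$ with $\vertiii{L_\phi^n} < \delta^n$ for all $n \geq N$.

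To finish, I would observe that $\mathbf{1}$ has unit sup-norm, whence
\begin{equation*}
\max_{z \in \ZZ} \EE_z \prod_{t=1}^n \phi_t
= \|L_\phi^n \mathbf{1}\|_\infty
\leq \vertiii{L_\phi^n} \, \|\mathbf{1}\|_\infty
= \vertiii{L_\phi^n}
< \delta^n
\end{equation*}
for every $n \geq N$, which is exactly the desired conclusion.

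I do not anticipate a genuine obstacle here: both tools (the identity $L_\phi^n \mathbf{1}(z) = \EE_z \prod_{t=1}^n \phi_t$ and Gelfand's formula) are already in play in the proof of Lemma~\ref{l:besr}. The only small point worth checking is that the induced operator norm and the sup-norm on vectors are compatible so that the step $\|L_\phi^n \mathbf{1}\|_\infty \le \vertiii{L_\phi^n}$ is immediate; this is routine in finite dimensions. If one prefers to avoid an explicit operator-norm argument, an equivalent route is to note that, since all norms on the finite-dimensional space $\RR^{\ZZ}$ are equivalent, $(\EE |L_\phi^n \mathbf{1}(Z_0)|)^{1/n} \to r(L_\phi)$ together with the nonnegativity of the entries of $L_\phi^n$ and a simple comparison yields the same uniform geometric decay of $\max_z L_\phi^n \mathbf{1}(z)$.
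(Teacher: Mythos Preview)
Your proposal is correct and follows essentially the same route as the paper: both proofs identify $\max_{z \in \ZZ} \EE_z \prod_{t=1}^n \phi_t$ with $\|L_\phi^n \mathbf{1}\|_\infty$ and then invoke a spectral-radius limit formula to obtain the geometric bound. The only cosmetic difference is that the paper applies the Krasnosel'skii result $r(L_\phi)=\lim_n \|L_\phi^n h\|^{1/n}$ (for positive $h$, any vector norm) directly with the sup-norm, whereas you go through the operator-norm version of Gelfand's formula and the bound $\|L_\phi^n \mathbf{1}\|_\infty \le \vertiii{L_\phi^n}$; both reach the same conclusion in one step.
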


\begin{proof}
    Recalling from the proof of Lemma~\ref{l:besr} that 
    $r(L_\phi)
    = \lim_{n \to \infty} \| L_\phi^n \, h \|^{1/n}$ when $\| \cdot \|$ is any
    norm on $\RR^{\ZZ}$ and $h$ is everywhere positive on $\ZZ$, we can again
    take $h \equiv 1$ but now switch to $\| f \| =  \max_{z \in \ZZ} |f(z)|$,
    so that \eqref{eq:lbr} becomes
    \begin{equation}
        \label{eq:lbr2}
        r(L_\phi) 
        = \lim_{n \to \infty} 
        \left( \max_{z \in \ZZ} \, L_\phi^n \, \1(z) \right)^{1/n}
        = \lim_{n \to \infty} 
        \left(\max_{z \in \ZZ} \EE_z \prod_{t=1}^n \phi_t \right)^{1/n}.
    \end{equation}
    Since $r(L_\phi) = G_\phi$ and $G_\phi < 1$, the claim in
    Lemma~\ref{l:theta} now follows.
\end{proof}

\section{Proof of Section \ref{s:ifp} Results}
\label{s:proof_opt}

\begin{proof}[Proof of Proposition~\ref{p:necg}]
    Pick any $a \geq 0$ and $z \in \ZZ$.
    Since $c_t = Y_t$ for all $t$ is dominated by a feasible consumption path, monotonicity
    of $u$ and the law of iterated expectations give
    \begin{align*}
        \max \, \EE_{a,z}
                \sum_{t = 0}^\infty 
                \prod_{i=0}^t \beta_i  u(c_t)
         \geq 
        \EE_z 
                \sum_{t = 0}^\infty 
                \prod_{i=0}^t \beta_i  u(Y_t)
         =
                \sum_{t = 0}^\infty 
                \EE_z \prod_{i=0}^t \beta_i  h(Z_t) ,
    \end{align*}
    where $h(Z_t) := \EE_{Z_t} u(Y)$ and the monotone convergence theorem
    has been employed to pass the expectation through the sum.  In view of
    \eqref{eq:lbev} and $\beta_0 = 1$, we then have
    \begin{equation}
        \label{eq:mgms}
        \max \, \EE_{a,z} 
                \sum_{t = 0}^\infty 
                \prod_{i=0}^t \beta_i  u(c_t)
         \geq 
         \sum_{t = 0}^\infty L_\beta^t \, h(z).
    \end{equation}
    By the assumed almost sure positivity of $\beta_t$ and the irreducibility
    of $P$, the matrix $L_\beta$ is irreducible.  Hence, by the
    Perron--Frobenius theorem, we can choose an everywhere positive
    eigenfunction $e$ such that $L_\beta e = r(L_\beta) e$.  By the everywhere
    positivity of $u(Y_t)$, the function $h$ is everywhere positive on $\ZZ$,
    and hence we can choose $\alpha > 0$ such that $e_\alpha := \alpha e$ is
    less than $h$ pointwise on $\ZZ$. We then have
    \begin{equation*}
        \sum_{t = 0}^\infty L_\beta^t \, h(z)
        \geq \sum_{t = 0}^\infty L_\beta^t \, e_\alpha(z)
        = \alpha \sum_{t = 0}^\infty r(L_\beta)^t \, e(z).
    \end{equation*}
    By lemma~\ref{l:besr} we know that $r(L_\beta) \geq 1$, and since $\alpha$
    and $e$ are positive, this expression is infinite.  Returning to
    \eqref{eq:mgms}, we see that the value function is infinite at our
    arbitrarily chosen pair $(a, z)$.
\end{proof}

For the rest of this section we suppose that Assumptions~\ref{a:b0}--\ref{a:y0} hold.

\begin{lemma}
    \label{l:tefi}
        $M_1 := 
            \sum_{t = 0}^{\infty} \max_{z \in \ZZ} 
            \EE_z  \prod_{i=1}^{t} \beta_i$ and $M_2 := \sum_{t = 0}^{\infty} 
            \max_{z \in \ZZ} \EE_z  \prod_{i=1}^{t} \beta_i R_i$,
        are finite, as are the constants $M_3 = \max_{z \in \ZZ} \EE_z Y$ and $M_4 = \max_{z \in \ZZ} \EE_z u'(Y)$.
\end{lemma}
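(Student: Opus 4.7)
The plan is to split the claim in two: the geometric-series bounds for $M_1$ and $M_2$, which rest on the growth-rate assumptions via Lemma~\ref{l:theta}, and the much simpler bounds for $M_3$ and $M_4$, which follow from Assumption~\ref{a:y0} together with finiteness and irreducibility of $\{Z_t\}$.

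First I would handle $M_1$. Note that $\beta_t = \beta(Z_t,\epsilon_t)$ fits the template $\phi_t = \phi(Z_t,\xi_t)$ of Lemma~\ref{l:besr}/\ref{l:theta} with $\phi = \beta$ and $\xi_t = \epsilon_t$. Under Assumption~\ref{a:b0} we have $G_\beta < 1$, so Lemma~\ref{l:theta} delivers $N \in \NN$ and $\delta < 1$ such that
\begin{equation*}
    \max_{z \in \ZZ} \EE_z \prod_{i=1}^t \beta_i < \delta^t
    \quad \text{for all } t \geq N.
\end{equation*}
The series defining $M_1$ then splits into the finite head $\sum_{t=0}^{N-1}\max_{z}\EE_z\prod_{i=1}^t\beta_i$ (whose terms are each finite as a finite max of expectations of nonnegative random variables already controlled by the lemma at earlier stages) and the tail $\sum_{t\geq N}\delta^t < \infty$.

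For $M_2$ I would apply the same argument, now taking $\phi_t = \beta_t R_t = \beta(Z_t,\epsilon_t)\,R(Z_t,\zeta_t)$ and $\xi_t = (\epsilon_t,\zeta_t)$. Since $\{\epsilon_t\}$ and $\{\zeta_t\}$ are each {\sc iid} and independent as sequences, $\{\xi_t\}$ is {\sc iid} with marginal the product measure $\pi_\epsilon \otimes \pi_\zeta$, so the hypotheses of Lemma~\ref{l:theta} are met. Assumption~\ref{a:rb0} gives $G_{\beta R} < 1$, and the same head–tail decomposition yields $M_2 < \infty$.

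For $M_3$ and $M_4$ the argument is purely measure-theoretic. Since $\{Z_t\}$ is irreducible on the finite set $\ZZ$, its unique stationary distribution $\pi_Z$ assigns positive mass to every state. Under stationarity,
\begin{equation*}
    \EE Y \;=\; \sum_{z\in\ZZ}\pi_Z(z)\,\EE_z Y
    \qquad \text{and} \qquad
    \EE\, u'(Y) \;=\; \sum_{z\in\ZZ}\pi_Z(z)\,\EE_z u'(Y).
\end{equation*}
Assumption~\ref{a:y0} makes both left-hand sides finite, which forces every summand to be finite, and taking the max over the finite set $\ZZ$ preserves finiteness. The only mildly delicate point in the whole argument is the verification that the compound process $\beta_t R_t$ really does fit the $\phi(Z_t,\xi_t)$ template of Lemma~\ref{l:theta}, but this reduces to the independence structure of the innovations assumed in Section~\ref{ss:ifp_problem}; no new idea is needed.
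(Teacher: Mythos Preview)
Your proof is correct and follows essentially the same route as the paper: invoke Lemma~\ref{l:theta} (with $\phi_t=\beta_t$ for $M_1$ and $\phi_t=\beta_t R_t$ for $M_2$) to get geometric tail control, and use irreducibility of $\{Z_t\}$ together with Assumption~\ref{a:y0} to decompose the stationary expectation for $M_3$ and $M_4$. The paper is terser---it omits your explicit head--tail split and the verification that $\beta_t R_t$ fits the $\phi(Z_t,\xi_t)$ template---but the underlying argument is identical.
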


\begin{proof}
    That $M_1$ and $M_2$ are finite follows directly from Lemma~\ref{l:theta},
    with $\phi_t = \beta_t$ and $\phi_t = \beta_t R_t$ respectively.
    Regarding $M_3$, Assumption~\ref{a:y0} states that $\EE Y < \infty$.
    By the Law of Iterated Expectations, we can write this as $\sum_{z \in
    \ZZ} \EE_z Y \pi_Z(z) < \infty$.  As $\{Z_t\}$ is irreducible, we know
    that $\pi_Z$ is positive everywhere on $\ZZ$.  Hence, $M_3 < \infty$ must
    hold. The proof of $M_4 < \infty$ is similar.
\end{proof}

\begin{lemma}
    \label{lm:max_path}
    For the maximal asset path $\{ \tilde{a}_t \}$ defined by 
    \begin{equation}
        \label{eq:max_path}
        \tilde{a}_{t+1} = R_{t+1} \, \tilde{a}_t + Y_{t+1}
        \quad \text{and}
        \quad (\tilde{a}_0, \tilde{z}_0) = (a,z) \; \text{given},
    \end{equation}
    we have, for each $(a, z) \in \SS_0$, that
        $M(a, z) := \sum_{t = 0}^\infty \EE_{a,z} 
        \prod_{i=0}^t \beta_i \, \tilde{a}_t < \infty$.
\end{lemma}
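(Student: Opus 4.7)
The plan is to explicitly solve the linear recursion \eqref{eq:max_path}, take expectations termwise, and apply the summability bounds from Lemma~\ref{l:tefi} via a two-step tower argument. Iterating \eqref{eq:max_path} gives the closed form
\begin{equation*}
\tilde{a}_t \,=\, a \prod_{i=1}^t R_i \;+\; \sum_{k=1}^t Y_k \prod_{j=k+1}^t R_j,
\end{equation*}
with the usual convention that empty products equal one. Multiplying by $\prod_{i=0}^t \beta_i = \prod_{i=1}^t \beta_i$ and taking $\EE_{a,z}$ yields
\begin{equation*}
\EE_{a,z}\!\left[\prod_{i=0}^t \beta_i \, \tilde{a}_t\right]
\,=\, a\,\EE_z \prod_{i=1}^t \beta_i R_i
\;+\; \sum_{k=1}^t \EE_z\!\left[Y_k \Bigl(\prod_{i=1}^k \beta_i\Bigr)\Bigl(\prod_{i=k+1}^t \beta_i R_i\Bigr)\right].
\end{equation*}

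For the deterministic-coefficient term, Lemma~\ref{l:tefi} gives $\EE_z \prod_{i=1}^t \beta_i R_i \leq c_t$ with $c_t := \max_{z} \EE_z \prod_{i=1}^t \beta_i R_i$, and $\sum_t c_t = M_2 < \infty$. For each mixed term I would condition first on $\fF_k$: since innovations are {\sc iid} and $\{Z_t\}$ is a time-homogeneous Markov chain, $\EE[\prod_{i=k+1}^t \beta_i R_i \mid \fF_k] = \EE_{Z_k}\!\prod_{s=1}^{t-k} \beta_s R_s \leq c_{t-k}$. This reduces the task to bounding $\EE_z[Y_k \prod_{i=1}^k \beta_i]$. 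Conditioning now on $\fF_{k-1}$ and exploiting that $\beta_k = \beta(Z_k,\epsilon_k)$ and $Y_k = Y(Z_k,\eta_k)$ with $\epsilon_k,\eta_k$ mutually independent and independent of $\fF_{k-1}$, one obtains
\begin{equation*}
\EE[\beta_k Y_k \mid \fF_{k-1}] \,=\, \sum_{z'} P(Z_{k-1},z')\,(\EE_{z'}\beta)(\EE_{z'} Y)
\,\leq\, \Bigl(\max_{z'} \EE_{z'}\beta\Bigr) M_3 \,\leq\, b_1 M_3,
\end{equation*}
using Lemma~\ref{l:tefi} once more. Hence $\EE_z[Y_k \prod_{i=1}^k \beta_i] \leq b_1 M_3 \, b_{k-1}$ where $b_n := \max_z \EE_z \prod_{i=1}^n \beta_i$.

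Combining these bounds,
\begin{equation*}
M(a,z) \,\leq\, a \sum_{t=0}^\infty c_t \;+\; b_1 M_3 \sum_{t=0}^\infty \sum_{k=1}^t c_{t-k}\, b_{k-1}
\,=\, a M_2 \;+\; b_1 M_3 M_1 M_2,
\end{equation*}
where the double sum is rearranged by Tonelli into a product of two absolutely convergent series. Since $M_1, M_2, M_3$ are all finite by Lemma~\ref{l:tefi} (which uses Assumptions~\ref{a:b0}--\ref{a:y0}), the bound is finite for every $(a,z) \in \SS_0$.

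The only real subtlety is the two-stage conditioning: one must be careful that $Z_k$ enters both $\beta_k$ and $Y_k$ through independent innovations, so that the conditional expectation factorizes into $(\EE_{Z_k}\beta)(\EE_{Z_k} Y)$ rather than requiring a joint moment. Everything else is routine: explicit iteration of a linear recursion, the Markov/time-homogeneity property for the tail product, and a Fubini rearrangement of the resulting Cauchy product of two summable sequences.
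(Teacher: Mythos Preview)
Your argument is correct and follows essentially the same route as the paper's: iterate the linear recursion, use the Markov property to peel off the tail product $\prod_{i=k+1}^t\beta_iR_i$ and bound it via $M_2$, bound the remaining piece $\EE_z[Y_k\prod_{i=1}^k\beta_i]$ using $M_3$ and the $\beta$-sums, then rearrange the double sum into a Cauchy product $M_1M_2$.

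One small slip: the inequality $\max_{z'}\EE_{z'}\beta \leq b_1$ goes the wrong way. With $b_1=\max_z\EE_z\beta_1=\max_z\sum_{z'}P(z,z')\int\beta(z',\epsilon)\,\pi_\epsilon(\diff\epsilon)$, this is a maximum of weighted averages and is therefore \emph{at most} $\max_{z'}\int\beta(z',\epsilon)\,\pi_\epsilon(\diff\epsilon)$, not the reverse. This does not damage the proof --- the latter constant is still finite (irreducibility of $P$ plus $b_1<\infty$ forces every $\int\beta(z',\epsilon)\,\pi_\epsilon(\diff\epsilon)<\infty$), so simply carry that constant through instead of $b_1$. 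Alternatively, and this is what the paper does, condition on $Z_k$ rather than $\fF_{k-1}$: since $\eta_k$ is independent of the $\epsilon$'s one gets $\EE_z\bigl[Y_k\prod_{i=1}^k\beta_i\bigr]=\EE_z\bigl[(\EE_{Z_k}Y)\prod_{i=1}^k\beta_i\bigr]\leq M_3\,b_k$ directly, which avoids the extra constant and yields the slightly cleaner bound $M(a,z)\leq M_2\,a+M_1M_2M_3$.
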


\begin{proof}
Iterating backward on \eqref{eq:max_path}, we can show 
that
    $\tilde{a}_t = 
        \prod_{i=1}^t R_i \, a + \sum_{j=1}^t Y_j \prod_{i=j+1}^t R_i$.
Taking expectation yields
\begin{equation*}
    \EE_{a,z} \prod_{i=0}^t \beta_i \, \tilde{a}_t 
    =
      \EE_{z} \prod_{i=1}^t \beta_i R_i \, a + 
      \sum_{j=1}^t       
          \EE_{z}             
              \prod_{i=j+1}^t \beta_i R_i
              \prod_{k=0}^j \beta_k \, Y_j.   
\end{equation*}
Then the Monotone Convergence Theorem and the Markov property imply that
\begin{align*}
  M(a,z) &=
  \sum_{t=0}^{\infty} \EE_z \prod_{i=1}^t \beta_i R_i \, a +
  \sum_{t=0}^{\infty} \sum_{j=1}^t       
  \EE_{z} \prod_{i=j+1}^t \beta_i R_i 
      \prod_{k=0}^j \beta_k \, Y_j    \\
  &= \EE_z \sum_{t=0}^{\infty} \prod_{i=1}^t \beta_i R_i \, a +
  \sum_{j=1}^{\infty} \sum_{i=0}^\infty \EE_z 
      \prod_{k=0}^j \beta_k \, Y_j
      \prod_{\ell=1}^{i} \beta_{j+\ell} R_{j+\ell}    \\
  &= \sum_{t=0}^{\infty} \EE_z \prod_{i=1}^t \beta_i R_i \, a +
  \sum_{j=1}^{\infty} \EE_z \prod_{k=0}^j \beta_k \, Y_j \,
      \EE_{Z_j} \sum_{i=0}^\infty \prod_{\ell=1}^{i} \beta_{\ell} R_{\ell}.
\end{align*}
By Lemma~\ref{l:tefi}, we now have, for all $(a,z) \in \SS_0$,
\begin{align*}
    M(a,z) 
    \leq M_2 \, a + M_2 \sum_{t=1}^{\infty} \EE_z \prod_{i=0}^{t} \beta_i Y_t 
    = M_2 \, a + M_2 \sum_{t=1}^{\infty} \EE_z \prod_{i=0}^{t} \beta_i \, \EE_{Z_t} Y.
\end{align*}
Applying Lemma~\ref{l:tefi} again gives $M(a, z)  < \infty$, as was to be shown.
\end{proof}

\begin{proposition}
    \label{pr:Vc}
    The value $V_c(a,z)$ in \eqref{eq:Vc} is well-defined in $\{ -\infty \} \cup \RR$.
\end{proposition}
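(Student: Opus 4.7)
The obstacle in defining $V_c(a,z)$ is that $u$ may be unbounded below (as with log utility), so the series inside the expectation has terms of mixed sign and we must rule out $\infty - \infty$ before asserting the sum lies in $\{-\infty\}\cup\RR$. My plan is to split $u = u^+ - u^-$ with $u^\pm := \max\{\pm u, 0\}$ and to show that the series built from $u^+$ has finite expectation, so that only the $u^-$ piece can diverge and $V_c = S^+ - S^-$ is unambiguous.

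To control $u^+$, I would extract a linear upper envelope from the hypothesis that $u'(c) < 1$ for all sufficiently large $c$. Picking $c_0$ with $u'(c_0) \leq 1$, concavity of $u$ yields $u(c) \leq u(c_0) + (c - c_0)$ for all $c \geq c_0$; together with the continuity and monotonicity of $u$ on $[0, c_0]$ (and the observation that $u^+$ is simply $0$ where $u$ is negative), this produces a bound of the form
\[
u^+(c) \leq A + c \quad \text{for all } c \geq 0,
\]
for some constant $A$ depending only on $u$.

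The rest is a direct computation. Feasibility combined with the law of motion \eqref{eq:trans_at} gives $c(a_t, Z_t) \leq a_t \leq \tilde a_t$, where $\tilde a_t$ is the maximal asset path of Lemma~\ref{lm:max_path}, so $u^+[c(a_t, Z_t)] \leq A + \tilde a_t$. Applying Tonelli (justified by nonnegativity of $u^+$ and $\beta_t$),
\[
\EE_{a,z} \sum_{t=0}^\infty \beta_0 \cdots \beta_t \, u^+[c(a_t, Z_t)]
\leq A \sum_{t=0}^\infty \EE_{a,z} \prod_{i=0}^t \beta_i
+ \sum_{t=0}^\infty \EE_{a,z} \prod_{i=0}^t \beta_i \, \tilde a_t.
\]
Lemma~\ref{l:tefi} bounds the first series by $A M_1 < \infty$ and Lemma~\ref{lm:max_path} identifies the second with $M(a,z) < \infty$. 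Hence $S^+(a,z) < \infty$, and $V_c(a,z) = S^+(a,z) - S^-(a,z)$ is well-defined in $\{-\infty\} \cup \RR$.

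The only non-routine step is the linear envelope for $u^+$; once that is in hand, everything reduces to a direct appeal to the two preparatory lemmas and Tonelli's theorem, so no further machinery should be needed.
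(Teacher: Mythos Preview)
Your proposal is correct and follows essentially the same approach as the paper: both arguments extract a linear upper envelope $u(c)\leq c+B$ from the slope condition $u'(c)<1$ for large $c$, then bound the resulting series using Lemma~\ref{lm:max_path} for the asset term and Lemma~\ref{l:tefi}/\ref{l:theta} for the discount-factor term. Your treatment is simply more explicit than the paper's about the $u^+/u^-$ decomposition and the appeal to Tonelli, which the paper leaves implicit in its one-line bound $V_c(a,z)\leq M(a,z)+B\sum_t \EE_z\prod_i\beta_i$.
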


\begin{proof}
By the assumptions on the utility function, there exists a constant $B \in \RR_+$ such that 
$u(c) \leq c + B$, and hence
    $V_c(a,z) 
    \leq \EE_{a,z} \sum_{t = 0}^\infty \prod_{i=0}^t \beta_i \, u(\tilde{a}_t)    
    \leq M(a, z) + B \sum_{t = 0}^\infty \EE_z \prod_{i=0}^t \beta_i$.
The last term is finite by Lemma~\ref{l:theta}.
\end{proof}

\begin{proof}[Proof of Thoerem~\ref{t:opt_result}]
    The proof is a long but relatively straightforward extension of Theorem~1 of
    \cite{benhabib2015wealth} and thus omitted. A full proof is available
    from the authors upon request.
\end{proof}

\begin{proposition}
    \label{pr:complete}
    $(\cC, \rho)$ is a complete metric space.
\end{proposition}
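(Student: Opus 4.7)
The plan is to reduce completeness of $(\cC, \rho)$ to completeness of the Banach space $B(\SS_0)$ of bounded real-valued functions on $\SS_0$ under the sup norm. The key observation is that condition \eqref{eq:C4} says precisely that $h_c := u' \circ c - u'$ is bounded for each $c \in \cC$, and $\rho(c,d) = \| h_c - h_d \|_\infty$. So the map $c \mapsto h_c$ embeds $\cC$ isometrically into $B(\SS_0)$, and completeness will follow once I invert the embedding by showing that $(u')^{-1} \circ (u' + h)$ lies in $\cC$ for the limits $h$ that arise.

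First I would briefly confirm that $\rho$ is a genuine metric on $\cC$. Finiteness is noted just after the metric is introduced. The triangle inequality and symmetry are inherited from $\|\cdot\|_\infty$. Nondegeneracy uses strict monotonicity of $u'$ on $(0,\infty)$, a consequence of $u'' < 0$: the identity $u' \circ c = u' \circ d$ forces $c = d$ pointwise.

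For the main step, take a Cauchy sequence $\{c_n\} \subset \cC$. Then $\{h_n\}$ is Cauchy in $B(\SS_0)$, hence converges uniformly to some bounded function $h$. Set $g := u' + h$ and $c^* := (u')^{-1} \circ g$. Verifying $c^* \in \cC$ then reduces to four items: (i) $c^*(a,z) \in (0,a]$ is well-defined; (ii) $c^*$ is continuous; (iii) $c^*(\cdot,z)$ is increasing; and (iv) condition \eqref{eq:C4} holds. Items (ii)--(iv) are essentially routine, because uniform convergence of the continuous functions $u' \circ c_n$ yields continuity of $g$ and hence of $c^*$ via continuity of $(u')^{-1}$ on $(0,\infty)$; monotonicity of each $c_n(\cdot, z)$ translates to monotonicity of $g(\cdot, z)$ and therefore of $c^*(\cdot, z)$; and \eqref{eq:C4} for $c^*$ is just boundedness of $h$. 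Once $c^* \in \cC$ is established, convergence $\rho(c_n, c^*) \to 0$ is immediate from $\rho(c_n, c^*) = \|h_n - h\|_\infty$.

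The delicate step, and the one I expect to be the main obstacle, is (i). The upper bound $c^*(a,z) \leq a$ requires transferring the pointwise inequality $c_n(a,z) \leq a$ (which gives $h_n \geq 0$) into $h \geq 0$ and hence $g \geq u'$, from which $c^* = (u')^{-1}(g) \leq (u')^{-1}(u') = a$ by the decreasing property of $(u')^{-1}$. Strict positivity $c^*(a,z) > 0$ uses the Inada condition $u'(c) \to \infty$ as $c \to 0$: since $h$ is bounded, $g(a,z)$ is finite, so $g(a,z)$ lies in the range $(0,\infty)$ of $u'$ and its preimage is strictly positive. Both of these are standard arguments but require carefully marshaling the problem's hypotheses on $u'$ to ensure the pointwise constraints defining $\cC$ survive the passage to the uniform limit.
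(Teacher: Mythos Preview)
Your proposal is correct and follows the natural approach: embed $\cC$ isometrically into the Banach space of bounded functions via $c \mapsto u' \circ c - u'$, pass to the limit there, and then pull back through $(u')^{-1}$ while checking that the defining constraints of $\cC$ survive. The paper itself omits the proof, citing it as a straightforward extension of Proposition~4.1 in \cite{li2014solving}; your argument is exactly the kind of extension intended, so there is nothing substantive to contrast.
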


%\textcolor{red}{Perhaps we can just cite Li and Stachurski here?}

\begin{proof}%[Proof of Proposition~\ref{pr:complete}]
    The proof is a straightforward extension of Proposition~4.1 of \cite{li2014solving} and thus omitted. 
    A full proof is available from the authors upon request.
\end{proof}

%\begin{proof}[Proof of Proposition~\ref{pr:complete}]
%Standard argument shows that $\rho$ is a valid metric. To show completeness 
%of $(\cC, \rho)$, define $\hH$ to be the set of functions $h \colon \SS_0 \to \RR$ that 
%satisfies
%%
%\begin{enumerate}
%    \item $h$ is continuous,
%    \item $h$ is decreasing in the first argument, and
%    \item $\exists M \in \RR$ such that $u'(a) \leq h(a,z) \leq u'(a) + M$
%        for all $(a,z) \in \SS_0$.
%\end{enumerate}
%%
%On $\hH$ we impose the distance
%%
%\begin{equation}
%\label{eq:dinf_metric}
%    d_{\infty}(h,g) 
%      := \left\| h - g \right\|
%      := \sup_{(a,z) \in \SS_0} \left| h(a,z) - g(a,z) \right|.
%\end{equation}
%%
%While the elements of $\hH$ are not bounded, the function $d_{\infty}$ is a 
%valid metric. Moreover, standard argument shows that $(\hH, d_{\infty})$ is 
%a complete metric space. Let us show that $(\cC, \rho)$ and 
%$(\hH, d_{\infty})$ are isometrically isomorphic.
%
%To see that this is so, let $H$ be the map on $\cC$ defined by 
%$Hc = u' \circ c$. It is easy to show that 
%$H: \cC \to \hH$ and that it is a bijection. Moreover, for all 
%$c,d \in \cC$, 
%%
%\begin{equation*}
%    d_{\infty}(Hc, Hd) = \left\| Hc - Hd \right\| 
%                       = \left\| u' \circ c - u' \circ d \right\|
%                       = \rho(c,d).
%\end{equation*}
%%
%Hence, $H$ is an isometry. The space $(\cC, \rho)$ is then complete, as 
%claimed.
%\end{proof}

\begin{proof}[Proof of Proposition~\ref{pr:suff_optpol}]
    Let $c$ be a policy in $\cC$ satisfying \eqref{eq:foc}. To show that any asset
    path generated by $c$ satisfies the transversality condition \eqref{eq:tvc},
    observe that, by condition \eqref{eq:C4}, we have
    \begin{equation}
    \label{eq:bd_uprime}
      c \in \cC \Longrightarrow 
        \exists M \in \RR_+ 
        \text{ s.t. } 
        u'(a) \leq (u' \circ c)(a,z) \leq u'(a) + M, \,
        \forall (a,z) \in \SS_0.
    \end{equation}
    \begin{equation}
    \label{eq:ineq_betu'ca}
    \therefore \quad 
    \EE_{a,z} \prod_{i=0}^t \beta_i \, (u' \circ c) (a_t, Z_t) a_t
    \leq \EE_{a, z} \prod_{i=0}^t \beta_i \, u'(a_t) a_t + 
    M \, \EE_{a, z} \prod_{i=0}^t \beta_i \, a_t .
    \end{equation}
    Regarding the first term on the right hand side of \eqref{eq:ineq_betu'ca}, 
    fix $A > 0$ and observe that
    \begin{align*}
        u'(a_t) a_t 
        &= u'(a_t) a_t \1 \{a_t \leq A\} + u'(a_t) a_t \1 \{a_t > A\}    \\
        &\leq A u'(a_t) + u'(A) a_t   
        \leq A u'(Y_t) + u'(A) \tilde{a}_t
    \end{align*}
    with probability one, where $\tilde{a}_t$  is the maximal path defined in \eqref{eq:max_path}. We 
    then have
    \begin{equation}
    \label{eq:ineq_betEu'a}
        \EE_{a,z} \prod_{i=0}^t \beta_i \, u'(a_t) a_t 
        \leq A \EE_{z} \prod_{i=0}^t \beta_i \, u'(Y_t) 
            + u'(A) \EE_{a,z} \prod_{i=0}^t \beta_i \, \tilde{a}_t.
    \end{equation} 
    By Lemma~\ref{l:tefi}, we have
    \begin{equation*}
            A \, \EE_{z} \prod_{i=0}^t \beta_i \, u'(Y_t) 
            = A \, \EE_{z} \prod_{i=0}^t \beta_i \, \EE_{Z_t} u'(Y) 
            \leq M_4 A \, \EE_{z} \prod_{i=0}^t \beta_i,
    \end{equation*}
    and the last expression converges to zero as $t \to \infty$ by Lemma~\ref{l:theta}.  The second term
    in \eqref{eq:ineq_betEu'a} also converges to zero by Lemma~\ref{lm:max_path}.
    Hence $\EE_{a,z} \prod_{i=0}^t \beta_i \, u'(a_t) a_t  \to 0$ as
    $t \to \infty$, which, combined with \eqref{eq:ineq_betu'ca} and another
    application of Lemma~\ref{lm:max_path}, gives our desired result.
\end{proof}

\begin{proposition}
    \label{pr:welldef_T}
    For all $c \in \cC$ and $(a,z) \in \SS_0$, there exists a unique $\xi \in
    (0,a]$ that solves \eqref{eq:T_opr}.
\end{proposition}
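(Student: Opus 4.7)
The plan is to fix $c \in \cC$ and $(a,z) \in \SS_0$, and analyze the function $F \colon (0,a] \to \RR$ defined by $F(\xi) := u'(\xi) - \psi_c(\xi,a,z)$. I will show that $F$ is continuous, strictly decreasing, satisfies $F(\xi) \to +\infty$ as $\xi \to 0^+$, and obeys $F(a) \leq 0$. These four properties combined with the intermediate value theorem immediately deliver a unique zero $\xi^* \in (0,a]$ of $F$, which is the desired solution of \eqref{eq:T_opr}.

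First I would verify that $\psi_c(\xi,a,z)$ is finite for each $\xi \in (0,a]$. Using the envelope bound \eqref{eq:bd_uprime}, which holds because $c \in \cC$, there is some $M$ such that $(u'\circ c)(x,\hat z) \leq u'(x) + M$. Since $\hat R(a-\xi) + \hat Y \geq \hat Y$ and $u'$ is decreasing, this yields the pointwise domination $(u' \circ c)(\hat R(a-\xi) + \hat Y, \hat Z) \leq u'(\hat Y) + M$. Using conditional independence of $\hat\beta$, $\hat R$, and $\hat Y$ given $\hat Z$ (they are driven by independent innovations), together with Lemma~\ref{l:tefi} (which bounds $\EE_z \hat\beta\hat R$ and $\max_z \EE_z u'(Y)$), I obtain
\begin{equation*}
\EE_z \hat\beta\hat R \, (u' \circ c)\bigl(\hat R(a-\xi) + \hat Y,\, \hat Z\bigr)
\;\leq\; \EE_z \hat\beta\hat R \bigl[u'(\hat Y) + M\bigr] \;<\; \infty,
\end{equation*}
so $\psi_c(\xi,a,z) < \infty$.

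Next I would establish the structural properties of $\psi_c(\cdot,a,z)$. Continuity in $\xi$ follows by dominated convergence with dominating function $u'(\hat Y) + M$ as above. Monotonicity in $\xi$ is obtained by noting that as $\xi$ grows, $\hat R(a-\xi) + \hat Y$ decreases a.s., hence (since $c$ is increasing in its first argument and $u'$ is decreasing) the composition $(u'\circ c)$ evaluated at this argument is non-decreasing in $\xi$; taking expectations and the pointwise maximum with $u'(a)$ preserves this monotonicity. Combined with the fact that $u'$ is strictly decreasing (since $u'' < 0$), the function $F$ is continuous and strictly decreasing on $(0,a]$.

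For the boundary behavior, $F(a) = u'(a) - \psi_c(a,a,z) \leq 0$ follows directly from the definition of $\psi_c$ as a maximum that includes $u'(a)$. For the limit at $0$, the Inada condition gives $u'(\xi) \to \infty$ as $\xi \to 0^+$, while monotonicity in $\xi$ yields $\psi_c(\xi,a,z) \leq \psi_c(a,a,z) < \infty$, so $F(\xi) \to +\infty$. Then the intermediate value theorem and strict monotonicity give the unique $\xi^* \in (0,a]$ solving $F(\xi^*) = 0$. The main obstacle is the integrability check needed to justify both finiteness and dominated convergence; this is precisely where the envelope \eqref{eq:bd_uprime} and Lemma~\ref{l:tefi} do the essential work, so the remaining steps are routine.
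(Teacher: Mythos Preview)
Your proposal is correct and follows essentially the same approach as the paper: both arguments use the envelope bound \eqref{eq:bd_uprime} to obtain an integrable dominating function, apply dominated convergence for continuity of $\psi_c(\cdot,a,z)$, exploit the monotonicity of $\psi_c$ in $\xi$ together with the strict monotonicity of $u'$ for uniqueness, and invoke the Inada condition at $0$ and the trivial inequality at $\xi=a$ for existence via the intermediate value theorem. The only cosmetic difference is that you package everything into a single function $F(\xi) = u'(\xi) - \psi_c(\xi,a,z)$, whereas the paper treats the two sides separately; one small slip is that the dominating function for the DCT step should be $\hat\beta\hat R[u'(\hat Y)+M]$ rather than $u'(\hat Y)+M$, but you already verified integrability of the former, so this is harmless.
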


\begin{proof}%[Proof of Proposition~\ref{pr:welldef_T}]
Fix $c \in \cC$ and $(a,z) \in \SS_0$. Because $c \in \cC$, the map 
$\xi \mapsto \psi_c(\xi, a, z)$ is increasing. Since $\xi \mapsto u'(\xi)$
is strictly decreasing, the equation \eqref{eq:T_opr} can have at most one 
solution. Hence uniqueness holds.

Existence follows from the intermediate value theorem provided we can show 
that 
\begin{enumerate}
  \item[(a)] $\xi \mapsto \psi_c(\xi, a, z)$ is a continuous function,
  \item[(b)] $\exists \xi \in (0,a]$ such that 
      $u'(\xi) \geq \psi_c(\xi, a, z)$, and
  \item[(c)] $\exists \xi \in (0,a]$ such that 
      $u'(\xi) \leq \psi_c(\xi, a, z)$.
\end{enumerate}
For part (a), it suffices to show that 
\begin{equation*}
    g(\xi) := 
    \EE_{z} \hat{\beta} \hat{R} 
    \left(u' \circ c \right) 
    [\hat{R}(a - \xi) + \hat{Y}, \hat{Z}] 
\end{equation*}
is continuous on $(0,a]$. To this end, fix $\xi \in (0,a]$ and 
$\xi_n \to \xi$. By \eqref{eq:bd_uprime} we have
\begin{align}
\label{eq:uppbd_ruprmc}
    \hat{\beta} \hat{R} \left( u' \circ c \right) 
        [\hat{R} \left( a - \xi \right) + \hat{Y}, \hat{Z}]  
    \leq 
        \hat{\beta} \hat{R} \left( u' \circ c \right) ( \hat{Y}, \hat{Z} )
    \leq 
        \hat{\beta} \hat{R} u'( \hat{Y}) + \hat{\beta} \hat{R} M.
\end{align}
The last term is integrable, as follows easily from Lemma~\ref{l:tefi}.
Hence the dominated
convergence theorem applies. From this fact and the continuity of $c$, we
obtain $g(\xi_n) \to g(\xi)$. Hence, $\xi \mapsto \psi_c(\xi, a, z)$ is
continuous.

Part (b) clearly holds, since $u'(\xi) \to \infty$ as 
$\xi \to 0$ and $\xi \mapsto \psi_c(\xi, a, z)$ is increasing and 
always finite (since it is continuous as shown in the previous paragraph). 
Part (c) is also trivial (just set $\xi = a$).
\end{proof}

\begin{proposition}
    \label{pr:self_map}
    We have $Tc \in \cC$ for all $c \in \cC$.
\end{proposition}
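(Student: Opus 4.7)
The plan is to verify, in turn, the four defining properties of $\cC$ for the function $Tc$: (i) $Tc$ is well-defined and takes values in $(0, a]$; (ii) $Tc$ is continuous; (iii) $a \mapsto Tc(a,z)$ is increasing; and (iv) $Tc$ satisfies the bound \eqref{eq:C4}. Property (i) is immediate from Proposition~\ref{pr:welldef_T}, so the real work lies in (ii)--(iv).

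For continuity (step ii), fix $z \in \ZZ$ (since $\ZZ$ is finite/discrete, joint continuity on $\SS_0$ reduces to continuity in $a$). Let $a_n \to a$ in $(0,\infty)$ and set $\xi_n := Tc(a_n, z)$, so that $u'(\xi_n) = \psi_c(\xi_n, a_n, z)$. First I would establish that $\{\xi_n\}$ lies in a compact subset of $(0,\infty)$: the upper bound is $\sup_n a_n < \infty$, and the lower bound away from $0$ follows because $\psi_c(\xi_n, a_n, z)$ is dominated by $\EE_z \hat\beta \hat R [u'(\hat Y) + M] + u'(a_n)$ (using $(u' \circ c)(x, \hat z) \leq u'(x) + M \leq u'(\hat Y) + M$ from \eqref{eq:C4}), which remains bounded. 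Hence $u'(\xi_n)$ stays bounded and $\xi_n$ is bounded away from $0$. For any convergent subsequence $\xi_{n_k} \to \xi^*$, I would use the dominated convergence theorem---with the integrable envelope in \eqref{eq:uppbd_ruprmc}---together with continuity of $u'$ and of $c$ to pass to the limit in $u'(\xi_{n_k}) = \psi_c(\xi_{n_k}, a_{n_k}, z)$, obtaining $u'(\xi^*) = \psi_c(\xi^*, a, z)$. By uniqueness in Proposition~\ref{pr:welldef_T}, $\xi^* = Tc(a,z)$, so the full sequence converges.

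For monotonicity (step iii), observe that $\xi \mapsto \psi_c(\xi, a, z)$ is increasing in $\xi$ (since $c$ is increasing and $u'$ is decreasing, so $(u' \circ c)[\hat R(a-\xi)+\hat Y, \hat Z]$ increases with $\xi$), while $a \mapsto \psi_c(\xi, a, z)$ is decreasing in $a$ (by the same logic together with the decreasing nature of $u'(a)$). Comparing the defining identity $u'(\xi) = \psi_c(\xi, a, z)$ at two levels $a_1 < a_2$, if $\xi_1 = Tc(a_1, z)$ then $\psi_c(\xi_1, a_2, z) \leq \psi_c(\xi_1, a_1, z) = u'(\xi_1)$; since the LHS is strictly decreasing and the RHS strictly increasing in the unknown $\xi$, the unique solution $\xi_2 = Tc(a_2, z)$ at level $a_2$ must satisfy $\xi_2 \geq \xi_1$.

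For the boundedness condition \eqref{eq:C4} (step iv), note that $u'(Tc(a,z)) = \psi_c(Tc(a,z), a, z) \geq u'(a)$, so $u'(Tc(a,z)) - u'(a) \geq 0$. For the upper bound, using the envelope derived above,
\begin{equation*}
u'(Tc(a,z)) - u'(a) \leq \EE_z \hat\beta \hat R \, (u'\circ c)[\hat R(a - Tc(a,z)) + \hat Y, \hat Z] \leq \EE_z \hat\beta \hat R [u'(\hat Y) + M].
\end{equation*}
The right-hand side is independent of $(a, Tc(a,z))$, and I would bound it uniformly in $z$ using conditional independence of the innovations and Assumptions~\ref{a:b0}--\ref{a:y0} (via Lemma~\ref{l:tefi} applied at $n=1$ together with the fact that $\EE u'(Y) < \infty$ forces $\EE u'(Y(\hat z, \eta)) < \infty$ for each $\hat z \in \ZZ$, since the Markov chain is irreducible and $\pi_Z$ is strictly positive on the finite set $\ZZ$). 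The main obstacle here is carrying out this integrability check cleanly; the factorization across innovations handles it, but it requires citing the Assumptions at the correct points.
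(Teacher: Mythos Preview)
Your proof is correct and covers the same four verifications as the paper, with Steps~(i), (iii), and (iv) essentially identical to the paper's Steps~3, 2, and 4. The one genuine difference is in the continuity argument. The paper proceeds by first showing that $\psi_c$ is \emph{jointly} continuous on $G$ (using the generalized Fatou lemma of \cite{feinberg2014fatou} to handle the dependence on $z$ through $P(z,\cdot)$) and then invoking an abstract parametric fixed-point continuity result (Theorem~B.1.4 of \cite{stachurski2009economic}), together with the observation that $\psi_c$ takes values in a compact interval $I(a,z)$ that varies continuously. You instead fix $z$ (legitimately, since $\ZZ$ is discrete), trap $\{\xi_n\}$ in a compact subinterval of $(0,\infty)$ via the envelope $\EE_z \hat\beta\hat R[u'(\hat Y)+M]$, and pass to the limit along subsequences by dominated convergence and uniqueness of the solution. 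Your route is more elementary and self-contained, avoiding both external references; the paper's route is slightly more systematic in that it packages the argument so it would carry over to a non-discrete $\ZZ$ without change. One minor quibble: in Step~(iii) you call $\psi_c(\cdot,a,z)$ ``strictly increasing,'' but it is only (weakly) increasing in $\xi$; this is harmless because $u'$ is strictly decreasing, which is all that is needed for the crossing comparison.
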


\begin{proof}
Fix $c \in \cC$ and let
$g \left( \xi, a, z \right) := 
  \EE_{z} \hat{\beta} \hat{R} 
          \left( u' \circ c \right) 
          [\hat{R} \left( a - \xi \right) + \hat{Y}, 
                 \, \hat{Z}]$.

\textbf{Step~1.} We show that $Tc$ is continuous. To apply a standard 
fixed point parametric continuity result such as Theorem~B.1.4 of 
\cite{stachurski2009economic}, we first show that $\psi_c$ is jointly 
continuous on the set $G$ defined in \eqref{eq:dom_T_opr}. This will be true
if $g$ is jointly continuous on $G$. For any $\{ (\xi_n, a_n, z_n) \}$ and 
$(\xi, a, z)$ in $G$ with $(\xi_n, a_n, z_n) \to (\xi, a, z)$, we 
need to show that $g(\xi_n, a_n, z_n) \to g(\xi, a, z)$. To that 
end, we define
\begin{align*}
 h_1 ( \xi, a, \hat{Z}, \hat{\epsilon}, \hat{\zeta}, \hat{\eta} ), \,
  h_2 ( \xi, a, \hat{Z}, \hat{\epsilon}, \hat{\zeta}, \hat{\eta} )   
  := \hat{\beta} \hat{R} 
     [ u' (\hat{Y}) + M ]
       \pm 
     \hat{\beta} \hat{R} \left( u' \circ c \right)
     [ \hat{R} \left( a - \xi \right) + \hat{Y}, \hat{Z} ],
\end{align*}
where $\hat{\beta} := \beta (\hat{Z}, \hat{\epsilon})$, 
$\hat{R} := R ( \hat{Z}, \hat{\zeta} )$ and 
$\hat{Y} := Y ( \hat{Z}, \hat{\eta} )$ as defined in 
\eqref{eq:RY_func}. Then $h_1$ and $h_2$ are continuous in $(\xi, a, \hat{Z})$ by 
the continuity of $c$  and nonnegative by \eqref{eq:uppbd_ruprmc}.

By Fatou's lemma and Theorem~1.1 of \cite{feinberg2014fatou},
\begin{align*}
    & \iiint 
    \sum_{\hat z \in \ZZ}
        h_i ( \xi, a, \hat{z}, \hat{\epsilon}, \hat{\zeta}, \hat{\eta} )
        P( z, \hat{z}) 
    \pi_\epsilon (\diff \hat{\epsilon}) 
    \pi_\zeta(\diff \hat{\zeta}) 
    \pi_\eta(\diff \hat{\eta})    \\
    & \leq 
    \iiint 
    \liminf_{n \to \infty}
    \sum_{\hat z \in \ZZ}
          h_i ( \xi_n, a_n, \hat{z}, \hat{\epsilon}, \hat{\zeta}, \hat{\eta} )
      P( z_n, \hat{z})
      \pi_\epsilon (\diff \hat{\epsilon}) 
      \pi_\zeta(\diff \hat{\zeta}) 
      \pi_\eta(\diff \hat{\eta})    \\
    & \leq \liminf_{n \to \infty}
        \iiint 
        \sum_{\hat z \in \ZZ}
        h_i ( \xi_n, a_n, \hat{z}, \hat{\epsilon}, \hat{\zeta}, \hat{\eta} )
        P( z_n, \hat z)
        \pi_\epsilon (\diff \hat{\epsilon}) 
        \pi_\zeta(\diff \hat{\zeta}) 
        \pi_\eta(\diff \hat{\eta}).
\end{align*}
This implies that
\begin{align*}
 \liminf_{n \to \infty} 
     \left(
      \pm 
      \EE_{z_n}        
         \hat{\beta} \hat{R} 
         \left( u' \circ c \right)
         [\hat{R} \left(a_n - \xi_n \right) + \hat{Y}, \hat{Z}]   
     \right)
 \geq \left( 
      \pm 
      \EE_{z}   
        \hat{\beta} \hat{R} 
        \left(u' \circ c \right)
        [ \hat{R} \left(a - \xi \right) + \hat{Y}, \hat{Z} ]  
      \right).
\end{align*}
The function $g$ is then continuous, since the above inequality is equivalent
to the statement
$\liminf_{n \to \infty} g(\xi_n, a_n, z_n)   
  \geq g(\xi, a, z)   
  \geq   
  \limsup_{n \to \infty} g(\xi_n, a_n, z_n)$.
Hence, $\psi_c$ is continuous on $G$, as was to be shown.
Moreover, since $\xi \mapsto \psi_c(\xi, a, z)$ takes values in the closed 
interval $I(a,z) := [u'(a), u'(a) + \EE_z \hat{\beta} \hat{R} (u'(\hat{Y}) + M)]$,
and the correspondence $(a, z) \mapsto I(a,z)$ is nonempty, compact-valued and 
continuous, Theorem~B.1.4 of \cite{stachurski2009economic} then implies that $Tc$ 
is continuous on $\SS_0$.

\textbf{Step 2.} We show that $Tc$ is increasing in $a$. Suppose that for 
some $z \in \ZZ$ and $a_1, a_2 \in (0, \infty)$ with $a_1 < a_2$, we have
$\xi_1 := Tc (a_1,z) > Tc (a_2,z) =: \xi_2$. Since $c$ is increasing in $a$ 
by assumption, $\psi_c$ is increasing in $\xi$ and decreasing in $a$. Then 
$u'(\xi_1) < u'(\xi_2) 
 = \psi_c(\xi_2, a_2, z) 
 \leq \psi_c(\xi_1, a_1, z) = u'(\xi_1)$. This is a contradiction.
 
\textbf{Step 3.} We have shown in Proposition~\ref{pr:welldef_T} that 
$Tc(a,z) \in (0,a]$ for all $(a,z) \in \SS_0$.

\textbf{Step 4.} We show that $\| u' \circ (Tc) - u' \| < \infty$. Since
$u'[Tc(a,z)] \geq u'(a)$, we have
\begin{align*}
    &\left| u'[Tc(a,z)] - u'(a) \right| 
    = u'[Tc(a,z)] - u'(a)    \\
    & \leq
      \EE_{z} \hat{\beta} \hat{R} 
              \left( u' \circ c \right)  
              ( \hat{R} [ a - Tc(a,z) ] + \hat{Y}, \, \hat{Z})    
    \leq \EE_{z} \hat{\beta} \hat{R} [ u'(\hat{Y}) + M ]
\end{align*}
for all $(a,z) \in \SS_0$. The right hand side is easily shown to be finite
via Lemma~\ref{l:tefi}.
\end{proof}

To prove Theorem~\ref{t:ctra_T}, let $\hH$ be all continuous functions $h:\SS_0 \to \RR$ that is decreasing in its first argument and $(a,z) \mapsto h(a,z)-u'(a)$ is bounded and nonnegative.
%Recall $\hH$ defined in the proof of Proposition~\ref{pr:complete}. 
Given $h \in \hH$, let $\tilde{T} h$ be the function mapping $(a,z) \in \SS_0$ into the $\kappa$ that solves
\begin{equation}
\label{eq:T_hat}
\kappa = \max \{ \EE_{z} \, \hat{\beta} \hat{R}  \,
    h (\hat{R} \, [a - (u')^{-1}(\kappa)] + \hat{Y}, \hat{Z}), \, u'(a) \}.
\end{equation}
Moreover, consider the bijection $H: \cC \to \hH$ defined by $Hc := u' \circ c$.
%The next lemma implies that $\tilde{T}$ is a well-defined self-map on $\hH$, 
%as well as topologically conjugate to $T$ under the bijection 
%$H: \cC \to \hH$ defined by $Hc := u' \circ c$.

\begin{lemma}
\label{lm:conjug}
  The operator $\tilde{T} \colon \hH \to \hH$ and satisfies $\tilde{T} H = H T $ on $\cC$.
\end{lemma}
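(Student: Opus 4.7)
My plan is to exhibit $H$ as a conjugacy between $T$ and $\tilde T$, so that the self-map property of $\tilde T$ follows from Proposition~\ref{pr:self_map}. The first step will be to verify the bijection claim. Given $c \in \cC$, the map $u'\circ c$ is continuous, decreasing in $a$ (because $c$ is increasing in $a$ and $u'$ is decreasing), dominates $u'$ pointwise (because $c(a,z)\leq a$), and the difference $(u'\circ c)(a,z)-u'(a)$ is bounded by \eqref{eq:C4}; so $Hc \in \hH$. Conversely, given $h \in \hH$, set $c := (u')^{-1}\circ h$; the bound $h \geq u'$ gives $0 < c(a,z) \leq a$, monotonicity of $h$ in $a$ combined with monotonicity of $(u')^{-1}$ gives $c$ increasing in $a$, continuity is preserved, and $\|u'\circ c - u'\| = \|h - u'\|<\infty$. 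These operations are mutually inverse, so $H\colon \cC \to \hH$ is a bijection.

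Next I would establish the intertwining identity $\tilde T H = H T$ pointwise on $\cC$. Fix $c \in \cC$ and $(a,z)\in\SS_0$, and let $\xi := Tc(a,z)\in(0,a]$, so by \eqref{eq:T_opr}--\eqref{eq:keypart_T_opr},
\begin{equation*}
u'(\xi) = \max\left\{\EE_z\,\hat\beta\hat R\,(u'\circ c)\bigl(\hat R(a-\xi)+\hat Y,\hat Z\bigr),\; u'(a)\right\}.
\end{equation*}
Setting $h := u'\circ c$ and $\kappa := u'(\xi)$, so that $(u')^{-1}(\kappa)=\xi$, the display above is precisely the equation \eqref{eq:T_hat} defining $\tilde T h(a,z)$. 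It remains to argue that $\kappa=u'(\xi)$ is the unique such solution: the left-hand side of \eqref{eq:T_hat} is strictly increasing in $\kappa$, while the right-hand side is decreasing in $\kappa$ (because $(u')^{-1}$ is decreasing and $h$ is decreasing in its first argument), so any solution is unique, mirroring Proposition~\ref{pr:welldef_T}. Existence of $\tilde T h(a,z)$ for an arbitrary $h \in \hH$ is similarly obtained from an intermediate-value argument together with the domination bound in \eqref{eq:uppbd_ruprmc}, which makes the integrand in \eqref{eq:T_hat} finite and continuous in $\kappa$. This yields $\tilde T(Hc)(a,z) = u'(\xi) = (HTc)(a,z)$.

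Finally, I would conclude the self-map claim by conjugation. For any $h \in \hH$, the inverse $H^{-1}h \in \cC$; by Proposition~\ref{pr:self_map}, $T(H^{-1}h) \in \cC$; and by the intertwining identity,
\begin{equation*}
\tilde T h \;=\; \tilde T\bigl(H H^{-1} h\bigr) \;=\; H\,T\,H^{-1} h \;\in\; H(\cC) \;=\; \hH.
\end{equation*}
Thus $\tilde T\colon \hH \to \hH$, and the identity $\tilde T H = H T$ holds on $\cC$.

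The main obstacle I anticipate is confirming that the $\kappa$ defining $\tilde T h$ in \eqref{eq:T_hat} genuinely exists and is unique for an arbitrary $h \in \hH$ (not merely for $h$ of the form $u'\circ c$); however, this reduces to a direct transcription of the argument in Proposition~\ref{pr:welldef_T}, using the monotonicity of $h$ in $a$, the boundedness of $h - u'$, and the integrability bound of Lemma~\ref{l:tefi} to justify continuity via dominated convergence.
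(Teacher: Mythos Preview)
Your proposal is correct and follows essentially the same route as the paper: establish the intertwining identity $\tilde T H = H T$ pointwise by observing that $\kappa = u'(\xi)$ with $\xi = Tc(a,z)$ solves \eqref{eq:T_hat}, then deduce $\tilde T = H T H^{-1}$ and invoke Proposition~\ref{pr:self_map} to get $\tilde T(\hH) \subset \hH$. You supply more detail than the paper does---in particular, you spell out why $H$ is a bijection (which the paper merely asserts in the text preceding the lemma) and you discuss uniqueness of the $\kappa$ solving \eqref{eq:T_hat}---but the architecture is the same; your separate existence/uniqueness discussion for $\tilde T h$ with general $h \in \hH$ is in fact redundant once the bijection is in hand, since every such $h$ is already of the form $u' \circ c$.
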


\begin{proof}%[Proof of Lemma~\ref{lm:conjug}]
Pick any $c \in \cC$ and $(a,z) \in \SS_0$. Let $\xi := Tc(a,z)$, then $\xi$ solves
\begin{equation}
\label{eq:Tc_eq}
    u'(\xi) = 
     \max \{ \EE_{z} \, \hat{\beta} \hat{R} 
           \left(u' \circ c \right) [\hat{R} \left(a - \xi \right) + \hat{Y}, \hat{Z}],
           \, u'(a) \}.
\end{equation}
We need to show that $HTc$ and $\tilde{T} Hc$ evaluate to the same number at 
$(a,z)$. In other words, we need to show that $u'(\xi)$ is the solution to
\begin{equation*}
  \kappa = 
    \max \{ \EE_{z} \, \hat{\beta} \hat{R}
          \left( u' \circ c \right)
          (\hat{R} \, [a - (u')^{-1} (\kappa)] + \hat{Y}, \hat{Z}),
          \, u'(a) \}.
\end{equation*}
But this is immediate from \eqref{eq:Tc_eq}. Hence, we have shown that
$\tilde{T} H = H T$ on $\cC$. Since $H \colon \cC \to \hH$ is a bijection,
we have $\tilde{T} = HT H^{-1}$. Since in addition $T \colon \cC \to \cC$ by 
Proposition~\ref{pr:self_map}, we have $\tilde{T} \colon \hH \to \hH$. This 
concludes the proof.
\end{proof}

\begin{lemma}
    \label{lm:monot}
    $\tilde{T}$ is order preserving on $\hH$. That is, 
    $\tilde{T} h_1 \leq \tilde{T} h_2$ for all $h_1, h_2 \in \hH$ with $h_1 \leq h_2$.
\end{lemma}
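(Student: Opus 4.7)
My plan is to give a short monotonicity argument by contradiction, exploiting the fact that the implicit equation defining $\tilde{T}h$ involves $h$ composed with a decreasing function of $\kappa$.

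Fix $(a,z) \in \SS_0$ and $h_1, h_2 \in \hH$ with $h_1 \leq h_2$ pointwise. For $i \in \{1,2\}$, let $\kappa_i := \tilde{T} h_i(a,z)$, which exists and is uniquely determined via the bijection $H$ and Lemma~\ref{lm:conjug}. For each $i$, define
\begin{equation*}
    F_i(\kappa)
    := \max \left\{
        \EE_z \, \hat{\beta} \hat{R} \,
            h_i\bigl( \hat{R}[a - (u')^{-1}(\kappa)] + \hat{Y}, \, \hat{Z} \bigr),
        \; u'(a)
    \right\},
\end{equation*}
so that $\kappa_i = F_i(\kappa_i)$ by the defining equation \eqref{eq:T_hat}.

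The key observation is that each $F_i$ is decreasing in $\kappa$. Indeed, as $\kappa$ increases, $(u')^{-1}(\kappa)$ decreases (since $u'$ is strictly decreasing), so $a - (u')^{-1}(\kappa)$ increases, so $\hat{R}[a - (u')^{-1}(\kappa)] + \hat{Y}$ increases, and therefore $h_i$ evaluated at this point decreases (since $h_i \in \hH$ is decreasing in its first argument). Taking expectations and then the maximum with $u'(a)$ preserves this monotonicity. Furthermore, $F_1(\kappa) \leq F_2(\kappa)$ for every $\kappa$, as an immediate consequence of $h_1 \leq h_2$.

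Now suppose for contradiction that $\kappa_1 > \kappa_2$. Using the fact that $F_1$ is decreasing and then the comparison $F_1 \leq F_2$,
\begin{equation*}
    \kappa_1 = F_1(\kappa_1) \leq F_1(\kappa_2) \leq F_2(\kappa_2) = \kappa_2,
\end{equation*}
contradicting $\kappa_1 > \kappa_2$. Hence $\tilde{T} h_1(a,z) = \kappa_1 \leq \kappa_2 = \tilde{T} h_2(a,z)$. Since $(a,z)$ was arbitrary, $\tilde{T} h_1 \leq \tilde{T} h_2$ on $\SS_0$. There is no significant obstacle; the only subtlety is to recognize that the implicit fixed-point defining $\tilde{T}$ yields a decreasing self-map $F_i$, which is precisely what makes the pointwise comparison $F_1 \leq F_2$ propagate to the fixed points.
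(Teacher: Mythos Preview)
Your proof is correct and follows essentially the same approach as the paper: a contradiction argument assuming $\kappa_1 > \kappa_2$ and chaining two monotonicity inequalities (one from $h_1 \leq h_2$, one from $h_i$ being decreasing in its first argument) to reach $\kappa_1 \leq \kappa_2$. The only cosmetic difference is that you apply the two inequalities in the opposite order and package them via the auxiliary maps $F_i$, while the paper writes the chain directly.
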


\begin{proof}%[Proof of Lemma~\ref{lm:monot}]
Let $h_1, h_2$ be functions in $\hH$ with $h_1 \leq h_2$. Suppose to the 
contrary that there exists $(a,z) \in \SS_0$ such that 
$\kappa_1 := \tilde{T} h_1 (a,z) > \tilde{T} h_2 (a,z) =: \kappa_2$. Since functions in $\hH$ are decreasing in the first argument, we have
\begin{align*}
  \kappa_1 &= 
      \max \{ \EE_{z} \, \hat{\beta} \hat{R} \, 
              h_1 ( \hat{R} \, [a - (u')^{-1}(\kappa_1)] + \hat{Y}, \hat{Z}),
              \, u'(a) \}      \\
     & \leq 
      \max \{ \EE_{z} \, \hat{\beta} \hat{R} \,
              h_2 ( \hat{R} \, [a - (u')^{-1}(\kappa_1)] + \hat{Y}, \hat{Z}),
              \, u'(a) \}      \\
     & \leq 
      \max \{ \EE_{z} \, \hat{\beta} \hat{R} \, 
              h_2 ( \hat{R} \, [a - (u')^{-1}(\kappa_2)] + \hat{Y}, \hat{Z}),
              \, u'(a) \}
     = \kappa_2.
\end{align*}
This is a contradiction. Hence, $\tilde{T}$ is order preserving.
\end{proof}

\begin{lemma}
    \label{lm:ctra_That}
    There exists an $n \in \NN$ and $\theta < 1$ such that 
    $\tilde{T}^n$ is a contraction mapping of modulus $\theta$ on $(\hH, d_{\infty})$.
\end{lemma}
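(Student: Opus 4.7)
The plan is to prove an $n$-step contraction by iterating a pointwise comparison estimate that is stable under the action of $\tilde{T}$, then invoking Lemma~\ref{l:theta} under Assumption~\ref{a:rb0} to drive the modulus strictly below one. The backbone is the following one-step bound: if $h_1, h_2 \in \hH$ satisfy $|h_1(a', z') - h_2(a', z')| \leq \psi(z')$ for all $(a',z') \in \SS_0$ and some $\psi \colon \ZZ \to \RR_+$, then, for every $(a,z) \in \SS_0$,
\begin{equation*}
    |\tilde{T} h_1(a,z) - \tilde{T} h_2(a,z)| \leq (L_{\beta R} \psi)(z)
    \quad \text{where} \quad
    (L_{\beta R}\psi)(z) := \EE_z \, \hat{\beta} \hat{R} \, \psi(\hat{Z}).
\end{equation*}

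To establish this, set $\kappa_i := \tilde{T} h_i(a,z)$ and, by symmetry, assume $\kappa_2 \geq \kappa_1$. Substituting $h_2 \leq h_1 + \psi$ into the defining equation \eqref{eq:T_hat} for $\kappa_2$ and using the elementary inequality $\max\{A + c, B\} \leq \max\{A, B\} + c$ for $c \geq 0$ yields
\begin{equation*}
    \kappa_2 \leq
    \max \left\{ \EE_z \, \hat{\beta} \hat{R} \, h_1(\hat{R}[a - (u')^{-1}(\kappa_2)] + \hat{Y}, \hat{Z}), \, u'(a) \right\} + (L_{\beta R}\psi)(z).
\end{equation*}
Because $\kappa_2 \geq \kappa_1$ forces $(u')^{-1}(\kappa_2) \leq (u')^{-1}(\kappa_1)$, and because $h_1$ is decreasing in its first argument, the expression inside the $\max$ is dominated by the corresponding one with $\kappa_1$ in place of $\kappa_2$, which itself equals $\kappa_1$ by the definition of $\tilde{T}$. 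Hence $\kappa_2 - \kappa_1 \leq (L_{\beta R}\psi)(z)$, completing the comparison step.

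Next I would iterate. Setting $\epsilon := d_\infty(h_1, h_2)$ gives $|h_1 - h_2| \leq \epsilon \,\1$ pointwise, and Lemma~\ref{lm:conjug} ensures $\tilde{T}^k h_i \in \hH$ for every $k$, so the comparison bound applies at each iterate. Crucially, the output bound at each step depends only on $z$, which keeps the induction closed; combining this with formula \eqref{eq:lbev} from Lemma~\ref{l:besr} then produces
\begin{equation*}
    |\tilde{T}^n h_1(a,z) - \tilde{T}^n h_2(a,z)| \leq \epsilon \, (L_{\beta R}^n \1)(z) = \epsilon \, \EE_z \prod_{t=1}^{n} \beta_t R_t
\end{equation*}
for every $n \in \NN$ and $(a,z) \in \SS_0$. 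Under Assumption~\ref{a:rb0}, Lemma~\ref{l:theta} applied with $\phi_t = \beta_t R_t$ supplies $N \in \NN$ and $\delta < 1$ with $\max_{z \in \ZZ} \EE_z \prod_{t=1}^n \beta_t R_t < \delta^n$ for $n \geq N$; taking any such $n$ and setting $\theta := \delta^n < 1$ gives $d_\infty(\tilde{T}^n h_1, \tilde{T}^n h_2) \leq \theta \, d_\infty(h_1, h_2)$, the desired $n$-step contraction.

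The main obstacle is the clean execution of the one-step comparison, since $\tilde{T}$ is defined implicitly through an equation in which $\kappa$ appears on both sides via the term $(u')^{-1}(\kappa)$. Naively inserting $h_2 \leq h_1 + \psi$ leaves $h_1$ evaluated at an argument that depends on $\kappa_2$ rather than $\kappa_1$, blocking a direct comparison with $\kappa_1$. The sign-splitting argument above circumvents this obstacle by exploiting membership in $\hH$—in particular, the decreasing property of $h_1$ in its first argument—to absorb the discrepancy in the opposite direction to the sign of $\kappa_2 - \kappa_1$.
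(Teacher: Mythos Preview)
Your proof is correct and rests on the same core manipulation as the paper's: using that $h_1$ is decreasing in its first argument and $(u')^{-1}$ is decreasing to resolve the implicit $\kappa$-dependence, then iterating to obtain the bound $\EE_z \prod_{t=1}^n \beta_t R_t$ and invoking Lemma~\ref{l:theta} under Assumption~\ref{a:rb0}.

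The packaging differs slightly. The paper proceeds via Blackwell's sufficient conditions: it first isolates the order-preserving property of $\tilde{T}$ as a separate lemma (Lemma~\ref{lm:monot}), then verifies the discounting condition $\tilde{T}^k(h+\gamma)(a,z) \leq \tilde{T}^k h(a,z) + \gamma\,\EE_z \prod_{i=1}^k \beta_i R_i$ by induction on $k$, tracking the state-dependent slack $f(z) = \gamma\,\EE_z \prod_{i=1}^k \beta_i R_i$ explicitly at each step. You instead establish a two-sided one-step comparison bound $|\tilde{T}h_1 - \tilde{T}h_2| \leq L_{\beta R}\psi$ directly, which subsumes the monotonicity lemma, and then iterate cleanly through the linear operator $L_{\beta R}^n$ acting on functions of $z$, linking to formula~\eqref{eq:lbev}. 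Your route is marginally more streamlined since it avoids the separate Blackwell scaffolding; the paper's route is more modular since the monotonicity lemma is reusable elsewhere. Substantively the two arguments are the same.
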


\begin{proof}%[Proof of Lemma~\ref{lm:ctra_That}]
    Since $\tilde{T}$ is order preserving and $\hH$ is closed under the
    addition of nonnegative constants, based on
    \cite{blackwell1965discounted}, it remains to verify the existence of $n
    \in \NN$ and $\theta < 1$ such that
      $\tilde{T}^n (h+\gamma) \leq \tilde{T}^n h + \theta \gamma$
        for all $h \in \hH $  and $\gamma \geq 0$.
    By Lemma~\ref{l:theta} and Assumption~\ref{a:rb0}, it suffices to show that for all $k \in \NN$ and
    $(a,z) \in \SS_0$, we have
    \begin{equation}
    \label{eq:That_k}
        \tilde{T}^k (h+ \gamma) (a,z)
        \leq \tilde{T}^k h(a,z) + \gamma \, \EE_z \prod_{i=1}^k \beta_i R_i .
    \end{equation}
    Fix $h \in \hH$, $\gamma \geq 0$, and let $h_{\gamma} (a,z) := h(a, z) +
    \gamma$. By the definition of $\tilde{T}$, we have
    \begin{align*}
    \tilde{T} h_\gamma (a,z) 
    &= \max \{ \EE_{z} \, \hat{\beta} \hat{R} \, 
              h_{\gamma} (\hat{R} \, [a - (u')^{-1}(\tilde{T} h_{\gamma})(a,z)] + \hat{Y}, \hat{Z}),
              u'(a) \}    \\
    &\leq 
       \max \{ \EE_{z} \, \hat{\beta} \hat{R} \, 
               h ( \hat{R} \, [a - (u')^{-1} (\tilde{T} h_{\gamma})(a,z)] + \hat{Y}, \hat{Z}), u'(a) \}
        + \gamma \EE_z \beta_1 R_1    \\
    &\leq
       \max \{ \EE_{z} \, \hat{\beta} \hat{R} \,
               h ( \hat{R} \, [a - (u')^{-1} (\tilde{T} h)(a,z)] + \hat{Y}, \hat{Z}), u'(a) \}
        + \gamma \EE_z \beta_1 R_1.
    \end{align*}
    Here, the first inequality is elementary and the second is due to the fact
    that $h \leq h_\gamma$ and $\tilde{T}$ is order preserving. Hence,
    $\tilde{T} (h+ \gamma) (a,z ) \leq \tilde{T} h (a,z) + \gamma \EE_z
    \beta_1 R_1$ and \eqref{eq:That_k} holds for $k = 1$. Suppose
    \eqref{eq:That_k} holds for arbitrary $k$. It remains to show that it
    holds for $k+1$. For $z \in \ZZ$, define
        $f(z) := \gamma \EE_z \beta_1 R_1 \cdots \beta_k R_k$.
    By the induction hypothesis, the monotonicity of $\tilde{T}$ and the Markov property,
    \begin{align*}
        \tilde{T}^{k+1} h_\gamma (a,z)
        &= \max \{ \EE_{z} \, \hat{\beta} \hat{R} \, (\tilde{T}^k h_{\gamma})
            (\hat{R} \, [a - (u')^{-1}(\tilde{T}^{k+1} h_{\gamma})(a,z)] + \hat{Y}, \hat{Z}), u'(a) \}    \\
       & \leq \max \{ \EE_z \, \hat{\beta} \hat{R} \, (\tilde{T}^k h + f) 
           (\hat{R} \, [a - (u')^{-1}(\tilde{T}^{k+1} h_{\gamma})(a,z)] + \hat{Y}, \hat{Z}), u'(a) \}    \\
       & \leq \max \{ \EE_z \, \hat{\beta} \hat{R} \, (\tilde{T}^k h) (\hat{R} \, 
           [a - (u')^{-1}(\tilde{T}^{k+1} h_{\gamma})(a,z)] + \hat{Y}, \hat{Z}), u'(a) \}    \\
       & \quad + \EE_{z} \, \beta_1 R_1 f(Z_1)    \\
       & \leq  \max \{ \EE_z \, \hat{\beta} \hat{R} \, (\tilde{T}^k h) ( \hat{R} \,
           [a - (u')^{-1}(\tilde{T}^{k+1} h)(a,z)] + \hat{Y}, \hat{Z}), u'(a) \}     \\ 
       &\quad + \gamma \EE_{z} \, \beta_1 R_1 \, \EE_{Z_1} \beta_1 R_1 \cdots \beta_k R_k    \\
       &= \tilde{T}^{k+1} h(a,z) + \gamma \EE_z \, \beta_1 R_1 \cdots \beta_{k+1} R_{k+1}.
    \end{align*} 
    Hence, \eqref{eq:That_k} is verified by induction. This concludes the proof.
\end{proof}

\begin{proof}[Proof of Theorem~\ref{t:ctra_T}]
    Let $n$ and $\theta$ be as in Lemma~\ref{lm:ctra_That}.
    In view of Propositions \ref{pr:suff_optpol}, \ref{pr:complete} and
    \ref{pr:self_map}, to show that $T^n$ is a contraction and verify claims
    (1)--(3) of Theorem~\ref{t:ctra_T}, based on the Banach contraction
    mapping theorem, it suffices to show that
      $\rho(T^n c, T^n d) \leq \theta \rho(c,d)$ for all $c,d \in \cC$.
    To this end, pick any $c,d \in \cC$. Note that the topological conjugacy result established in Lemma~\ref{lm:conjug} implies that $\tilde{T} = H T H^{-1}$. Hence,
        $\tilde{T}^n = (H T H^{-1}) \cdots (H T H^{-1}) = H T^n H^{-1} $ and 
        $\tilde{T}^n H = H T^n$.
    By the definition of $\rho$ and the contraction property established in Lemma~\ref{lm:ctra_That},
    \begin{equation*}
      \rho(T^n c, T^n d) = d_{\infty}(H T^n c, H T^n d) 
                   = d_{\infty}(\tilde{T}^n Hc, \tilde{T}^n Hd)
                   \leq \theta d_{\infty}(Hc, Hd)
                   = \theta \rho(c,d).
    \end{equation*}
    Hence, $T^n$ is a contraction and claims (1)--(3) are verified.
\end{proof}

Our next goal is to prove Proposition~\ref{pr:monotonea}. To begin with, we define
\begin{equation*}
\cC_0 = \left\{
c \in \cC \colon a \mapsto a - c(a,z) \text{ is increasing for all } z \in \ZZ \right\}.
\end{equation*}

\begin{lemma}
	\label{lm:cC'}
	$\cC_0$ is a closed subset of $\cC$, and $Tc \in \cC_0$ for all $c \in \cC_0$. 
\end{lemma}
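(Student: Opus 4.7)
My plan is to verify the two claims in order: closedness of $\cC_0$ in $\cC$ first, and then invariance $T(\cC_0) \subset \cC_0$.

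For closedness, suppose $\{c_n\} \subset \cC_0$ with $\rho(c_n, c) \to 0$ for some $c \in \cC$. Since $\rho$-convergence is uniform convergence of $u' \circ c_n$ to $u' \circ c$ on $\SS_0$, and the utility assumptions render $u'$ a continuous strictly decreasing bijection from $(0, \infty)$ onto its image, pointwise convergence $c_n(a,z) \to c(a,z)$ follows. For any $z \in \ZZ$ and $a_1 < a_2$, the inequality $a_1 - c_n(a_1,z) \leq a_2 - c_n(a_2,z)$ holds for every $n$ and is preserved in the pointwise limit, giving $c \in \cC_0$.

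For invariance, fix $c \in \cC_0$. By Proposition~\ref{pr:self_map} we already have $Tc \in \cC$, so only the savings monotonicity needs to be checked. I argue by contradiction: fix $z \in \ZZ$ and $a_1 < a_2$, write $\xi_i := Tc(a_i, z)$ and $s_i := a_i - \xi_i$, and suppose $s_1 > s_2$. Since $s_1 > s_2 \geq 0$ forces $\xi_1 < a_1$, the defining relation \eqref{eq:T_opr} for $\xi_1$ reduces to $u'(\xi_1) = \EE_z \hat\beta \hat R (u' \circ c)(\hat R s_1 + \hat Y, \hat Z)$. I then split on whether $\xi_2 < a_2$ (interior case) or $\xi_2 = a_2$ (boundary case). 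In the interior case, $s_1 > s_2$ together with the antitonicity of $u' \circ c$ in its first argument (inherited from $c \in \cC$) yields $u'(\xi_1) \leq u'(\xi_2)$, whereas $\xi_1 = a_1 - s_1 < a_2 - s_2 = \xi_2$ combined with $u'' < 0$ gives the reverse strict inequality --- a contradiction. In the boundary case $s_2 = 0 < s_1$, and the binding structure in \eqref{eq:keypart_T_opr} yields $u'(a_2) \geq \EE_z \hat\beta\hat R (u' \circ c)(\hat Y, \hat Z)$; chaining $u'(\xi_1) \leq \EE_z \hat\beta\hat R (u' \circ c)(\hat Y, \hat Z) \leq u'(a_2) = u'(\xi_2)$ while $\xi_1 < a_1 < a_2 = \xi_2$ again produces a contradiction.

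The only subtlety lies in the boundary case, which requires carefully extracting from the maximum structure in \eqref{eq:keypart_T_opr} the inequality $u'(a_2) \geq \EE_z \hat\beta\hat R (u' \circ c)(\hat Y, \hat Z)$ that must hold whenever the borrowing constraint binds at $a_2$. Once this is in hand, the monotonicity of $c$ makes both the interior and boundary cases collapse into essentially the same chain of inequalities.
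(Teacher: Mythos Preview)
Your proof is correct and follows essentially the same contradiction argument as the paper. The only difference is cosmetic: you split into an interior case ($\xi_2 < a_2$) and a boundary case ($\xi_2 = a_2$) at the point $a_2$, whereas the paper handles both at once by observing that the defining relation $u'(\xi_2) = \max\{\EE_z \hat\beta\hat R (u'\circ c)(\hat R s_2 + \hat Y,\hat Z),\, u'(a_2)\}$ automatically yields $u'(\xi_2) \geq \EE_z \hat\beta\hat R (u'\circ c)(\hat R s_2 + \hat Y,\hat Z)$ regardless of which branch of the max is active, so the chain $u'(\xi_1) \leq u'(\xi_2)$ goes through in one stroke.
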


\begin{proof}%[Proof of Lemma~\ref{lm:cC'}]
	To see that $\cC_0$ is closed, for a given sequence $\{c_n\}$ in $\cC_0$ and $c \in \cC$ with $\rho(c_n, c) \to 0$, we need to show that $c \in \cC_0$. This obviously holds since $a \mapsto a - c_n(a,z)$ is increasing for all $n$, and, in addition, $\rho(c_n,c) \to 0$ implies that $c_n (a,z) \to c(a,z)$ for all $(a,z) \in \SS_0$.
	
	Fix $c \in \cC_0$. We now show that $\xi := Tc \in \cC_0$. Since $\xi \in
	\cC$ by Proposition~\ref{pr:self_map}, it remains to show that $a \mapsto
	a - \xi(a,z)$ is increasing. Suppose the claim is false, then there exist
	$z \in \ZZ$ and $a_1, a_2 \in (0, \infty)$ such that 
	$a_1 < a_2 $
	and $a_1 - \xi(a_1, z) > a_2 - \xi (a_2, z)$.
	Since $a_1 - \xi (a_1, z) \geq 0$, $a_2 - \xi (a_2, z) \geq 0$ and $\xi(a_1,z) \leq \xi (a_2, z)$ by Proposition~\ref{pr:self_map}, we have
	$\xi(a_1, z) < a_1$ and  $\xi(a_1, z) < \xi(a_2, z)$.
	However, based on the property of the time iteration operator, we then have
	\begin{align*}
	(u' \circ \xi) (a_1, z) 
	&= \EE_z \hat{\beta} \hat{R} (u'\circ c) 
	( \hat{R} \, [a_1 - \xi(a_1,z)] + \hat{Y}, \hat{Z} )    \\
	&\leq \EE_z \hat{\beta} \hat{R} (u'\circ c) 
	( \hat{R} \, [a_2 - \xi(a_2,z)] + \hat{Y}, \hat{Z} ) 
	\leq (u' \circ \xi) (a_2, z),
	\end{align*}
	which implies that $\xi(a_1, z) \geq \xi(a_2, z)$. This is a contradiction. Hence, 
	$a \mapsto a - \xi (a,z)$ is increasing, and $T$ is a self-map on $\cC_0$.
\end{proof}

\begin{proof}[Proof of Proposition~\ref{pr:monotonea}]
	%Since $T$ is a self-map on the closed subset $\cC'$ by Lemma~\ref{lm:cC'} and $T c^* = c^*$, we have $c^* \in \cC'$ by Theorem~\ref{t:ctra_T}. Hence, the stated claims hold.
	Since $T$ maps elements of the closed subset $\cC_0$ into itself by Lemma~\ref{lm:cC'}, Theorem~\ref{t:ctra_T} implies that $c^* \in \cC_0$. Hence, the stated claims hold.
\end{proof}

\begin{proof}[Proof of Proposition~\ref{pr:monotoneY}]
	Let $T_j$ be the time iteration operator for the income process $j$
	established in Proposition~\ref{pr:self_map}. It suffices to show $T_1c
	\leq T_2c$ for all $c \in \cC$. To see this, note that by
	Lemma~\ref{lm:monot}, we have $T_jc_1 \leq T_jc_2$ whenever $c_1 \leq
	c_2$. Therefore if $T_1c \leq T_2c$ for all $c \in \cC$, we obtain $T_1c_1
	\leq T_1c_2 \leq T_2c_2$. Iterating this starting from any $c\in \cC$, by
	Theorem~\ref{t:ctra_T}, it follows that
	$c_1^* = \lim_{n \to \infty}(T_1)^nc \leq \lim_{n \to \infty}(T_2)^nc
	= c_2^*$,
	completing the proof.
	
	To show that $T_1c \leq T_2c$ for any $c\in \cC$, take any $(a,z)\in
	\SS_0$ and define $\xi_j=(T_jc)(a,z)$. To show $\xi_1 \leq \xi_2$, suppose
	on the contrary that $\xi_1 > \xi_2$. Since $c$ is increasing in $a$ and
	$u''<0$ (hence $u'$ is decreasing), it follows from the definition of the
	time iteration operator in \eqref{eq:T_opr}--\eqref{eq:keypart_T_opr},
	$Y_1 \leq Y_2$, $u''<0$ and the monotonicity of $c \in \cC$ that
	\begin{align*}
	u'(\xi_2)>u'(\xi_1)
	&=\max \{ \EE_{z} \, \hat{\beta} \hat{R} \, (u' \circ c) [\hat{R}(a - \xi_1) + \hat{Y}_1, 
	\hat{Z}], u'(a) \}    \\
	&\geq \max \{ \EE_{z} \, \hat{\beta} \hat{R} \, (u' \circ c)[\hat{R}(a - \xi_2) + \hat{Y}_2, 
	\hat{Z}], u'(a) \} 
	= u'(\xi_2),
	\end{align*}
	which is a contradiction.
\end{proof}

To prove Proposition~\ref{pr:optpol_concave}, we need several lemmas.

\begin{lemma}
	\label{lm:binding}
	For all $c \in \cC_0$, there exists a threshold $\bar{a}_c(z)$ such that $Tc(a,z) = a$ if and only if $a \leq \bar{a}_c (z)$. In particular, there exists a threshold $\bar{a}(z)$ such that $c^*(a,z) = a$ if and only if $a \leq \bar{a}(z)$.
\end{lemma}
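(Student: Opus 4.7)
The plan is to reduce the binding condition ``$Tc(a,z)=a$'' to a pointwise comparison between $u'(a)$ and a constant $K_c(z)$ that does not depend on $a$, and then invert using the strict monotonicity of $u'$.

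Fix $c \in \cC_0$ and $z \in \ZZ$. By the definition of $T$ in \eqref{eq:T_opr}--\eqref{eq:keypart_T_opr} and Proposition~\ref{pr:welldef_T}, the value $\xi := Tc(a,z)$ is the unique element of $(0,a]$ satisfying
\begin{equation*}
    u'(\xi) = \max\left\{ \EE_z \hat\beta \hat R (u'\circ c)\bigl(\hat R(a-\xi) + \hat Y,\hat Z\bigr),\, u'(a) \right\}.
\end{equation*}
Setting $\xi=a$ collapses the conditional expectation to a quantity independent of $a$, namely
\begin{equation*}
    K_c(z) := \EE_z \hat\beta \hat R (u'\circ c)(\hat Y,\hat Z),
\end{equation*}
so $\xi=a$ satisfies the equation precisely when $u'(a)\geq K_c(z)$. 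Uniqueness of $\xi$ in $(0,a]$, together with the fact that $\xi\mapsto u'(\xi)$ is strictly decreasing while the $\max$ on the right is increasing in $\xi$ (because $c\in\cC_0$ makes $u'\circ c$ decreasing in its first argument), then upgrades this to the biconditional $Tc(a,z)=a \iff u'(a)\geq K_c(z)$. Finiteness of $K_c(z)$ follows from the uniform bound \eqref{eq:bd_uprime} and Lemma~\ref{l:tefi}.

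Next, set
\begin{equation*}
    \bar a_c(z) := \sup\{a>0 : u'(a)\geq K_c(z)\},
\end{equation*}
with the convention $\bar a_c(z)=+\infty$ when $K_c(z)\leq \lim_{a\to\infty} u'(a)$. Since $u'$ is continuous and strictly decreasing on $(0,\infty)$ with $u'(0^+)=+\infty$, the set $\{a>0 : u'(a)\geq K_c(z)\}$ is an interval of the form $(0,\bar a_c(z)]$ (possibly all of $(0,\infty)$). Combining this with the equivalence of the previous paragraph gives the first assertion, $Tc(a,z)=a \iff a\leq \bar a_c(z)$.

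Finally, the statement for $c^*$ is obtained by specializing to $c=c^*$: Lemma~\ref{lm:cC'} shows that $T$ self-maps the closed subset $\cC_0$, and Theorem~\ref{t:ctra_T} gives $T^k c\to c^*$ in $(\cC,\rho)$ for every $c\in\cC$, so $c^*\in\cC_0$ (this is exactly the argument used in the proof of Proposition~\ref{pr:monotonea}). Using $c^*=Tc^*$ then delivers the threshold $\bar a(z):=\bar a_{c^*}(z)$ with the required property. The only mildly delicate point is the edge case in which $K_c(z)$ falls outside the range of $u'$ on $(0,\infty)$, which the supremum convention absorbs cleanly; beyond this bit of bookkeeping, no step presents a serious obstacle.
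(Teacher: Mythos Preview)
Your proof is correct and follows essentially the same route as the paper: both identify the threshold via the constant $K_c(z)=\EE_z\hat\beta\hat R(u'\circ c)(\hat Y,\hat Z)$ obtained by setting $\xi=a$ in the fixed-point equation, and then invert through the strict monotonicity of $u'$. The only notable difference is that the paper appeals to Lemma~\ref{lm:cC'} (monotonicity of $a\mapsto a-Tc(a,z)$) to reduce to the single implication $\xi(a,z)<a\Rightarrow a>\bar a_c(z)$, whereas you obtain the full biconditional directly from uniqueness of the solution in $(0,a]$; your version is slightly leaner in that it does not require the $\cC_0$ structure for this step (only $c\in\cC$ is used to make the right-hand side increasing in $\xi$), and your supremum convention handles the edge case where $K_c(z)$ may fall outside the range of $u'$ more explicitly than the paper's bare $(u')^{-1}$.
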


\begin{proof}%[Proof of Lemma~\ref{lm:binding}]
	Recall that, for all $c \in \cC_0$, $\xi(a,z) := Tc(a,z)$ solves
	\begin{equation}
	\label{eq:T_opr_general}
	\left( u' \circ \xi \right)(a,z) =
	\max \{ \EE_{z} \, \hat{\beta} \hat{R} \left(u' \circ c \right) 
	(\hat{R} \, [a - \xi(a,z)] + \hat{Y}, \hat{Z}), u'(a) \}.
	\end{equation}
	%
	%Let $c^* \in \cC$ denote the optimal policy. 
	For each $z \in \ZZ$ and $c \in \cC_0$, define
	\begin{equation}
	\label{eq:a_bar}
	\bar{a}_c (z) := 
	\left(u' \right)^{-1}
	[\EE_z \, \hat{\beta} \hat{R} 
	\left(u' \circ c \right) ( \hat{Y}, \hat{Z} ) ]    
	\quad \text{and} \quad
	\bar{a}(z) := \bar{a}_{c^*} (z).
	\end{equation}
	To prove the first claim, by Lemma~\ref{lm:cC'}, it suffices to show that $\xi(a,z) < a$ implies $a > \bar{a}_c(z)$. This  obviously holds since in view of \eqref{eq:T_opr_general}, the former implies that
	\begin{align*}
	u'(a) < \EE_z \, \hat{\beta} \hat{R} 
	\left(u' \circ c \right) 
	(\hat{R} \left[a - \xi(a,z) \right] + \hat{Y}, \hat{Z})    
	\leq \EE_z \, \hat{\beta} \hat{R} 
	\left(u' \circ c \right) (\hat{Y}, \hat{Z})
	= u'[\bar{a}_c (z)],
	\end{align*}
	which then yields $a > \bar{a}_c(z)$. The second claim follows immediately from the first claim and the fact that $c^* \in \cC_0$ is the unique fixed point of  $T$ in $\cC$.
\end{proof}

Consider a subset $\cC_1$ defined by
$\cC_1 := \left\{ c \in \cC_0 \colon 
a \mapsto c(a,z) \text{ is concave for all }
z \in \ZZ 
\right\}$.

\begin{lemma}
	\label{lm:self_map_cC1}
	$\cC_1$ is a closed subset of $\cC_0$ and $\cC$, and, $T c \in \cC_1$ for all $c \in \cC_1$.
\end{lemma}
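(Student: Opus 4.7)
The plan is to handle the two claims in sequence: closedness of $\cC_1$ inside $\cC$ first, then the self-map property of $T$.

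For closedness, I would take any sequence $\{c_n\} \subset \cC_1$ converging in $\rho$ to some $c \in \cC$. Since $u'$ is a continuous, strictly decreasing bijection on $(0,\infty)$, $\rho$-convergence translates to pointwise convergence $c_n(a,z) \to c(a,z)$ on $\SS_0$. Pointwise concavity is preserved under pointwise limits, so $c(\cdot, z)$ is concave for each $z \in \ZZ$. Combined with closedness of $\cC_0$ from Lemma~\ref{lm:cC'}, this gives $c \in \cC_1$, so $\cC_1$ is closed in $\cC$ and hence in $\cC_0$.

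For the self-map claim, fix $c \in \cC_1$ and set $\xi := Tc$. Lemma~\ref{lm:cC'} already yields $\xi \in \cC_0$, so only concavity of $a \mapsto \xi(a, z)$ for each $z \in \ZZ$ remains. I would split $(0,\infty)$ at the threshold $\bar a_c(z)$ supplied by Lemma~\ref{lm:binding}. On $(0, \bar a_c(z)]$, $\xi(a,z) = a$ is linear and hence concave. On $(\bar a_c(z), \infty)$ the borrowing constraint is slack and the first-order condition reduces to $\xi(a,z) = \phi_z(a - \xi(a,z))$, where $\phi_z(x) := (u')^{-1}[\EE_z \hat{\beta} \hat{R} (u' \circ c)(\hat{R} x + \hat{Y}, \hat{Z})]$. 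Hypothesis \eqref{eq:concave_prop} delivers concavity of $\phi_z$, while its monotone increase follows from $c \in \cC_1 \subset \cC$ being increasing in wealth and $u'$ being strictly decreasing. Setting $s(a) := a - \xi(a,z)$, the implicit equation becomes $a = s(a) + \phi_z(s(a))$; since $s \mapsto s + \phi_z(s)$ is strictly increasing and concave, its inverse $a \mapsto s(a)$ is strictly increasing and convex, so $\xi(a,z) = a - s(a)$ is concave on $(\bar a_c(z), \infty)$.

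The main obstacle will be gluing these two concave pieces into a globally concave function on $(0,\infty)$. My plan is a one-sided derivative comparison at $\bar a_c(z)$: the left derivative of $\xi(\cdot, z)$ equals $1$ on the binding piece, while from $a = s + \phi_z(s)$ one obtains a right derivative of $\xi(\cdot, z)$ equal to $\phi_z'(0^+)/(1 + \phi_z'(0^+)) \in [0, 1]$ just above $\bar a_c(z)$ (with the convention that the ratio equals $1$ if $\phi_z'(0^+) = \infty$). Since $\xi(\cdot, z)$ is continuous at the threshold (because $Tc \in \cC$) and its one-sided slope is nonincreasing across the join, the standard gluing criterion for concave functions delivers global concavity on $(0,\infty)$, completing the verification that $Tc \in \cC_1$.
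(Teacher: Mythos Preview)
Your argument is correct and handles both claims. The closedness part matches the paper. For the self-map claim, your route differs from the paper's in how concavity on the unconstrained region $(\bar a_c(z),\infty)$ is obtained. The paper argues by contradiction: assuming the concavity inequality fails at some $a_1,a_2$, it combines the interior Euler equation with the fact that $u'\circ c$ is decreasing and with hypothesis~\eqref{eq:concave_prop} to derive the reverse inequality. You instead recast the interior Euler equation as the implicit relation $\xi=\phi_z(a-\xi)$, note that $\phi_z$ is increasing and concave by~\eqref{eq:concave_prop}, and invert the strictly increasing concave map $s\mapsto s+\phi_z(s)$ to conclude that $s(a)=a-\xi(a,z)$ is convex, hence $\xi(\cdot,z)$ concave. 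Both arguments rest on the same hypothesis~\eqref{eq:concave_prop}; yours gives a cleaner structural picture (and, as a byproduct, an explicit formula for the right derivative at the threshold), while the paper's avoids any appeal to inverse-function regularity. Your treatment of the gluing at $\bar a_c(z)$ is also more explicit than the paper's, which simply asserts that continuity, monotonicity, and the identity $\xi(a,z)=a$ on the binding piece reduce the problem to concavity on $(\bar a_c(z),\infty)$; your one-sided-slope comparison makes that reduction transparent.
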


\begin{proof}%[Proof of Lemma~\ref{lm:self_map_cC1}]
	The first claim is immediate because limits of concave functions are
	concave. To prove the second claim, fix $c \in \cC_1$. We have $Tc \in
	\cC_0$ by Lemma~\ref{lm:cC'}. It remains to show that $a \mapsto
	\xi(a, z) := Tc (a,z)$ is concave for all $z \in \ZZ$. Given $z \in \ZZ$,
	Lemma~\ref{lm:binding} implies that $\xi(a,z) = a$ for $a \leq
	\bar{a}_c(z)$ and that $\xi (a,z) < a$ for $a > \bar{a}_c(z)$.  Since in
	addition $a \mapsto \xi(a,z)$ is continuous and increasing, to show the
	concavity of $\xi$ with respect to $a$, it suffices to show that $a
	\mapsto \xi (a,z)$ is concave on $(\bar{a}_c(z), \infty)$.
	
	Suppose there exist some $z \in \ZZ$, 
	$\alpha \in [0,1]$, and $a_1, a_2 \in (\bar{a}_c (z), \infty)$ such that 
	\begin{equation}
	\label{eq:as_ctdt}
	\xi \left( (1-\alpha) a_1 + \alpha a_2, \, z \right)
	< (1 - \alpha) \xi(a_1, z) + \alpha \xi(a_2, z).
	\end{equation}
	Let $h(a, z, \hat{\omega}):= \hat{R} \left[a - \xi(a, z) \right] +
	\hat{Y}$, where $\hat{\omega} := (\hat{R}, \hat{Y})$. Then by
	Lemma~\ref{lm:binding} and noting that consumption is interior, we have
	\begin{align*}
	(u' \circ \xi) \left( (1 - \alpha) a_1 + \alpha a_2, \, z \right)  
	&= \EE_z \, \hat{\beta} \hat{R} \left(u' \circ c \right) 
	\{ h [(1 - \alpha) a_1 + \alpha a_2, \, z, \, \hat{\omega}], \hat{Z}\}    \\
	&\leq \EE_z \, \hat{\beta} \hat{R} \left( u' \circ c \right) 
	[(1-\alpha) h(a_1, z, \hat{\omega}) + \alpha h(a_2, z, \hat{\omega}), \hat{Z}].
	\end{align*}
	Using condition~\eqref{eq:concave_prop} then yields 
	\begin{align*}
	&\xi ((1 - \alpha) a_1 + \alpha a_2, z)  
	\geq (u')^{-1} \{ \EE_z \, \hat{\beta} \hat{R} (u' \circ c) 
	[(1 - \alpha) h(a_1, z, \hat{\omega}) + \alpha h(a_2, z, \hat{\omega}), \hat{Z} ] \}    \\
	& \geq (1 -\alpha) (u')^{-1} 
	\{ \EE_z \, \hat{\beta} \hat{R} (u' \circ c) 
	[h(a_1, z, \hat{\omega}), \hat{Z}] \} +
	\alpha (u')^{-1} \{ \EE_z \, \hat{\beta} \hat{R}
	(u' \circ c) [h(a_2, z, \hat{\omega}), \hat{Z}] \}    \\ 
	& = (1-\alpha) (u')^{-1} \{ (u' \circ \xi) (a_1, z) \} +
	\alpha (u')^{-1} \{ (u' \circ \xi) (a_2, z) \}   
	= (1 - \alpha) \xi(a_1, z) + \alpha \xi(a_2, z),
	\end{align*}
	which contradicts \eqref{eq:as_ctdt}. Hence, $a \mapsto \xi(a,z)$ is concave for all $z \in \ZZ$.
\end{proof}

\begin{proof}[Proof of Proposition~\ref{pr:optpol_concave}]
	By Theorem~\ref{t:ctra_T}, $T \colon \cC \to \cC$ is a 
	contraction mapping with unique fixed point $c^*$. Since $\cC_1$ is a closed
	subset of $\cC$ and $T \cC_1 \subset \cC_1$ by Lemma~\ref{lm:self_map_cC1}, 
	we know that $c^* \in \cC_1$. The first claim is verified.
	Regarding the second claim, note that $c^* \in \cC_1$ implies that 
	$a \mapsto c^*(a,z)$ is increasing and concave for all $z \in \ZZ$. Hence, 
	$a \mapsto c^*(a,z)/a$ is a decreasing function for all $z \in \ZZ$. 
	Since $0\leq c^*(a,z) \leq a$ for all $(a,z) \in \SS_0$, $\alpha(z) := \lim_{a \to \infty} c^*(a,z)/a$ is well-defined and $\alpha(z) \in [0,1]$.
\end{proof}

\begin{proof}[Proof of Remark \ref{rm:concave}]
	For each $c$ in $\cC$ concave in its first argument, let 
	$h_c (x, \hat{\omega}) 
	:= c ( \hat{R} x + \hat{Y}, \hat{z} )$, 
	where $\hat{\omega} := ( \hat{R}, \hat{Y}, \hat{z} )$. 
	Then $x \mapsto h_c (x, \hat{\omega})$ is concave. Based on the generalized Minkowski's inequality 
	(see, e.g., \cite{hardy1952inequalities}, page~146, theorem 198), we have 
	\begin{align*}
	&[ \EE_z \, \hat{\beta} \hat{R} \, 
	h_c(\alpha x_1 + (1 - \alpha) x_2, \hat{\omega})^{-\gamma} ]^{-\frac{1}{\gamma}}   
	\geq \{ \EE_z \, \hat{\beta} \hat{R} 
	\left[
	\alpha h_c (x_1, \hat{\omega}) +
	(1 - \alpha) h_c (x_2, \hat{\omega}) 
	\right]^{-\gamma} 
	\}^{-\frac{1}{\gamma}}   \\
	&= \{ \EE_z [ \, \alpha (\hat{\beta} \hat{R})^{-\frac{1}{\gamma}} h_c(x_1, \hat{\omega})
	+ (1 - \alpha) (\hat{\beta} \hat{R})^{-\frac{1}{\gamma}} h_c(x_2, \hat{\omega})
	\,]^{-\gamma}
	\}^{-\frac{1}{\gamma}}    \\
	&\geq ( \EE_z [ \alpha (\hat{\beta} \hat{R})^{-\frac{1}{\gamma}} 
	h_c(x_1, \hat{\omega}) ]^{-\gamma} )^{-\frac{1}{\gamma}}  + 
	( \EE_z [ (1 - \alpha) (\hat{\beta} \hat{R})^{-\frac{1}{\gamma}} h_c(x_2, \hat{\omega})
	]^{-\gamma} 
	)^{-\frac{1}{\gamma}}   \\
	&= \alpha [ \EE_z \, \hat{\beta} \hat{R} \, h_c(x_1, \hat{\omega})^{-\gamma} 
	]^{-\frac{1}{\gamma}}  + 
	(1 - \alpha) [ \EE_z \, \hat{\beta} \hat{R} \, h_c(x_2, \hat{\omega})^{-\gamma} 
	]^{-\frac{1}{\gamma}},
	\end{align*}
	Since $u'(c) = c^{-\gamma}$, the above inequality implies that condition \eqref{eq:concave_prop} holds.
\end{proof}

To prove Proposition~\ref{pr:optpol_linbound}, let $\bar s$ be as in \eqref{eq:suff_linbound} and define 
\begin{equation}
\label{eq:cC2}
\cC_2 := \left\{ 
c \in \cC: c(a, z) \geq (1 - \bar s) a 
\; \text{ for all } (a,z) \in \SS_0
\right\}.
\end{equation}

\begin{lemma}
	\label{lm:cC2}
	$\cC_2$ is a closed subset of $\cC$, and $T c \in \cC_2$ for all $c \in \cC_2$.
\end{lemma}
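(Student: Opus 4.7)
The plan is to verify the two claims separately: that $\cC_2$ is closed in $(\cC, \rho)$, and that $T$ maps $\cC_2$ into itself. Closedness should follow from the fact that convergence in $\rho$ forces pointwise convergence of the policies themselves, so the defining inequality $c(a,z) \geq (1-\bar s) a$ is preserved under limits. For the self-map property, I would argue by contradiction using the defining equation \eqref{eq:T_opr}--\eqref{eq:keypart_T_opr} of $T$, with the critical step being to apply the hypothesis \eqref{eq:suff_linbound} at the scaled wealth $(1-\bar s)a$ rather than at $a$ itself.

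For closedness, given $\{c_n\} \subset \cC_2$ with $\rho(c_n, c) \to 0$ for some $c \in \cC$, I would observe that $u' \circ c_n \to u' \circ c$ uniformly on $\SS_0$. Since $u'$ is a continuous, strictly decreasing bijection of $(0, \infty)$ onto itself, its inverse is continuous and $c_n(a,z) \to c(a,z)$ pointwise. Passing to the limit in $c_n(a,z) \geq (1-\bar s) a$ gives $c \in \cC_2$.

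For the self-map claim, fix $c \in \cC_2$, $(a, z) \in \SS_0$, and set $\xi := Tc(a, z)$. Suppose toward contradiction that $\xi < (1-\bar s) a$, so $a - \xi > \bar s a$ and the corner $\xi = a$ is ruled out. Starting from
\begin{equation*}
u'(\xi) = \max\{\EE_z \hat\beta \hat R (u' \circ c)(\hat R(a-\xi) + \hat Y, \hat Z),\; u'(a)\},
\end{equation*}
I would chain three monotonicity estimates: membership $c \in \cC_2$ gives $(u' \circ c)(a', z') \leq u'((1-\bar s) a')$; nonnegativity of $\hat Y$ together with monotonicity of $u'$ lets me drop the $\hat Y$ term inside $u'$; and the strict inequality $a - \xi > \bar s a$ allows replacing $\hat R(a-\xi)$ by $\hat R \bar s a$ inside $u'$. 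These combine to yield
\begin{equation*}
(u' \circ c)(\hat R(a-\xi) + \hat Y, \hat Z) \leq u'((1-\bar s)\hat R \bar s a).
\end{equation*}
Taking $\EE_z \hat\beta \hat R$ on both sides and invoking \eqref{eq:suff_linbound} at the wealth level $(1-\bar s) a$ then delivers $\EE_z \hat\beta \hat R (u' \circ c)(\hat R(a-\xi) + \hat Y, \hat Z) \leq u'((1-\bar s)a)$. Combined with $u'(a) \leq u'((1-\bar s) a)$ (since $\bar s < 1$), this forces $u'(\xi) \leq u'((1-\bar s) a)$, i.e.\ $\xi \geq (1-\bar s) a$, contradicting our assumption.

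The main obstacle I anticipate is recognizing that \eqref{eq:suff_linbound} must be invoked at the shrunk wealth $(1-\bar s)a$ rather than at $a$: applied at $a$ it would leave an extraneous factor of $(1-\bar s)$ inside $u'$ that cannot be absorbed without scaling structure such as CRRA. Locating this correct instantiation of the hypothesis is the only non-routine step in an otherwise mechanical chain of monotonicity bounds.
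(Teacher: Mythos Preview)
Your proposal is correct and follows essentially the same route as the paper's proof: closedness via pointwise convergence from $\rho$-convergence, and the self-map property via contradiction, bounding $(u'\circ c)$ by $u'((1-\bar s)\,\cdot\,)$, using $a-\xi>\bar s a$, and then invoking \eqref{eq:suff_linbound} at the scaled wealth $(1-\bar s)a$. The only cosmetic difference is that the paper retains the $(1-\bar s)\hat Y$ term one step longer before discarding it, whereas you drop $\hat Y$ immediately; both orderings yield the same final bound $\EE_z\hat\beta\hat R\,u'(\hat R\bar s(1-\bar s)a)\leq u'((1-\bar s)a)$.
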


\begin{proof}%[Proof of Lemma~\ref{lm:cC2}]
	To see that $\cC_2$ is closed, for a given sequence $\{ c_n \}$ in $\cC_2$ and
	$c \in \cC$ with $\rho( c_n , c) \to 0$, we need to verify that $c \in \cC_2$.
	This obviously holds since $c_n(a,z) /a \geq 1 - \bar s$ for all $n$ and $(a,z)
	\in \SS_0$, and, on the other hand, $\rho( c_n , c) \to 0$ implies that $c_n
	(a,z) \to c(a,z)$ for all $(a,z) \in \SS_0$. 
	
	We next show that $T$ is a self-map on $\cC_2$. Fix $c \in \cC_2$. We have $Tc \in \cC$ since $T$ is a self-map on $\cC$. It remains to show that $\xi := Tc$ satisfies 
	$\xi (a,z) \geq (1 - \bar s) a$ for all $(a,z) \in \SS_0$. Suppose  
	$\xi(a,z) < (1 - \bar s) a$ for some $(a,z) \in \SS_0$. Then 
	\begin{equation*}
	u'((1 - \bar s) a) < (u' \circ \xi)(a, z)    
	= \max \{ \EE_z \, \hat{\beta} \hat{R} \left( u' \circ c \right) 
	(\hat{R} \left[ a - \xi(a,z) \right] + \hat{Y}, \hat{Z}), \, u'(a) \}. 
	\end{equation*}
	Since $u'((1 - \bar s) a) > u'(a)$ and $c \in \cC_2$, this implies that
	\begin{align*}
	u'((1 - \bar s) a )
	&< \EE_z \, \hat{\beta} \hat{R}    
	\left( u' \circ c \right) 
	( \hat{R} \left[ a - \xi(a,z) \right] + \hat{Y}, \hat{Z} )     \\
	&\leq \EE_z \, \hat{\beta} \hat{R}    
	\, u' \, \{ 
	(1 - \bar s) \hat{R} \left[ a - \xi(a,z) \right] 
	+ (1 - \bar s) \hat{Y} \}    \\
	&\leq \EE_z \, \hat{\beta} \hat{R}    
	\, u' \, [
	(1 - \bar s) \hat{R} \bar s a + (1 - \bar s) \hat{Y} ]    
	\leq \EE_z \, \hat{\beta} \hat{R}  \,  
	u' \, [ \hat{R} \bar s (1 - \bar s) a],
	\end{align*}
	which contradicts \eqref{eq:suff_linbound} since $((1 - \bar s) a, z) \in \SS_0$.
	As a result, $\xi(a,z) \geq (1 - \bar s) a$ for all $(a,z) \in \SS_0$ and we conclude
	that $Tc \in \cC_2$.
\end{proof}

\begin{proof}[Proof of Proposition~\ref{pr:optpol_linbound}]
	We have shown in Theorem~\ref{t:ctra_T} that $T$ is a contraction mapping
	on the complete metric space $(\cC, \rho)$, with unique fixed point $c^*$.
	Since in addition $\cC_2$ is a closed subset of $\cC$ and $T \cC_2 \subset
	\cC_2$ by Lemma~\ref{lm:cC2}, we know that $c^* \in \cC_2$. The stated
	claim is verified.
\end{proof}

\section{Proof of Section \ref{s:sto_stability} Results}
\label{s:proof_stab}

As before, Assumptions~\ref{a:b0}--\ref{a:y0} are in force.
%To prove Theorem~\ref{t:sta_exist}, 
Notice that Assumption~\ref{a:rb0}, Assumption~\ref{a:r0} and
Lemma~\ref{l:theta} imply existence of an $n$ in $\NN$ such that 
\begin{equation}
    \label{eq:geo_coeff}
    \theta := \max_{z \in \ZZ} \EE_z \prod_{t=1}^n \beta_t R_t < 1
    \quad \, \text{and} \, \quad
    \gamma := \bar s^n \max_{z \in \ZZ} \EE_z \prod_{t=1}^n R_t < 1.
%    \quad \text{and} \quad
%    M_0 := \sup_{1 \leq \ell \leq n, \, z \in \ZZ} \EE_z \prod_{t=1}^\ell R_t < \infty.
\end{equation}

\begin{lemma}
	\label{lm:bd_in_prob_at}
	For all $(a,z) \in \SS$, we have $\sup_{t \geq 0} \EE_{a,z} \, a_t  < \infty$.
\end{lemma}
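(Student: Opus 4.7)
The plan is to iterate the wealth dynamics under the linear consumption lower bound from Proposition~\ref{pr:optpol_linbound} and then control the resulting expectations using Assumption~\ref{a:r0}. Proposition~\ref{pr:optpol_linbound} provides $a_t - c^*(a_t, Z_t) \leq \bar s \, a_t$, so the law of motion \eqref{eq:dyn_sys} gives $a_{t+1} \leq \bar s \, R_{t+1} a_t + Y_{t+1}$. Iterating backwards from $a_0 = a$ produces the pathwise bound
\begin{equation*}
    a_t \leq \bar s^{\, t} a \prod_{i=1}^{t} R_i + \sum_{j=1}^{t} Y_j \, \bar s^{\, t-j} \prod_{i=j+1}^{t} R_i,
\end{equation*}
with the empty product interpreted as one.

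Next I would take expectations under $\PP_{a,z}$ and simplify the cross term using the Markov property of $\{Z_t\}$ together with the independence of $\{\eta_t\}$ from $\{Z_t\}$ and $\{\zeta_t\}$. Conditional on $Z_j$, $Y_j = Y(Z_j, \eta_j)$ is independent of $\prod_{i=j+1}^{t} R_i$ because $\eta_j$ is independent of all future innovations and states. Combining this with the Markov property yields
\begin{equation*}
    \EE_z \left[ Y_j \prod_{i=j+1}^{t} R_i \right] = \EE_z \! \left[ (\EE_{Z_j} Y) \cdot \EE_{Z_j} \prod_{i=1}^{t-j} R_i \right] \leq M_3 \max_{z' \in \ZZ} \EE_{z'} \prod_{i=1}^{t-j} R_i,
\end{equation*}
where $M_3 = \max_{z \in \ZZ} \EE_z Y < \infty$ by Lemma~\ref{l:tefi}.

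To close, I would apply Lemma~\ref{l:theta} with $\phi_t = \bar s R_t$; since Assumption~\ref{a:r0} states $\bar s \, G_R < 1$, i.e.\ $G_{\bar s R} < 1$, the lemma furnishes $N \in \NN$ and $\delta < 1$ with $\max_{z \in \ZZ} \bar s^{\, n} \EE_z \prod_{t=1}^{n} R_t < \delta^n$ for all $n \geq N$. The finitely many short-horizon terms are finite because the entries of $L_{\bar s R}$ are finite (as $r(L_{\bar s R}) = \bar s G_R < \infty$ via Lemma~\ref{l:besr}), so one obtains constants $C < \infty$ and $\lambda \in (0,1)$ with $\max_{z \in \ZZ} \bar s^{\, t} \EE_z \prod_{i=1}^{t} R_i \leq C \lambda^t$ for all $t \geq 0$. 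Substituting into the previous display gives
\begin{equation*}
    \EE_{a,z} a_t \leq a\, C \lambda^t + M_3 C \sum_{k=0}^{t-1} \lambda^k \leq a\, C + \frac{M_3 C}{1 - \lambda},
\end{equation*}
which is finite and uniform in $t$, yielding the claim.

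The main technical subtlety is the second step: correctly exploiting the conditional independence of $Y_j$ from the later returns given $Z_j$, and then using time-homogeneity of the Markov chain to convert $\EE_z \prod_{i=j+1}^{t} R_i$ into a quantity of the form $\max_{z'} \EE_{z'} \prod_{i=1}^{t-j} R_i$ controlled by Lemma~\ref{l:theta}. Once this factorization is in hand, the remaining bookkeeping is a straightforward geometric-series argument.
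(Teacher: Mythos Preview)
Your proof is correct and follows essentially the same strategy as the paper: both start from the linear savings bound $a-c^*(a,z)\le \bar s a$ of Proposition~\ref{pr:optpol_linbound}, iterate to obtain the same pathwise upper bound on $a_t$, and then control expectations via the Markov property together with the condition $\bar s G_R<1$. The only organizational difference is that the paper fixes the integer $n$ from \eqref{eq:geo_coeff}, writes $t=kn+j$, and bounds the sum block by block using $\gamma=\bar s^n\max_z\EE_z\prod_{t=1}^n R_t<1$, whereas you absorb the short-horizon constants into a single pair $(C,\lambda)$ and obtain the uniform geometric bound $\max_z \bar s^{t}\EE_z\prod_{i=1}^t R_i\le C\lambda^t$ directly from Lemma~\ref{l:theta}; this yields a somewhat cleaner final estimate but is not a different idea.
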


\begin{proof}%[Proof of Lemma~\ref{lm:bd_in_prob_at}]
    Since $c^*(0,z) = 0$, Proposition~\ref{pr:optpol_linbound} implies that
    $c^*(a,z) \geq (1 - \bar s) a$ for all $(a,z) \in \SS$. For all $t \geq 1$, we
    have $t = kn + j$ in general, where the integers $k \geq 0$ and $j \in
    \{0,1, \dots, n-1\}$. Using these facts and \eqref{eq:trans_at}, we have:
	\begin{align*}
	a_t & \leq \bar s^t R_t \cdots R_1 a + \bar s^{t-1} R_t \cdots R_2 Y_1
	    + \cdots + \bar s R_t Y_{t-1} + Y_t    \\
	    &= \bar s^{kn+j} R_{kn +j} \cdots R_1 a + \sum_{\ell=1}^{j} \bar s^{kn + j - \ell} R_{kn+j} \cdots R_{\ell+1} Y_{\ell}   
        \\
	    & \quad + \sum_{m=1}^{k} \sum_{\ell=1}^{n} \bar s^{mn - \ell} R_{kn + j} \cdots R_{(k-m)n + j + \ell + 1} Y_{(k-m)n + j + \ell}
	\end{align*}
	with probability one.  Taking expectations of the above while noting that
    $M_0 := \max_{1 \leq \ell \leq n, \, z \in \ZZ} \EE_z \prod_{t=1}^\ell R_t < \infty$ by
    Assumption~\ref{a:r0} and Lemma \ref{l:theta}, we have 
	\begin{align*}
		\EE_{a,z} a_t 
		&\leq \gamma^k \bar s^{j} \EE_z R_{j} \cdots R_1 a + 
		\gamma^k \sum_{\ell=1}^{j} \bar s^{j - \ell} \EE_z R_{j} \cdots R_{\ell+1} Y_{\ell}    \\
		& \quad + \sum_{m=0}^{k-1} \gamma^m \sum_{\ell=1}^{n} \bar s^{n - \ell}
		\EE_z R_{(k-m)n + j} \cdots R_{(k-m-1)n + j + \ell + 1} Y_{(k-m)n + j + \ell}     \\
		&\leq \gamma^k M_0 a + \gamma^k M_0 \sum_{\ell=1}^{j} \EE_z Y_\ell +
		\sum_{m=0}^{k-1} \gamma^m M_0 \sum_{\ell=1}^{n} \EE_z Y_{(k-m-1)n + j + \ell}  \\
		&\leq M_0 a + M_0 M_3 n + \sum_{m=0}^{\infty} \gamma^m M_0 M_3 n < \infty.
	\end{align*}
    or all $(a,z) \in \SS$ and $t \geq 0$.  Here we have used $M_3$ in
    Lemma~\ref{l:tefi} and the Markov property.  Hence, $\sup_{t \geq 0}
    \EE_{a,z} \, a_t < \infty$ for all $(a,z) \in \SS$, as was claimed.
\end{proof}

A function $w^* \colon \SS \to \RR_+$ is called \emph{norm-like} if all 
its sublevel sets (i.e., sets of the form 
$\{x \in \SS \colon w(x) \leq b \}, b \in \RR_+$) are precompact in $\SS$ 
(i.e., any sequence in a given sublevel set has a subsequence that converges 
to a point of $\SS$).

\begin{proof}[Proof of Theorem~\ref{t:sta_exist}]
    Based on Lemma~D.5.3 of \cite{meyn2009markov}, a stochastic kernel $Q$ is
    bounded in probability if and only if for all $x \in \SS$, there exists a
    norm-like function $w_x^* \colon \SS \to \RR_+$ such that the
    $(Q,x)$-Markov process $\{X_t\}_{t \geq 0}$ satisfies $\limsup_{t \to
    \infty} \EE_x \left[ w_x^*(X_t) \right] < \infty$. 
	Fix $(a,z) \in \SS$. Since $\ZZ$ is finite, $P$ is bounded in probability. 
	Hence, there exists a norm-like function 
	$w \colon \ZZ \to \RR_+$ such that 
	$\limsup_{t \to \infty} \EE_z w (Z_t) < \infty$.
	Then $w^* \colon \SS \to \RR_+$ defined by 
	$w^*(a_0,Z_0) := a_0 + w (Z_0)$ is 
	a norm-like function on $\SS$. The stochastic kernel $Q$ is then bounded in 
	probability since Lemma~\ref{lm:bd_in_prob_at} implies that
	  $\limsup_{t \to \infty} \EE_{a,z} \, w^*(a_t, Z_t)
	  \leq \sup_{t \geq 0} \EE_{a,z} \, a_t +
	       \limsup_{t \to \infty} \EE_z \, w(Z_t) 
	  < \infty$.
	Regarding existence of stationary distribution, since $P$ is Feller 
	(due to the finiteness of $\ZZ$), whenever
	$z_n \to z$, the product measure satisfies
	\begin{equation*}
	  P(z_n, \cdot) \otimes \pi_\zeta \otimes \pi_\eta 
	  \stackrel{w}{\longrightarrow}
	  P(z, \cdot) \otimes \pi_\zeta \otimes \pi_\eta.
	\end{equation*}
	Since in addition $c^*$ is continuous, a simple application of the 
	generalized Fatou's lemma of \cite{feinberg2014fatou} (Theorem~1.1) 
	shows that the stochastic kernel $Q$ is Feller. Moreover, since $Q$ 
	is bounded in probability, based on the Krylov-Bogolubov theorem (see, e.g., 
	\cite{meyn2009markov}, Proposition~12.1.3 and Lemma~D.5.3), $Q$ admits at 
	least one stationary distribution.
\end{proof}

\begin{lemma}
	\label{lm:bind_fntime}
    The borrowing constraint binds in finite time with positive probability.
    That is, for all $(a,z) \in \SS$, we have
	$\PP_{a,z} \left( \cup_{t \geq 0} \{ c_t = a_t \} \right) > 0$.
\end{lemma}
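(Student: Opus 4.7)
The plan is to exhibit, for any $(a,z)\in\SS$, an explicit event of positive probability on which $a_{t_0+T}\leq\bar a(\bar z)$, so that Lemma~\ref{lm:binding} delivers $c^*_{t_0+T}=a_{t_0+T}$. By irreducibility of the finite chain $\{Z_t\}$ there is a $t_0$ with $\PP_z(Z_{t_0}=\bar z)>0$. Then using $P(\bar z,\bar z)>0$ and Assumption~\ref{a:pos_dens} together with the IID structure of the innovations, for any $T\in\NN$ and any $\delta\geq y_\ell$ (taking $\delta=y_\ell$ in case (Y1)), the event
\begin{equation*}
    E_T := \bigl\{Z_s=\bar z \text{ for all } s\in[t_0,t_0+T]\bigr\}\cap\bigl\{Y_s\in[y_\ell,\delta]\text{ for all }s\in[t_0+1,t_0+T]\bigr\}
\end{equation*}
has probability at least $\PP_z(Z_{t_0}=\bar z)\cdot P(\bar z,\bar z)^T\cdot p_Y^T>0$, where $p_Y:=\PP\{Y_s\in[y_\ell,\delta]\mid Z_s=\bar z\}>0$.

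Next, on $E_T$, Proposition~\ref{pr:optpol_linbound} bounds savings by $a_s-c^*(a_s,\bar z)\leq\bar s\,a_s$; iterating the dynamic \eqref{eq:dyn_sys} yields
\begin{equation*}
    a_{t_0+T}\leq\bar s^T\Bigl(\prod_{s=1}^T R_{t_0+s}\Bigr)a_{t_0}+\delta\sum_{s=1}^T\bar s^{T-s}\prod_{j=s+1}^T R_{t_0+j}.
\end{equation*}
Since $\{\zeta_t\}$ is independent of $\{Z_t,\eta_t\}$ and $R_{t_0+s}=R(\bar z,\zeta_{t_0+s})$ on $E_T$, the event $A:=\{R(\bar z,\zeta_{t_0+s})\leq r \text{ for all }s=1,\dots,T\}$ is independent of $E_T$ for any chosen $r>0$. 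Picking $r<1/\bar s$ with $p:=\PP\{R(\bar z,\hat\zeta)\leq r\}>0$, the right-hand side on $E_T\cap A$ is bounded by $(\bar s r)^T a_{t_0}+\delta/(1-\bar s r)$, which is at most $\bar a(\bar z)$ once $T$ is large and $\delta$ small. Therefore $\PP_z(E_T\cap A)=\PP_z(E_T)\cdot p^T>0$, so $a_{t_0+T}\leq\bar a(\bar z)$ on a set of positive probability.

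The main obstacle lies in guaranteeing that such $r<1/\bar s$ with $\PP\{R(\bar z,\hat\zeta)\leq r\}>0$ exists. Applying the Perron--Frobenius identity $L_R v=G_R v$ at the $\bar z$-coordinate gives $P(\bar z,\bar z)\EE R(\bar z,\hat\zeta)\leq G_R<1/\bar s$ by Assumption~\ref{a:r0}; this rules out $R(\bar z,\hat\zeta)\geq 1/\bar s$ almost surely in every non-degenerate case. The lone knife-edge case $R(\bar z,\hat\zeta)\equiv 1/\bar s$ (which would force $P(\bar z,\bar z)<1$) can be handled separately by a block construction that admits chain departures and returns to $\bar z$, invoking the unconditional bound $\bar s^n\max_z\EE_z\prod_{s=1}^n R_s<1$ from \eqref{eq:geo_coeff} in place of the ``stay at $\bar z$'' event.
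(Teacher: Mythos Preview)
Your approach is constructive and quite different from the paper's, but it has genuine gaps.

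\textbf{The return bound is not established.} From the Perron--Frobenius identity you correctly extract $P(\bar z,\bar z)\,\EE R(\bar z,\hat\zeta)\leq G_R<1/\bar s$, but this does \emph{not} rule out $R(\bar z,\hat\zeta)\geq 1/\bar s$ almost surely when $P(\bar z,\bar z)<1$. For example, with $\bar s=1/2$, $P(\bar z,\bar z)=1/2$ and $R(\bar z,\hat\zeta)\equiv 3$, one has $P(\bar z,\bar z)\EE R=3/2<2=1/\bar s$, yet no $r<1/\bar s$ has positive probability. Your ``block construction'' fallback appeals to the \emph{expectation} bound $\bar s^n\max_z\EE_z\prod R_s<1$, but converting a mean bound into a positive-probability pathwise bound on $\prod R_s$ is exactly the missing step. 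There is also a second obstruction: even granting $r<1/\bar s$, your limiting upper bound on wealth is $\delta/(1-\bar s r)\geq y_\ell/(1-\bar s r)$, and nothing guarantees this is below $\bar a(\bar z)$ when $y_\ell>0$ and $\bar s r$ is close to~$1$. Finally, you invoke Assumption~\ref{a:pos_dens}, which the lemma as stated (and proved in the paper) does not require.

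\textbf{The paper's route avoids all of this.} It argues by contradiction through the Euler equation: if $\PP_{a,z}\{c_t<a_t\ \text{for all }t\}=1$, then equality holds in the first-order condition at every date, and iterating gives
\[
(u'\circ c^*)(a,z)=\EE_{a,z}\,\beta_1R_1\cdots\beta_tR_t\,(u'\circ c^*)(a_t,Z_t)\leq \EE_z\,\beta_1R_1\cdots\beta_tR_t\,[u'(Y_t)+M].
\]
Using $G_{\beta R}<1$ (via $\theta<1$ in \eqref{eq:geo_coeff}) and Lemma~\ref{l:tefi}, the right side tends to zero along $t=kn+1$, contradicting $u'>0$. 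This argument is purely in expectation, needs no pathwise control of returns or income, and uses only Assumptions~\ref{a:b0}--\ref{a:y0}.
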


\begin{proof}%[Proof of Lemma~\ref{lm:bind_fntime}]
	The claim holds trivially when $a=0$. Suppose the claim does not hold on $\SS_0$ (recall that $\SS_0 = \SS \backslash \{0\}$), then $\PP_{a,z} \left( \cap_{t \geq 0} \{c_t < a_t\} \right) = 1$ for some $(a,z) \in \SS_0$, i.e., the borrowing constraint never binds with probability one. Hence, 
	\begin{equation*}
	    \PP_{a,z} 
	     \left\{ 
	        (u' \circ c^*)(a_t, Z_t) = 
	        \EE \left[ \beta_{t+1} R_{t+1} 
	                     (u' \circ c^*) (a_{t+1}, Z_{t+1}) \big| \fF_{t} 
	                  \right]
	     \right\}
	    = 1
	\end{equation*}
	for all $t \geq 0$. Then we have
	\begin{align}
	\label{eq:u'c_ineq}
	  \left( u' \circ c^* \right)(a,z)
	  &= \EE_{a,z} \, \beta_1 R_1 \cdots \beta_t R_t 
	                \left(u' \circ c^* \right)(a_t, Z_t)    \nonumber  \\
	  & \leq \EE_{a,z} \, \beta_1 R_1 \cdots \beta_t R_t  
	                    \left[ u'(a_t) + M \right]   
	  \leq \EE_{z} \, \beta_1 R_1 \cdots \beta_t R_t
	                    \left[ u'(Y_t) + M \right]
	\end{align}
	for all $t \geq 1$. Let $n$ and $\theta$ be defined by \eqref{eq:geo_coeff}. Let $t= kn +1$. Based on the Markov property and Lemma \ref{l:tefi}, as $k \to \infty$, 
	\begin{align*}
		&\EE_z \beta_1 R_1 \cdots \beta_t R_t    
		%&= \EE_z \beta_1 R_1 \cdots \beta_{t-1} R_{t-1} \EE_z (\beta_t R_t \mid \fF_{t-1})    
		= \EE_z \beta_1 R_1 \cdots \beta_{t-1} R_{t-1} \EE_{Z_{t-1}} \beta_1 R_1    \\
		&\leq \left(\max_{z \in \ZZ} \EE_z \beta_1 R_1 \right) 
		    (\EE_z \beta_1 R_1 \cdots \beta_{nk} R_{nk})
		    \leq \left(\max_{z \in \ZZ} \EE_z \beta_1 R_1 \right) \theta^k
		\to 0.
	\end{align*}
	Similarly, as $k \to \infty$, 
	\begin{align*}
		&\EE_z \, \beta_1 R_1 \cdots \beta_t R_t u'(Y_t)
	%	&= \EE_z \, \beta_1 R_1 \cdots \beta_{t-1} R_{t-1} 
	%	          \EE_z \left[ \beta_t R_t u'(Y_t) \mid \fF_{t-1} \right]     \\
		= \EE_z \, \beta_1 R_1 \cdots \beta_{t-1} R_{t-1} \EE_{Z_{t-1}} 
		    \left[\beta_1 R_1 u'(Y_1) \right]     \\
		&\leq \left[ \max_{z \in \ZZ} \EE_z \hat{\beta} \hat{R} u'(\hat{Y}) \right]
		    \EE_z \beta_1 R_1 \cdots \beta_{nk} R_{nk}    
		\leq \left[ \max_{z \in \ZZ} \EE_z \hat{\beta} \hat{R} u'(\hat{Y}) \right] \theta^k
		\to 0.
	\end{align*}
	Letting $k \to \infty$. \eqref{eq:u'c_ineq} then implies that
	$\left(u' \circ c^* \right)(a,z) \leq 0$, contradicted with the fact that $u'>0$. 
	Thus, we must have $\PP_{a,z} \left( \cup_{t \geq 0} \{ c_t = a_t\} \right) > 0$ 
	for all $(a,z) \in \SS$. 
\end{proof}

Our next goal is to prove Theorem~\ref{t:gs_gnl_ergo_LLN}. In proofs we apply the theory of \cite{meyn2009markov}. Important definitions  (their information in the textbook) include: $\psi$-irreducibility (Section~4.2), small set (page~102), strong aperiodicity (page~114), 
petite set (page~117), Harris chain (page~199), and positivity (page~230). 

Recall that $\RR^m$ paired with its Euclidean topology is a second countable 
topological space (i.e., its topology has a countable base). Since $\RR_+$ and 
$\ZZ$ are respectively Borel subsets of $\RR$ and $\RR^m$ paired with the 
relative topologies, they are also second countable. Hence, 
$\SS := \RR_+ \times \ZZ$ satisfies $\bB(\SS) = \bB (\RR_+) \otimes \bB(\ZZ)$ 
(see, e.g., page~149, Theorem~4.44 of \cite{guide2006infinite}).
Recall \eqref{eq:yden}. With slight abuse of notation, in proofs, we use $f$ to denote the 
density of $\{Y_t\}$ in both cases (Y1) and (Y2) and write $\diff y = \nu (\diff y)$, where
$\nu$ is the related measure. Specifically, $\nu$ is the Lebesgue measure when (Y2) holds. Moreover, Let 
$\vartheta$ be the counting measure.

Recall $\bar{z} \in \ZZ$ and the greatest lower bound $y_\ell \geq 0$ of the
support of $\{Y_t\}$ given by Assumption~\ref{a:pos_dens}. Let $\bar{p} :=
P(\bar{z}, \bar{z})$. Then $\bar{p}>0$ by Assumption~\ref{a:pos_dens}.

\begin{lemma}
	\label{lm:bind_fntime_2}
		$\PP_{(a, \bar{z})} \left\{
		\cup_{t \geq 0} \left[
		\{c_t = a_t\} \cap \left( \cap_{i=0}^t \{Z_i = \bar{z} \} \right)
		\right]
		\right\} > 0$
	for all $a \in (0, \infty)$.
\end{lemma}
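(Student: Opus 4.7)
The approach mirrors the contradiction argument of Lemma~\ref{lm:bind_fntime}, now restricted to the event $A_t := \bigcap_{i=0}^{t}\{Z_i = \bar z\}$, and supplemented at the end by an explicit construction that invokes Assumption~\ref{a:pos_dens}. First, if $a \leq \bar a(\bar z)$ with $\bar a(\bar z)$ the threshold from Lemma~\ref{lm:binding}, then $c^*(a,\bar z) = a$, so the event $B_0 := \{c_0 = a_0\} \cap A_0$ has probability one under $\PP_{(a,\bar z)}$ and the claim is immediate; I therefore assume $a > \bar a(\bar z)$. Suppose for contradiction that $\PP_{(a,\bar z)}\bigl(\bigcup_t B_t\bigr) = 0$. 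Because each $A_t$ is $\fF_t$-measurable with positive probability $\bar p^{\, t} > 0$ (where $\bar p := P(\bar z,\bar z)$), the hypothesis forces $c_t < a_t$ almost surely on $A_t$ for every $t$, so the first-order condition~\eqref{eq:foc} holds as an equality on $A_t$.

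Multiplying this restricted Euler equality by the $\fF_t$-measurable weight $\beta_1 R_1 \cdots \beta_t R_t$, taking expectation under $\PP_{(a,\bar z)}$, applying tower, and decomposing $\1_{A_t} = \1_{A_{t+1}} + \1_{A_t \setminus A_{t+1}}$ yield a one-step recursion $W_t = W_{t+1} + \ell_{t+1}$ with $\ell_{t+1} \geq 0$, where $W_t := \EE_{(a,\bar z)}[\beta_1 R_1 \cdots \beta_t R_t \,(u' \circ c^*)(a_t,\bar z)\,\1_{A_t}]$. Iterating back to $t = 0$ and using~\eqref{eq:bd_uprime}, $a_t \geq Y_t$, and Lemma~\ref{l:theta} under Assumption~\ref{a:rb0}, one obtains $W_t \to 0$, so
\begin{equation*}
(u' \circ c^*)(a,\bar z) = \EE_{(a,\bar z)}\!\left[\beta_1 R_1 \cdots \beta_\tau R_\tau \,(u' \circ c^*)(a_\tau, Z_\tau)\right],
\end{equation*}
where $\tau := \inf\{t \geq 1 : Z_t \neq \bar z\}$ is almost surely finite when $\bar p < 1$; the degenerate case $\bar p = 1$ reduces at once to Lemma~\ref{lm:bind_fntime}.

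To close the argument, define $M_t := \beta_1 R_1 \cdots \beta_t R_t \,(u' \circ c^*)(a_t, Z_t)$, a nonnegative supermartingale under the unrestricted Euler inequality. Optional sampling gives $\EE_{(a,\bar z)} M_\tau \leq \EE_{(a,\bar z)} M_0 = (u' \circ c^*)(a,\bar z)$, and the identity above saturates this bound, forcing $\{M_{t \wedge \tau}\}$ to be a martingale. Applying the strong Markov property at $\tau$ together with Lemma~\ref{lm:bind_fntime} at the random state $(a_\tau, Z_\tau)$, the irreducibility of $\{Z_t\}$, and the positive-probability low-income realization at $\bar z$ supplied by Assumption~\ref{a:pos_dens}, I will construct a positive-probability sample path on which the chain returns to $\bar z$, asset levels are driven below $\bar a(\bar z)$, and the borrowing constraint binds at some time $s$ with $Z_i = \bar z$ for all $i \leq s$; this furnishes an element of $B_s$ with positive probability, contradicting the hypothesis. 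The principal obstacle is this final step: saturation of optional sampling at $\tau$ is by itself consistent with the contradiction hypothesis, so producing an actual contradiction requires explicitly routing a post-$\tau$ binding event back through $\bar z$ using the low-income/irreducibility structure of Assumption~\ref{a:pos_dens} together with the lower bound $c^*(a,z) \geq (1-\bar s)a$ from Proposition~\ref{pr:optpol_linbound} and the continuity of $c^*$ to control the asset trajectory along the routed path.
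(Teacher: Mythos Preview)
Your setup and the decomposition $W_t = W_{t+1} + \ell_{t+1}$ are correct, and the conclusion that under the contradiction hypothesis one obtains $(u'\circ c^*)(a,\bar z) = \EE_{(a,\bar z)}[M_\tau]$ is a clean way to organize the iteration. The gap is exactly where you flag it, and your proposed closing route cannot work: by definition $Z_\tau \neq \bar z$, so for every $s \geq \tau$ the event $\bigcap_{i=0}^{s}\{Z_i = \bar z\}$ is empty on $\{\tau \leq s\}$. Hence no ``post-$\tau$'' binding event, however it is routed, can lie in $\bigcup_t B_t$; the events $B_s$ require the chain \emph{never} to have left $\bar z$ up to time $s$. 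Your plan of ``routing back through $\bar z$'' therefore attempts to construct precisely what the contradiction hypothesis excludes, without first deriving any contradiction. Similarly, the martingale property of $\{M_{t\wedge\tau}\}$ contributes nothing new: it is equivalent to Euler equality holding on $\{t < \tau\} = A_t$, which is exactly what you already assumed. So the optional-sampling saturation is consistent with the hypothesis and does not, by itself or combined with your routing idea, yield a contradiction.

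The paper's argument is structurally different and does not introduce $\tau$ or optional sampling. From the contradiction hypothesis it extracts the event
\[
\triangle_1(t) := \Bigl(\bigcap_{i\leq t}\{c_i < a_i\}\Bigr) \cap \Bigl(\bigcap_{i\leq t}\{Z_i = \bar z\}\Bigr)
\]
of probability $\bar p^{\,t} > 0$, and then, by iterating as in the proof of Lemma~\ref{lm:bind_fntime}, bounds the \emph{deterministic} number $(u'\circ c^*)(a,\bar z)$ above by
\[
\theta^k\Bigl[\max_{z\in\ZZ}\EE_z\hat\beta\hat R\,u'(\hat Y) + M\max_{z\in\ZZ}\EE_z\hat\beta\hat R\Bigr]
\]
for $t = kn+1$, with $\theta < 1$ from \eqref{eq:geo_coeff}. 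Letting $k\to\infty$ forces $(u'\circ c^*)(a,\bar z)\leq 0$, the contradiction. In short, the paper closes by producing a shrinking upper bound on a fixed positive quantity, not by manufacturing a sample path in $\bigcup_t B_t$. If you wish to retain your $W_t/\ell_t$ bookkeeping, the correct direction is to bound $W_0 = (u'\circ c^*)(a,\bar z)$ from above by something that tends to zero, as the paper's $\theta^k$ estimate does; looking past $\tau$ is a dead end.
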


\begin{proof}%[Proof of Lemma~\ref{lm:bind_fntime_2}]
	Fix $a \in (0, \infty)$. If $a \leq \bar{a}(\bar{z})$, the claim holds trivially by 
	Lemma~\ref{lm:binding}. Now consider the case $a > \bar{a}(\bar{z})$. 
	Suppose 
		$\PP_{(a, \bar{z})} \left\{
		\cup_{t \geq 0} \left[
		\{c_t = a_t\} \cap \left( \cap_{i=0}^t \{Z_i = \bar{z} \} \right)
		\right]
		\right\} = 0$.
	Then, based on the De Morgan's law, we have
	\begin{equation*}
		\PP_{(a, \bar{z})} \left\{
		\cap_{t \geq 0} \left[
		\{c_t < a_t\} \cup \left( \cup_{i=0}^t \{Z_i \neq \bar{z} \} \right)
		\right]
		\right\} = 1.
	\end{equation*}
	\begin{equation*}
        \fore
		\PP_{(a, \bar{z})} 
		\left\{
		\{c_t < a_t\} \cup \left( \cup_{i=0}^t \{Z_i \neq \bar{z} \} \right)
		\right\} = 1
        \text{ for all } t \in \NN.
	\end{equation*}
	\begin{equation*}
	    \fore 
	    \PP_{(a, \bar{z})} 
	    \left\{
	    \{c_k < a_k\} \cup \left( \cup_{i=0}^t \{Z_i \neq \bar{z} \} \right)
	    \right\} = 1
	    \text{ for all } k, t \in \NN
	    \text{ with } k \leq t.
	\end{equation*}
	\begin{equation*}
		\fore 
		\PP_{(a, \bar{z})} 
		\left\{
		    \left( \cap_{i=0}^t \{c_i < a_i\} \right) 
		    \cup \left( \cup_{i=0}^t \{Z_i \neq \bar{z} \} \right)
		\right\} = 1
		\text{ for all } t \in \NN.
	\end{equation*}
	Note that the set 
	$\triangle (t) := \left( \cap_{i=0}^t \{c_i < a_i\} \right) 
	\cup \left( \cup_{i=0}^t \{Z_i \neq \bar{z} \} \right)$ 
	can be written as
	\begin{equation*}
		\triangle (t) =  \triangle_1 (t) \cup  \triangle_2 (t),
		\quad \text{ where } \, \triangle_1 (t) \cap  \triangle_2 (t) = \emptyset,
	\end{equation*}
	\begin{equation*}
		\triangle_1 (t) :=  
		\left( \cap_{i=0}^t \{c_i < a_i\} \right) 
		\cap \left( \cap_{i=0}^t \{Z_i = \bar{z} \} \right)
		\; \text{ and } \;
		\triangle_2 (t) := \cup_{i=0}^t \{Z_i \neq \bar{z} \}.  
	\end{equation*}
    Assumption~\ref{a:pos_dens} then implies that, for all $t \geq 0$,
	\begin{align*}
		\PP_{(a,\bar{z})} \{ \triangle_1 (t) \} = 1 - \PP_{\bar{z}} \{ \triangle_2 (t) \} 
		= \PP_{\bar{z}} \left\{ \cap_{i=0}^t \{Z_i = \bar{z} \} \right\} 
		= \bar{p}^t > 0.
	\end{align*}
	Let $n$ and $\theta$ be defined by \eqref{eq:geo_coeff} and let $t = kn + 1$. 
	Similar to the proof of Lemma~\ref{lm:binding}, we can show that, with probability 
	$\bar{p}^t > 0$, 
	\begin{equation*}
		(u' \circ c^*) (a,\bar{z}) \leq 
		\theta^k \left[ 
		\max_{z \in \ZZ} \EE_z \hat{\beta} \hat{R} u' (\hat{Y}) + 
		M \max_{z \in \ZZ} \EE_z \hat{\beta} \hat{R}
		\right]
	\end{equation*}
	for some constant $M \in \RR_+$. Since $\theta \in (0,1)$ and $(u' \circ c^*) (a,\bar{z}) > 0$, 
	Lemma~\ref{l:tefi} implies that there exists $N \in \NN$ such that 
	\begin{equation*}
		\theta^N \left[ 
		\max_{z \in \ZZ} \EE_z \hat{\beta} \hat{R} u' (\hat{Y}) + 
		M \max_{z \in \ZZ} \EE_z \hat{\beta} \hat{R}
		\right]
		< (u' \circ c^*) (a,\bar{z}).
	\end{equation*}
	As a result, we have $(u' \circ c^*)(a,\bar{z}) < (u' \circ c^*) (a,\bar{z})$  with probability
    $\bar{p}^{Nn + 1} > 0$. This is a contradiction. Hence the stated claim is verified.
\end{proof}

Let $F(\diff a_{t+1} \mid a_t, Z_t, Z_{t+1})$ be defined such that 
$\PP \{ a_{t+1} \in A \mid (a_t, Z_t, Z_{t+1}) = (a,z,z') \} 
= \int \1 \{a' \in A \} F(\diff a' \mid a, z, z')$
at  $A \in \bB (\RR_+)$.

\begin{lemma}
	\label{lm:mono_in_a}
	Let $h : \SS \to \RR_+$ be an integrable map such that $a \mapsto h(a,z)$
	is decreasing for all $z \in \ZZ$. Then, for all $t \in \NN$ and $z \in
	\ZZ$, the map $a \mapsto \ell (a,z, t) := 
	\int h(a',z') Q^t ((a,z), \diff (a',z'))$ is decreasing.
\end{lemma}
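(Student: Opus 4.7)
The plan is to establish pathwise monotonicity of wealth in the initial condition, from which the lemma will follow at once. The key input is Proposition~\ref{pr:monotonea}, which tells us that the optimal saving function $i^*(a,z) := a - c^*(a,z)$ is increasing in $a$. Combined with $R(\hat z, \hat\zeta) \geq 0$ and the additive structure of the law of motion \eqref{eq:dyn_sys}, this will transmit an initial ordering of assets forward in time along any single sample path.

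More concretely, I would fix $z \in \ZZ$ and $a_1 \leq a_2$, and construct two processes $\{(a_t^{(1)}, Z_t)\}$ and $\{(a_t^{(2)}, Z_t)\}$ on a common probability space using the same innovations $\{(Z_t, \zeta_t, \eta_t)\}$ and the same exogenous state path, with initial assets $a_0^{(j)} = a_j$. I would then prove by induction on $t$ that $a_t^{(1)} \leq a_t^{(2)}$ almost surely. The base case is immediate. For the inductive step, assuming $a_t^{(1)} \leq a_t^{(2)}$, Proposition~\ref{pr:monotonea} gives
\begin{equation*}
    a_t^{(1)} - c^*(a_t^{(1)}, Z_t) \;\leq\; a_t^{(2)} - c^*(a_t^{(2)}, Z_t)
    \quad \text{a.s.},
\end{equation*}
and since $R(Z_{t+1}, \zeta_{t+1}) \geq 0$, adding the common term $Y(Z_{t+1}, \eta_{t+1})$ to both sides of the inequality multiplied by $R_{t+1}$ yields $a_{t+1}^{(1)} \leq a_{t+1}^{(2)}$ almost surely via \eqref{eq:dyn_sys}.

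Given this coupling, since $a \mapsto h(a,z)$ is decreasing for each $z$, the induction implies $h(a_t^{(1)}, Z_t) \geq h(a_t^{(2)}, Z_t)$ almost surely. Taking expectations and using integrability of $h$ together with the definition of $Q^t$ gives
\begin{equation*}
    \ell(a_1, z, t)
    = \EE \left[ h(a_t^{(1)}, Z_t) \,\big|\, (a_0,Z_0) = (a_1, z) \right]
    \geq \EE \left[ h(a_t^{(2)}, Z_t) \,\big|\, (a_0,Z_0) = (a_2, z) \right]
    = \ell(a_2, z, t),
\end{equation*}
which is exactly the claim.

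There is no real obstacle here; the substantive work has already been done in Proposition~\ref{pr:monotonea}, and the rest is a standard coupling-plus-induction argument relying only on the monotonicity of $i^*$ and nonnegativity of returns. The one point requiring a little care is ensuring the induction argument can be formulated without reference to joint densities or a particular kernel representation for $F(\diff a' \mid a, z, z')$, which is why I prefer the pathwise coupling framing over a direct calculation with $Q^t$.
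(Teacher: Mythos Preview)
Your proof is correct and relies on the same key input as the paper's argument, namely Proposition~\ref{pr:monotonea} together with nonnegativity of $R$ and the additive law of motion. The paper frames the induction at the kernel level---showing $\ell(\cdot,z,1)$ is decreasing and then bootstrapping via $\ell(a,z,t+1)=\int \ell(a',z',t)\,Q((a,z),\diff(a',z'))$---whereas you establish the pathwise ordering $a_t^{(1)}\leq a_t^{(2)}$ directly by coupling; these are two presentations of the same mechanism, and your version has the mild advantage of avoiding the intermediate kernel $F(\diff a'\mid a,z,z')$ altogether.
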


\begin{proof}%[Proof of Lemma~\ref{lm:mono_in_a}]
	Fix $z \in \ZZ$. When $t = 1$, \eqref{eq:dyn_sys} implies that
	\begin{equation*}
	\ell (a,z,1) = \int \left[ 
	\int h(a',z') F(\diff a' \mid a, z, z')
	\right] 
	P(z, z') \vartheta (\diff z').
	\end{equation*}
	Since $a \mapsto h(a,z)$ is decreasing, and by
	Proposition~\ref{pr:monotonea} and \eqref{eq:dyn_sys}, the optimal asset
	accumulation path $a_{t+1}$ is increasing in $a_t$ with probability one,
	we know that 
	$a \mapsto \int h(a',z') F(\diff a' \mid a, z, z') $
	is decreasing for all $z' \in \ZZ$. 
	Thus, $a \mapsto \ell (a,z, 1)$ is decreasing. The claim holds for $t=1$. Suppose this claim holds for arbitrary $t$, it remains to show that it holds for $t+1$. Note that
	\begin{align*}
	\ell (a,z, t+1) 
	%&= \int h(a',z') Q^{t+1} ((a,z), \diff (a',z'))    \\
	&= \iint h(a'',z'') Q^t ((a',z'), \diff (a'',z'')) Q ((a,z), \diff (a',z'))    \\
	&= \int \ell (a',z', t) Q ((a,z), \diff (a',z')).
	\end{align*}
	Since $a' \mapsto \ell (a',z',t)$ is decreasing for all $z' \in \ZZ$,
	based on the induction argument, $a \mapsto \ell (a, z, t+1)$ is
	decreasing. The stated claim then follows.
\end{proof}

\begin{lemma}
	\label{lm:psi_irr}
	The Markov process $\{ (a_t,Z_t) \}_{t \geq 0}$ is $\psi$-irreducible.
\end{lemma}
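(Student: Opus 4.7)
The plan is to exhibit a nontrivial $\sigma$-finite measure $\psi$ on $\bB(\SS)$ and verify that, starting from any $(a, z) \in \SS$, every set with $\psi$-positive measure is reached with positive probability in finitely many steps. I concentrate $\psi$ on the fiber $\RR_+ \times \{\bar z\}$: in case (Y1), take $\psi$ to be the Dirac mass at $(y_\ell, \bar z)$; in case (Y2), take $\psi$ to be Lebesgue measure restricted to $(y_\ell, \delta) \times \{\bar z\}$, with $\delta$ as in Assumption~\ref{a:pos_dens}.

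The reachability argument chains three positive-probability events. First, by the irreducibility of $\{Z_t\}$ on the finite set $\ZZ$, choose $n_1 \geq 0$ with $P^{n_1}(z, \bar z) > 0$; since $a_t > 0$ almost surely for $t \geq 1$, with positive probability the chain lies at some $(a_{n_1}, \bar z)$ with $a_{n_1} > 0$. Second, I apply Lemma~\ref{lm:bind_fntime_2} fiber-by-fiber: for each fixed $a' > 0$, the event that the borrowing constraint binds at some time $k \geq 0$ while $Z_j = \bar z$ for all $j \leq k$ has positive $\PP_{(a', \bar z)}$-probability. Appending one more step at $\bar z$ (conditional probability $P(\bar z, \bar z) > 0$), integrating against the conditional law of $a_{n_1}$ given $Z_{n_1} = \bar z$, and applying countable subadditivity over $k$ produces a deterministic $n_2 \geq 0$ such that the event $E$ on which the constraint binds at time $n_1 + n_2$ and $Z_j = \bar z$ for all $n_1 \leq j \leq n_1 + n_2 + 1$ has positive $\PP_{(a,z)}$-probability. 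Third, on $E$ one has $a_{n_1 + n_2 + 1} = Y_{n_1 + n_2 + 1}$ with $Z_{n_1+n_2+1} = \bar z$, and since the innovation $\eta_{n_1+n_2+1}$ is independent of the past, the conditional law of $a_{n_1+n_2+1}$ given $E$ coincides with the law of $Y$ given $Z = \bar z$. By Assumption~\ref{a:pos_dens} this law charges $\{y_\ell\}$ in case (Y1) and has positive density on $(y_\ell, \delta)$ in case (Y2), so it assigns positive probability to any $\psi$-positive set. Chaining the three steps yields $Q^{n_1 + n_2 + 1}((a, z), A) > 0$ for every $A$ with $\psi(A) > 0$.

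The main obstacle is the second stage: Lemma~\ref{lm:bind_fntime_2} provides pointwise positivity only for each deterministic initial wealth $a' > 0$, while after stage one the wealth $a_{n_1}$ is random with a distribution that may be continuous and that depends on $(a, z)$. The resolution is to integrate the pointwise bound against the conditional law of $a_{n_1}$, exploiting that this law is supported in $(0, \infty)$, and then to pass to a single deterministic binding time $n_2$ via countable subadditivity on $k$. The remaining pieces---chaining through the Markov property and separating the discrete and density cases for $\{Y_t\}$---are routine.
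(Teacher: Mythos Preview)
Your proof is correct and follows the same overall architecture as the paper's: concentrate the irreducibility measure on the fiber near $(y_\ell,\bar z)$, use irreducibility of $\{Z_t\}$ to reach $\bar z$, invoke Lemma~\ref{lm:bind_fntime_2} to force the borrowing constraint to bind while the exogenous state sits at $\bar z$, and then observe that the next asset value equals $Y(\bar z,\hat\eta)$, whose law charges every $\psi$-positive set by Assumption~\ref{a:pos_dens}.

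The genuine difference lies in the second stage. The paper first uses Lemma~\ref{lm:bd_in_prob_at} to localize the random wealth $a_{n_1}$ below a deterministic threshold $\tilde a$, then applies Lemma~\ref{lm:bind_fntime_2} at the single point $\tilde a$ to obtain a binding time $T$, and finally invokes the monotonicity result Lemma~\ref{lm:mono_in_a} to conclude that the \emph{same} $T$ works uniformly for every $a'\in(0,\tilde a)$. You bypass both auxiliary lemmas: you integrate the pointwise positivity from Lemma~\ref{lm:bind_fntime_2} against the conditional law of $a_{n_1}$, and then use Fubini together with countable subadditivity over the binding time $k$ to extract a deterministic $n_2$ with $\int p_{n_2}(a')\,\mu(\diff a')>0$. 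This is a legitimate and slightly more economical route, since it uses only measurability of $a'\mapsto \PP_{(a',\bar z)}\{c_k=a_k,\ Z_i=\bar z\ \forall i\le k\}$ and the fact that $\mu$ is supported on $(0,\infty)$, rather than the structural monotonicity of the optimal savings map. The paper's approach, on the other hand, yields something stronger along the way (a uniform lower bound over $a'\le\tilde a$), which it reuses later when proving that bounded rectangles are small and petite (Lemmas~\ref{lm:str_aperi} and~\ref{lm:petite_set}); your argument would not directly feed into those proofs. One small point worth making explicit in your write-up: when $z=\bar z$ and $a=0$ you should take $n_1\ge 1$ (possible since $P(\bar z,\bar z)>0$) so that $a_{n_1}\ge Y_{n_1}>0$ almost surely and the integration step goes through.
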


\begin{proof}%[Proof of Lemma~\ref{lm:psi_irr}]
	Recall $\delta > y_\ell$ given by Assumption \ref{a:pos_dens}. Let $\DD \in \bB(\SS)$ be defined by
	$\DD := \{ y_\ell \} \times \{\bar{z}\}$ if (Y1) holds and 
	$\DD := (y_\ell, \delta ) \times \{\bar{z}\}$ if (Y2) holds.
	We define the measure $\varphi$ on $\bB(\SS)$ by
	$\varphi(A) := (\nu \times \vartheta) (A \cap \DD)$ for 
	$A \in \bB (\SS)$.
	Clearly $\varphi$ is a nontrivial measure. In particular, $\vartheta
	(\{\bar{z}\}) = 1$ as $\vartheta$ is the counting measure. Moreover, since
	$y_\ell$ is the greatest lower bound of the support of $\{Y_t\}$, it must
	be the case that $\nu(\{y_\ell\}) > 0$ if (Y1) holds and that
	$\nu((y_\ell, \delta)) > 0$ if (Y2) holds. As a result, $\varphi(\SS) =
	\nu (\{y_\ell\}) \times \vartheta (\{\bar{z}\}) > 0$ when (Y1) holds and
	$\varphi(\SS) = \nu ((y_\ell, \delta)) \times \vartheta (\{\bar{z}\}) >0$ when (Y2) holds.
	
	We first show that $\{(a_t, Z_t)\}$ is $\phi$-irreducible. Let $A$ be an element of $\bB(\SS)$ such that $\varphi (A) > 0$. Fix $(a,z) \in \SS$. We need to show that $\{(a_t, Z_t) \}$ visits set $A$ in finite time with positive probability.
	
	Since $\{z_t\}$ is irreducible,  
	$\PP_z \{Z_{N_0} = \bar{z} \} > 0$ for some integer $N_0 \geq 0$. By Lemma~\ref{lm:bd_in_prob_at}, there exists $\tilde{a} < \infty$ such that 
	$\PP_{(a,z)} \{ a_{N_0} < \tilde{a}, Z_{N_0} = \bar z \} > 0$.
	By Lemma~\ref{lm:bind_fntime_2}, there exists $T \in \NN$ such that 
		$\PP_{(\tilde a, \bar{z})} \left\{
			c_{T} = a_{T}, \, 
			Z_{T} = \bar{z} 
		\right\}
		\geq 
		\PP_{(\tilde a, \bar{z})} \left\{
			c_{T} = a_{T}, \, 
			\cap_{i=0}^{T} \{Z_i = \bar{z} \} 
		\right\} > 0$.
	Lemma~\ref{lm:binding} and Lemma~\ref{lm:mono_in_a} then imply that
	$\PP_{(a', \bar z)} \left\{ c_{T} = a_{T}, \, Z_{T} = \bar{z} \right\} > 0$ for all $a' \in (0, \tilde{a})$. Hence, for $N:=N_0 + T$ and $E:= \left\{ c_{N} = a_{N}, \, Z_N = \bar{z} \right\}$, we have
	\begin{equation}
		\label{eq:binding&inC}
		\PP_{(a, z)} (E) 
		\geq \int_{\{a' \leq \tilde a, \, z'= \bar z \}} 
		    \PP_{(a',\bar z)} \{c_T = a_T, Z_T = \bar z \}
	    Q^{N_0} ((a,z), \diff (a',z')) > 0
		%\; \text{ where } \,
		%E := \left\{ c_{N} = a_{N}, \, Z_N = \bar{z} \right\}.
	\end{equation}
	based on the Markov property. By \eqref{eq:dyn_sys}, we have 
	\begin{align}
		\label{eq:bind_ineq}
		\PP_{(a,z)} \{ (a_{N+1}, Z_{N+1}) \in A \}   
		&\geq \PP_{(a,z)} \left\{ 	        (a_{N+1}, Z_{N+1}) \in A, \, a_N = c_N, \, Z_N = \bar{z} \right\}    
            \nonumber \\
		&= \PP_{(a,z)} \left\{ (a_{N+1}, Z_{N+1}) \in A \mid a_N = c_N, \, Z_N = \bar{z} \right\} \, \PP_{(a,z)} (E)    
            \nonumber \\
		& = \PP_{(a,z)} \left\{ 	        (Y_{N+1}, Z_{N+1}) \in A, \, a_N = c_N, \, Z_N = \bar{z} \right\}.
	\end{align}
	Note that, by Assumption~\ref{a:pos_dens}, 
	$f (y'' \mid z'') P(\bar{z}, z'') > 0$ whenever $(y'', z'') \in \DD$. Since in 
	addition $\varphi (A) = (\nu \times \vartheta)(A \cap \DD) > 0$, we have
	\begin{equation*}
	\int_A f (y'' \mid z'') P(\bar{z}, z'') 
	(\nu \times \vartheta) [\diff (y'', z'')] > 0.
	\end{equation*}
	Let $\triangle := \PP_{(a,z)} \{ (a_{N+1}, Z_{N+1}) \in A \}$. Then \eqref{eq:binding&inC} and \eqref{eq:bind_ineq} imply that
	\begin{align*}
	\triangle 
	&\geq \int_E 
	\left\{
	\int_A f (y'' \mid z'') P(z', z'') 
	(\nu \times \vartheta) [\diff (y'', z'')]
	\right\}
	Q^N \left( (a,z), \diff (a', z') \right) > 0.
	\end{align*}
	Therefore, we have shown that any measurable subset with positive $\varphi$ 
	measure can be reached in finite time with positive probability, i.e., 
	$\{ (a_t,Z_t)\}$ is $\varphi$-irreducible. Based on Proposition~4.2.2 of 
	\cite{meyn2009markov}, there exists a maximal probability measure $\psi$ on $\bB(\SS)$ such that 
	$\{ (a_t,Z_t)\}$ is $\psi$-irreducible.
\end{proof}

\begin{lemma}
	\label{lm:inf_abar}
	Let the function $\bar{a}$ be defined as in \eqref{eq:a_bar}. Then
    $\bar{a}(\bar{z}) \geq y_\ell$ if (Y1) holds, while $\bar{a}(\bar{z}) >
    y_\ell$ if (Y2) holds.
\end{lemma}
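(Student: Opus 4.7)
The plan is to argue by contradiction, combining Lemma \ref{lm:binding} (which says that the borrowing constraint binds at $(a,\bar z)$ iff $a\leq\bar a(\bar z)$) with Lemma \ref{lm:bind_fntime_2} (which guarantees positive probability of binding in finite time while the chain stays in $\bar z$). The crucial ingredient is the pathwise wealth floor $a_t\geq Y_t\geq y_\ell$ for all $t\geq 1$, which is immediate from the law of motion $a_t=R_t(a_{t-1}-c_{t-1})+Y_t$ together with $R_t\geq 0$ and $c_{t-1}\leq a_{t-1}$.

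For case (Y1), I would suppose $\bar a(\bar z)<y_\ell$ and apply Lemma \ref{lm:bind_fntime_2} with initial wealth $a=y_\ell$: there must exist, with positive probability, some time $t\geq 0$ at which $c_t=a_t$ and $Z_0=\cdots=Z_t=\bar z$. By Lemma \ref{lm:binding}, this forces $a_t\leq\bar a(\bar z)<y_\ell$. But at $t=0$ we have $a_0=y_\ell$, and for $t\geq 1$ the wealth floor gives $a_t\geq y_\ell$; both violate $a_t<y_\ell$, so the event has probability zero, contradicting Lemma \ref{lm:bind_fntime_2}. Hence $\bar a(\bar z)\geq y_\ell$.

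For case (Y2), the same argument rules out $\bar a(\bar z)<y_\ell$. To rule out equality, I would suppose $\bar a(\bar z)=y_\ell$ and apply Lemma \ref{lm:bind_fntime_2} at any initial wealth $a>y_\ell$. At $t=0$, $a_0>\bar a(\bar z)$ so no binding occurs. For $t\geq 1$, the density representation forces $\PP\{Y_t=y_\ell\}=0$, so $a_t\geq Y_t>y_\ell=\bar a(\bar z)$ almost surely, again no binding. This contradicts Lemma \ref{lm:bind_fntime_2} and yields the strict inequality.

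No substantive obstacle arises; the argument is essentially a one-step observation about the wealth floor combined with the two cited lemmas. The only care needed is in case (Y2), where the strict bound $Y_t>y_\ell$ almost surely, which is ensured by the density representation, is what upgrades the weak inequality to a strict one.
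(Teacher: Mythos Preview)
Your argument is correct and follows essentially the same route as the paper: assume the contrary, use Lemma~\ref{lm:binding} to convert the binding event $\{c_t=a_t\}$ into $\{a_t\leq\bar a(\bar z)\}$, invoke the pathwise floor $a_t\geq Y_t$ for $t\geq 1$ to show this event is null, and contradict Lemma~\ref{lm:bind_fntime_2}. Your version is in fact slightly more explicit than the paper's in handling the $t=0$ term of the union (by fixing a specific initial wealth above $\bar a(\bar z)$), and your two-step treatment of (Y2) is equivalent to the paper's single contradiction hypothesis $\bar a(\bar z)\leq y_\ell$; the only minor point is that in (Y1) you should note the case $y_\ell=0$ is trivial before choosing $a=y_\ell$.
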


\begin{proof}%[Proof of Lemma~\ref{lm:inf_abar}]	

	Suppose (Y1) holds and $\bar{a}(\bar{z}) < y_\ell$. Then, by Lemma~\ref{lm:binding}, for all $t \in \NN$, 
	\begin{align}
	\label{eq:subset}
		\left[ 
		\{c_t = a_t\} \cap 
		\left( \cap_{i=0}^t \{Z_i = \bar{z} \} \right) 
		\right] \,
		& = \, \left[
		\{ a_t \leq \bar{a}(Z_t) \} \cap
		\left( \cap_{i=0}^t \{ Z_i = \bar{z} \} \right)
		\right]    \nonumber \\
		& \subset \, \left[ 
		\{ a_t < y_\ell \} \cap
		\left( \cap_{i=0}^t \{ Z_i = \bar{z} \} \right)
		\right]
		\, \subset \, \{a_t < y_\ell \}.
	\end{align}
	%
	%Hence, for all $a \in ( \bar{a}(\bar{z}), \infty )$, $c_0 < a_0$ by Lemma~\ref{lm:binding}, and, for all $t \in \NN$, 
	Hence, for all $a \in (0, \infty)$ and $t \in \NN$,
	\begin{align*}
	\PP_{(a,\bar{z})} \left[ 
	    \{ c_t = a_t \} \cap \left( \cap_{i=0}^t \{ Z_i = \bar{z} \} \right)
	\right]
	\leq \PP_{(a, \bar{z})} \{ a_t < y_\ell \} = 0,
	\end{align*} 
	where the last equality follows from \eqref{eq:dyn_sys}, which implies
    that $a_t \geq Y_t \geq y_\ell$ with probability one. This is contradicted with Lemma~\ref{lm:bind_fntime_2}. 
	
	Suppose (Y2) holds and $\bar{a}(\bar{z}) \leq y_\ell$. By definition, $\PP_z \{Y_t \leq y_\ell \} = 0$ for all $z \in \ZZ$ and $t \in \NN$.
%	%
%	\begin{equation*}
%		\PP_z \{Y_t \leq \underline{y} \} = 0
%		\; \text{ for all } 
%		\, z \in \ZZ \,
%		\text{ and } \, t \in \NN.
%	\end{equation*}
%	%
	Since $a_t \geq Y_t$ with probability one, we have
		$\PP_{(a,z)} \{a_t \leq y_\ell \} = 0$
		 for all  
         $(a,z) \in \SS$ and $t \in \NN$.
	Via similar analysis to \eqref{eq:subset}, Lemma~\ref{lm:binding} implies that  
		$\left[ 
		\{ c_t = a_t \} \cap \left( \cap_{i=0}^t \{ Z_i = \bar{z} \} \right)
		\right]
		\, \subset \, \{ a_t \leq y_\ell \}$ 
    for all $t \in \NN$.
	%Hence, for all $a \in (\bar{a}(\bar{z}), \infty)$, $c_0 < a_0$ by Lemma~\ref{lm:binding}, and, for all $t \in \NN$, 
	Hence, for all $a \in (0,1)$ and $t \in \NN$, we have
		$\PP_{(a,\bar{z})} \left[ 
		\{ c_t = a_t \} \cap \left( \cap_{i=0}^t \{ Z_i = \bar{z} \} \right)
		\right]
		\leq \PP_{(a, \bar{z})} \{ a_t \leq y_\ell \} = 0$.
	Again, this contradicts Lemma~\ref{lm:bind_fntime_2}. 
\end{proof}

\begin{lemma}
	\label{lm:str_aperi}
	The Markov process $\{ (a_t, Z_t)\}_{t \geq 0}$ is strongly aperiodic.
\end{lemma}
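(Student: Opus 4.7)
The plan is to verify strong aperiodicity by exhibiting a $\nu_1$-small set $C$ with $\nu_1(C)>0$, as in the definition on page~114 of \cite{meyn2009markov}. The natural candidate is
\begin{equation*}
    C := [0,\bar a(\bar z)] \times \{\bar z\},
\end{equation*}
where $\bar a(\cdot)$ is the consumption threshold from \eqref{eq:a_bar}. Most of the preparatory work has already been done: Lemma~\ref{lm:binding} makes the consumption rule collapse to $c^*(a,\bar z) = a$ on $C$, and Lemma~\ref{lm:inf_abar} pins down the location of $\bar a(\bar z)$ relative to $y_\ell$.

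The first step is to exploit the binding constraint. On $C$, the law of motion \eqref{eq:dyn_sys} reduces to $a_{t+1} = Y(Z_{t+1},\eta_{t+1})$, so the one-step conditional distribution of $(a_{t+1},Z_{t+1})$ is the same probability measure $\nu_1$ for every starting point $(a,\bar z) \in C$. The minorization $Q((a,\bar z),\cdot) \geq \nu_1(\cdot)$ then holds with equality, so $C$ is $\nu_1$-small with minorization order $m=1$, which is the order demanded by strong aperiodicity.

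The second step is to verify that $\nu_1(C)>0$. Conditioning on $Z_{t+1}$ and using $P(\bar z,\bar z) = \bar p > 0$ from Assumption~\ref{a:pos_dens}, this reduces to showing $\PP\{Y_t \leq \bar a(\bar z) \mid Z_t = \bar z\} > 0$. In case~(Y1), Lemma~\ref{lm:inf_abar} gives $\bar a(\bar z) \geq y_\ell$, and the positive atom $\PP\{Y_t = y_\ell \mid Z_t = \bar z\} > 0$ from Assumption~\ref{a:pos_dens} immediately delivers the bound. In case~(Y2), Lemma~\ref{lm:inf_abar} gives the strict inequality $\bar a(\bar z) > y_\ell$, and positivity of $f(\cdot \mid \bar z)$ on $(y_\ell,\delta)$ makes $\PP\{Y_t \in (y_\ell,\min\{\bar a(\bar z),\delta\}) \mid Z_t = \bar z\}$ strictly positive.

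The main obstacle is conceptual rather than computational: one must ensure that the binding region at $\bar z$ is nondegenerate enough to be reached from inside itself in a single step. In case~(Y2) the strict inequality $\bar a(\bar z) > y_\ell$ supplied by Lemma~\ref{lm:inf_abar} is indispensable, since a density puts no mass on any single point; this is precisely why the two income cases had to be separated in that lemma. Once these ingredients are assembled, strong aperiodicity follows immediately from the definition.
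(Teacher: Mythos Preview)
Your proposal is correct and follows essentially the same route as the paper: exhibit a subset of the binding region at $\bar z$ on which the one-step kernel collapses to the law of $(Y_{t+1},Z_{t+1})$ given $Z_t=\bar z$, then use Lemma~\ref{lm:inf_abar} together with Assumption~\ref{a:pos_dens} to show this law puts positive mass back on the chosen set. The only cosmetic difference is that the paper takes the smaller set $\DD_1 = \{y_\ell\}\times\{\bar z\}$ in case~(Y1) and $\DD_1 = (y_\ell,\min\{\delta,\bar a(\bar z)\})\times\{\bar z\}$ in case~(Y2), whereas you take the full binding region $C = [0,\bar a(\bar z)]\times\{\bar z\}$; since both sets lie inside the binding region and the minorizing measure is the same, the argument is identical in substance.
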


\begin{proof}%[Proof of Lemma~\ref{lm:str_aperi}]
	By the definition of strong aperiodicity, we need to show that there exists 
	a $v_1$-small set $\DD_1$ with $v_1 (\DD_1) > 0$, i.e., there exists a nontrivial 
	measure $v_1$ on $\bB (\SS)$ and a subset $\DD_1 \in \bB (\SS)$ such that 
	$v_1 (\DD_1) > 0$ and
	\begin{equation}
	\label{eq:small}
	\inf_{(a,z) \in \DD_1}  
	Q \left((a,z), A \right) \geq v_1 \left( A \right)
	\quad \text{for all }
	A \in \bB (\SS).
	\end{equation}
	For $\delta > 0$ given by Assumption~\ref{a:pos_dens}, let 
	$\CC := \left( y_\ell, \min \left\{ \delta, \, \bar{a}(\bar z) \right\} \right)$ 
	and let 
		$\DD_1 := \{ y_\ell \} \times \{\bar{z} \}$ if (Y1) holds and
		$\DD_1 := \CC \times \{\bar{z}\}$ if (Y2) holds.
	We now show that $\DD_1$ satisfies the above conditions. Define
	    $r(a', z') := f(a' \,|\, z') P(\bar{z}, z')$
	and note that $r(a',z') > 0$ on $\DD_1$. Define the measure $v_1$ on $\bB(\SS)$ by 
		$v_1 (A) := \int_A r(a',z') (\nu \times \vartheta) [\diff (a',z')]$.
	If (Y1) holds, then $\nu(\{y_\ell \}) > 0$ as shown above, 
	and, if (Y2) holds, Lemma~\ref{lm:inf_abar} implies that $\nu(\CC) > 0$. 
	Since in addition $\vartheta(\{\bar{z}\}) > 0$, it always holds that
	$(\nu \times \vartheta) (\DD_1) > 0$.
	Moreover, since $r(a',z') >0$ on $\DD_1$, we have $v_1 (\DD_1) > 0$ 
	and $v_1$ is a nontrivial measure.
	
%	Let $g[(a',z') \mid (a,z)]$ denote the density representation of the stochastic 
%	kernel $Q$ when $(a,z) \in \DD_1$. Lemma~\ref{lm:binding} implies that
%	%
%	\begin{equation*}
%		g[(a',z') \mid (a,z)] = f (a' \mid z') p(z'\mid z),
%		\qquad (a,z) \in \DD_1.
%	\end{equation*}
%	%
	For all $(a,z) \in \DD_1$ and $A \in \bB(S)$, Lemma~\ref{lm:binding} implies that
	\begin{equation*}
		Q \left((a,z), A \right) 
		= \int_A r(a',z') (\nu \times \vartheta) [\diff (a',z')]
		= v_1 (A). 
	\end{equation*}
	Hence, $\DD_1$ satisfies \eqref{eq:small} and $\{(a_t, Z_t)\}_{t \geq 0}$ is strongly aperiodic.
\end{proof}

\begin{lemma}
	\label{lm:petite_set}
	The set $[0, d] \times \ZZ$ is a petite set for all $d \in \RR_+$. 
\end{lemma}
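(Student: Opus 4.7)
The plan is to strengthen the conclusion and show that $K_d := [0, d] \times \ZZ$ is in fact a \emph{small} set, hence petite, by constructing for some fixed $N \in \NN$ and $\epsilon > 0$ a uniform minorization $Q^{N+1}((a, z), A) \geq \epsilon\, v_1(A)$ for all $(a, z) \in K_d$ and $A \in \bB(\SS)$, with $v_1$ the minorizing measure from Lemma~\ref{lm:str_aperi}. The key observation is that the one-step minorization $Q((a', \bar z), \cdot) \geq v_1$ established in the proof of Lemma~\ref{lm:str_aperi} in fact extends from $\DD_1$ to the entire binding region $B := [0, \bar{a}(\bar z)] \times \{\bar z\}$: the proof there only uses $c^*(a', \bar z) = a'$, which by Lemma~\ref{lm:binding} holds for every $a' \leq \bar{a}(\bar z)$. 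Hence it suffices to produce a fixed $N$ and $\epsilon > 0$ such that $\PP_{(a, z)}\{(a_N, Z_N) \in B\} \geq \epsilon$ uniformly over $(a, z) \in K_d$, after which one further step of the chain supplies the minorization.

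I would obtain this uniform hitting estimate in two legs, through an intermediate compact window $[0, M] \times \{\bar z\}$. For Leg~1, finite-state irreducibility of $\{Z_t\}$ together with $P(\bar z, \bar z) > 0$ makes $\bar z$ aperiodic, so there exist $N_0 \in \NN$ and $q_0 > 0$ with $P^{N_0}(z, \bar z) \geq q_0$ for every $z \in \ZZ$. Inspection of the proof of Lemma~\ref{lm:bd_in_prob_at} shows its asset bound is linear in the initial wealth, so $\EE_{(a, z)} a_{N_0}$ is bounded by some $C(d)$ uniformly on $K_d$. Markov's inequality and the elementary bound $\PP(E \cap F) \geq \PP(E) + \PP(F) - 1$ then give, for $M$ chosen large enough that $C(d)/M < q_0/2$, the uniform estimate $\PP_{(a, z)}\{a_{N_0} \leq M,\ Z_{N_0} = \bar z\} \geq q_0/2$. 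For Leg~2, I apply Lemma~\ref{lm:bind_fntime_2} at the single point $(M, \bar z)$ to obtain some $T \in \NN$ and $q_3 > 0$ with $\PP_{(M, \bar z)}\{a_T \leq \bar{a}(\bar z),\ Z_T = \bar z\} \geq q_3$, using Lemma~\ref{lm:binding} to identify the binding event $\{c_T = a_T\}$ with $\{a_T \leq \bar{a}(\bar z)\}$. To transfer this pointwise bound to every starting asset $a' \in [0, M]$, I apply the monotonicity Lemma~\ref{lm:mono_in_a} to $h(a, z) := \1_{\{\bar z\}}(z)\, \1_{[0, \bar{a}(\bar z)]}(a)$, which is nonincreasing in $a$ for each $z$; the hitting probability is then nonincreasing in the starting wealth and so remains $\geq q_3$ throughout $[0, M]$. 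Chaining the two legs by the Markov property yields $\PP_{(a, z)}\{(a_{N_0 + T}, Z_{N_0 + T}) \in B\} \geq q_0 q_3 / 2$ uniformly on $K_d$, delivering the desired small-set minorization at time $N_0 + T + 1$.

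The main technical obstacle is the transfer step in Leg~2: Lemma~\ref{lm:bind_fntime_2} provides only pointwise positivity of the hitting probability, and a priori this could be very non-uniform in the starting wealth. The resolution is to exploit monotonicity of the controlled asset process in its initial condition, encoded in Lemma~\ref{lm:mono_in_a}. Once the target event is phrased in terms of a set decreasing in $a$---the low-wealth subset $[0, \bar{a}(\bar z)]$---its hitting probability is automatically monotone in the initial asset, so positivity at the single upper endpoint $a' = M$ extends uniformly over $[0, M]$. Without this structure one would need to build an explicit geometrically contracting sample path, which is awkward because $R_t$ and $Y_t$ may both be unbounded.
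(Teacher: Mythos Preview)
Your argument is correct and shares the same core idea as the paper---the monotonicity Lemma~\ref{lm:mono_in_a} is used in exactly the way you describe to extend a hitting probability from a single upper endpoint to a whole interval---but the outer organization differs. The paper does not try to prove that $[0,d] \times \ZZ$ is small all at once. Instead it fixes a single $z \in \ZZ$, shows that each slice $[0,d] \times \{z\}$ is $v_N$-small for some $N$ that is allowed to depend on $z$ (using essentially your Leg~2 combined with the argument of Lemma~\ref{lm:psi_irr} to reach $\bar z$), and then concludes by citing the standard fact that a finite union of petite sets is petite for a $\psi$-irreducible chain (Proposition~5.5.5 of \cite{meyn2009markov}). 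Your route trades this last appeal for a short aperiodicity argument on the finite-state chain $\{Z_t\}$, which buys you a uniform $N_0$ over all $z$ and hence the stronger conclusion that $[0,d] \times \ZZ$ is itself small. Both approaches are equally valid; yours is self-contained and yields slightly more, while the paper's is a bit shorter because it does not need to synchronize the hitting time across $z$.
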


\begin{proof}%[Proof of Lemma~\ref{lm:petite_set}]
	Fix $d \in (0, \infty)$ and $z \in \ZZ$. Let $B:= [0,d] \times \{z\}$. By Lemma~\ref{lm:bind_fntime_2},
	\begin{equation}
		\label{eq:binding_d}
		\PP_{(d, z)} \{ c_{N-1} = a_{N-1}, Z_{N-1} = \bar{z} \} > 0
		\quad \text{for some } \, N \in \NN. 
	\end{equation}
	We start by showing that there exists a nontrivial measure $v_N$
	on $\bB (\SS)$ such that 
	\begin{equation}
	\label{eq:vN_small}
	\inf_{(a,z) \in B} Q^N ((a,z), A) \geq v_N (A) 
	\quad \text{for all }  A \in \bB(\SS).
	\end{equation}
	In other words, $B$ is a $v_N$-small set. Fix $A \in \bB(\SS)$. For all $z' \in \ZZ$, define
	\begin{equation*}
		m(z') := \int \left[
		\int \1 \{(y'', z'') \in A \} f (y'' \mid z'') \diff y''
		\right] 
		P(z', z'') \vartheta (\diff z'').
	\end{equation*}
	Note that for all $(a,z) \in B$, Lemma~\ref{lm:binding} implies that
	\begin{align*}
		Q^N((a,z), A) 
%		&\geq \PP_{a,z} \left\{ 
%		(a_N, Z_N) \in A, \, c_{N-1} = a_{N-1}, \, Z_{N-1} = \bar{z} 
%		\right\}    \\
%		&= \PP_{a,z} \left\{ 
%		(Y_N, Z_N) \in A, \, c_{N-1} = a_{N-1}, \, Z_{N-1} = \bar{z}
%		\right\}    \\
		&\geq \PP_{a,z} \left\{ 
		(Y_N, Z_N) \in A, \, a_{N-1} \leq \bar{a}(Z_{N-1}), \, Z_{N-1} = \bar{z}
		\right\}    \\
		&= \int m(z') \1 \{a' \leq \bar{a}(z'), z' = \bar{z} \}
		Q^{N-1} ((a,z), \diff (a',z')).     
	\end{align*}
	Since $a' \mapsto m(z') \1 \{a' \leq \bar{a}(z'), z' = \bar{z} \}$ 
	is decreasing for all $z' \in \ZZ$, by Lemma~\ref{lm:mono_in_a},
	\begin{align*}
		Q^N ((a,z), A)
		&\geq \int m(z') \1 \{a' \leq \bar{a}(z'), z' = \bar{z} \}
		Q^{N-1} ((d,z), \diff (a',z'))   \\
		&= \PP_{d,z} \left\{ 
		    (Y_N, Z_N) \in A, \, c_{N-1} = a_{N-1}, \, Z_{N-1} = \bar{z}  
		\right\}
		=: v_N (A).
	\end{align*}
    Note that $v_N$ is a nontrivial measure on $\bB(\SS)$ since
    \eqref{eq:binding_d} implies that $v_N (\SS) > 0$. Furthermore, since
    $(a,z)$ is chosen arbitrarily, the above inequality implies that
    \eqref{eq:vN_small} holds. We have shown that $B$ is a $v_N$-small set,
    and hence a petite set.
    Since finite union of petite sets is petite for $\psi$-irreducible chains
    (see, e.g., Proposition~5.5.5 of \cite{meyn2009markov}), the set $[0,d]
    \times \ZZ$ must also be petite.
\end{proof}

%For the next lemma, let $\bar s \in [0,1)$ be defined as in
%Assumption~\ref{a:r0}. Let $n \in \NN$ and $\gamma \in (0,1)$ be defined by
%\eqref{eq:geo_coeff}.  Let $B:= [0,d] \times \ZZ$, where $d \in (0, \infty)$.
Recall $\bar s \in [0,1)$ in Assumption~\ref{a:r0}, $n \in \NN$ and $\gamma \in (0,1)$ in \eqref{eq:geo_coeff}. Let $B:= [0,d] \times \ZZ$.

\begin{lemma}
	\label{lm:geo_drift}
	There exist constants $b \in \RR_+$, $\rho \in (0,1)$ and a measurable map 
	$V \colon \SS \to [n / \rho, \infty)$ that is bounded on $B$, such that,
	for sufficiently large $d \in \RR_+$ and all $(a,z) \in \SS$, we have
		$\EE_{a,z} V(a_n, Z_n) - V(a,z) 
		\leq - \rho V(a,z) + b \1 \{(a,z) \in B \}$.
\end{lemma}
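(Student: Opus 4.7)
My plan is to take $V(a,z) = a + m_V$ for a constant $m_V$ to be chosen, so that $V$ is automatically bounded on $B = [0,d] \times \ZZ$ by $d + m_V$, and the bound $V \geq n/\rho$ reduces to $m_V \geq n/\rho$. The whole argument then reduces to controlling $\EE_{a,z} a_n$ linearly in $a$ with contraction coefficient strictly less than one.

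First I would iterate the dynamics. Applying Proposition~\ref{pr:optpol_linbound} gives $a_t - c^*(a_t, Z_t) \leq \bar s \, a_t$, and substituting into \eqref{eq:dyn_sys} yields
\begin{equation*}
    a_n \leq \bar s^n \prod_{t=1}^{n} R_t \cdot a
        + \sum_{j=1}^{n} \bar s^{\,n-j} \prod_{t=j+1}^{n} R_t \cdot Y_j
\end{equation*}
with probability one. Taking $\EE_{a,z}$ and applying the Markov property together with the definition of $\gamma$ in \eqref{eq:geo_coeff} and the constants $M_0 := \max_{1 \leq \ell \leq n, z \in \ZZ} \EE_z \prod_{t=1}^\ell R_t$, $M_3$ from Lemma~\ref{l:tefi} (which are finite under Assumptions~\ref{a:y0} and \ref{a:r0}), I obtain
\begin{equation*}
    \EE_{a,z}\, a_n \leq \gamma \, a + C,
    \qquad \text{where } C := n\, M_0\, M_3.
\end{equation*}

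Consequently, with $V(a,z) := a + m_V$,
\begin{equation*}
    \EE_{a,z} V(a_n, Z_n) - V(a,z)
    = \EE_{a,z}\, a_n - a
    \leq -(1-\gamma)\, V(a,z) + (1-\gamma)\, m_V + C.
\end{equation*}
I would then fix any $\rho \in (0, 1-\gamma)$, set $m_V := n/\rho$ so that $V \geq n/\rho$ holds on all of $\SS$, and write the right-hand side as
\begin{equation*}
    -\rho\, V(a,z) \;-\; \bigl[(1-\gamma-\rho)\, V(a,z) - (1-\gamma)\, m_V - C\bigr].
\end{equation*}
For $(a,z) \notin B$ we have $V(a,z) > d + m_V$, so by choosing $d$ sufficiently large that $(1-\gamma-\rho)(d+m_V) \geq (1-\gamma)\, m_V + C$, the bracketed term is nonnegative and the inequality holds with $b = 0$. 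For $(a,z) \in B$ the bracketed term is bounded in absolute value by $(1-\gamma)\, m_V + C$ (after dropping the nonpositive piece), so taking $b := (1-\gamma)\, m_V + C$ absorbs the slack. This delivers the drift inequality.

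The main obstacle is simply ordering the choice of constants: $\rho$ must be chosen below $1-\gamma$, then $m_V = n/\rho$ is pinned down to satisfy the lower bound on $V$, and only after these are fixed can $d$ be taken large enough to make the non-small-set inequality work. No single step is hard, but one has to be careful that $b$ does not implicitly depend on $d$ in a way that blows up, which is why it is cleaner to bound the slack uniformly rather than by $V(a,z)$ on $B$.
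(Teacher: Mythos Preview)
Your proposal is correct and follows essentially the same approach as the paper: take $V(a,z)=a+m_V$, iterate \eqref{eq:dyn_sys} with the savings bound from Proposition~\ref{pr:optpol_linbound} to obtain $\EE_{a,z}a_n\leq \gamma a + \text{const}$, then pick $\rho\in(0,1-\gamma)$, $m_V\geq n/\rho$, and $d$ large in that order. The only cosmetic differences are the precise constant in the additive term (the paper uses $b_0=\sum_{t=1}^n \bar s^{\,n-t}M_0^{n-t}M_3$ with the one-step $M_0=\max_z\EE_z\hat R$, whereas you use $C=nM_0M_3$ with the multi-step $M_0$ from Lemma~\ref{lm:bd_in_prob_at}) and the value of $b$; both are immaterial.
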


\begin{proof}%[Proof of Lemma~\ref{lm:geo_drift}]
	Since $c^*(a,z) \geq (1 - \bar s) a$ by 
	Proposition~\ref{pr:optpol_linbound} and $M_0 := \max_{z \in \ZZ} \EE_z \hat{R} < \infty$ by 
	Assumption~\ref{a:r0} and Lemma \ref{l:theta}, by Lemma~\ref{l:tefi} and the Markov property, 
	\begin{align*}
		\EE_{a,z} a_n 
		&\leq \bar s^n \EE_z R_n \cdots R_1 a + 
		\sum_{t = 1}^{n} \bar s^{n-t} \EE_z R_n \cdots R_{t+1} Y_t    \\
		& \leq \gamma a + \sum_{t=1}^{n} \bar s^{n-t} \EE_z Y_t \,  \EE_{Z_t} R_{t+1} \cdots R_n
		\leq \gamma a + \sum_{t=1}^{n} \bar s^{n-t} M_0^{n-t} M_3.
	\end{align*}
    Define $b_0 := \sum_{t=1}^{n} {\bar s}^{n-t} M_0^{n-t} M_3$. Note that $b_0 < \infty$.
    Choose $\rho \in (0, 1 - \gamma)$, $m_V \geq n / \rho$ and $d \in \RR_+$ such that 
    $(1 - \gamma - \rho) d \geq b_0 + \rho m_V$. Then, for $V(a,z) := a + m_V$,
	\begin{align}
		\label{eq:dri_ineq1}
		\EE_{a,z} V(a_n, Z_n) - V(a, z) 
		&\leq - (1 - \gamma) a + b_0
		= -\rho a - (1 - \gamma - \rho) a + b_0  \nonumber  \\
		&= -\rho V(a,z) - (1 - \gamma - \rho) a + b_0 + \rho m_V.
	\end{align}
	In particular, if $(a,z) \notin B$, then $a > d$ and \eqref{eq:dri_ineq1} implies that 
	\begin{equation}
	    \label{eq:dri_ineq2}
	    \EE_{a,z} V(a_n, Z_n) - V(a, z)  
	    \leq -\rho V(a,z) - (1 - \gamma - \rho) d + b_0 + \rho m_V    
	    \leq -\rho V(a,z).
	\end{equation}
	Let $b:= b_0 + \rho m_V$. Then the stated claim follows from  \eqref{eq:dri_ineq1}--\eqref{eq:dri_ineq2} and the fact that $V$ is bounded on $B$.
%	Let $b:= b_0 + \rho m_V$. Then \eqref{eq:dri_ineq1} and \eqref{eq:dri_ineq2} yields
%		$\EE_{a,z} V(a_n, Z_n) - V(a, z)  
%		\leq -\rho V(a,z) + b \1 \{ (a,z) \in B \}$
%	for all $(a,z) \in \SS$. Since $V$ is bounded on $B$, the stated claim holds.
\end{proof}

\begin{proof}[Proof of Theorem~\ref{t:gs_gnl_ergo_LLN}]
	Claim~(1) can be proved by applying 
	Theorem~19.1.3 (or a combination of Proposition~5.4.5 and Theorem~15.0.1) of 
	\cite{meyn2009markov}. The required conditions in those theorems have been established by
	Lemmas~\ref{lm:psi_irr}, \ref{lm:str_aperi}, \ref{lm:petite_set} and \ref{lm:geo_drift} 
	above.
    Regarding claim (2), Lemmas~\ref{lm:petite_set} and \ref{lm:geo_drift}
    imply that $\EE_{a,z} V(a_n,Z_n) - V(a,z) \leq -n + b \1 \{(a,z) \in B \}$
    for all $(a,z) \in \SS$, where $B:= [0, d] \times \ZZ$ is petite. Since in
    addition $\{(a_t,Z_t)\}$ is $\psi$-irreducible by Lemma~\ref{lm:psi_irr}, 
    Theorem~19.1.2 of \cite{meyn2009markov} implies that
    $\{(a_t,Z_t)\}$ is a positive Harris chain. Claim~(2) then follows from
    Theorem~17.1.7 of \cite{meyn2009markov}.
	
    To verify claim~(3), since we have shown that $\Phi := \{(a_t, Z_t)\}$ is
    positive Harris with stationary distribution $\psi_\infty$, based on
    Theorem~16.1.5 and Theorem~17.5.4 of \cite{meyn2009markov}, it suffices to
    show that $Q$ is $V$-uniformly ergodic.
	Let $\Phi^n$ be the $n$-skeleton of $\Phi$ (see page~62 of \cite{meyn2009markov}). Then $\Phi^n$ 
    is $\psi$-irreducible and aperiodic by Proposition~5.4.5 of
    \cite{meyn2009markov}. Theorem~16.0.1 of \cite{meyn2009markov} and
    Lemmas~\ref{lm:petite_set} and \ref{lm:geo_drift} then imply that $\Phi^n$ is
    $V$-uniformly ergodic, and, there exists $N \in \NN$ such that $||| Q^{nN}
    - 1 \otimes \psi_\infty |||_V < 1$, where $\|\mu\|_V := \sup_{g: |g| \leq V}
    |\int g \diff \mu|$ for $\mu \in \pP(\SS)$ and, 
	for all $t \in \NN$,
	\begin{equation*}
	    ||| Q^{t} - 1 \otimes \psi_\infty |||_V 
            := \sup_{(a,z) \in \SS} 
            \frac{ \| Q^{t}((a,z), \cdot) - \psi_\infty \|_V }{V(a,z)}.
	\end{equation*}
    To show that $Q$ is $V$-uniformly ergodic, by Theorem~16.0.1 of
    \cite{meyn2009markov}, it remains to verify: $||| Q^{t} - 1 \otimes
    \psi_\infty
    |||_V < \infty$ for $t \leq nN$. This obviously holds since, by  the proof
    of Lemma~\ref{lm:geo_drift}, there exist $L_0, L_1 \in \RR$ such that, for
    all $t \in \NN$,
	\begin{align*}
	    ||| Q^{t} - 1 \otimes \psi_\infty |||_V 
	    &\leq \sup_{(a,z) \in \SS} \sup_{ \|f\| \leq V } 
	        \frac{\int |f(a',z')| Q^t ((a,z), \diff (a',z')) }{V(a,z)}
	        + L_0    \\
	    &\leq \sup_{(a,z) \in \SS}
	        \frac{\int V(a',z') Q^t ((a,z), \diff (a',z')) }{V(a,z)}
	        + L_0 
	    \leq L_0 + L_1 < \infty.
	\end{align*}
	Hence, $Q$ is $V$-uniformly ergodic and claim~(3) follows. The proof is now complete.
\end{proof}

\begin{proof}[Proof of Theorem \ref{t:heavy_tail}]
	Take an arbitrarily large constant $k<1$ such that
	\begin{equation*}
		P(\bar z,\bar z)>0 \quad \text{and} \quad \PP_{\bar z} \{ kG(\bar z,\bar z,\hat{\zeta})>1 \}>0,
	\end{equation*}
	which is possible by Assumption \ref{a:wealth_growth} and the definition of $G$ in \eqref{eq:defG}. For this $k$, since $\lim_{a\to\infty}c^*(a,z)/a=\alpha(z)$ and $\ZZ$ is a finite set, we can take $\bar{a}>0$ such that
	\begin{equation*}
		1-\frac{c^*(a,z)}{a} \geq k(1-\alpha(z))
	\end{equation*}
	for all $z\in \ZZ$ and $a\geq \bar{a}$. Multiplying both sides by $R(\hat{z},\hat{\zeta})\geq 0$, it follows from the law of motion \eqref{eq:dyn_sys}, $Y(\hat{z},\hat{\eta})\geq 0$, and the definition of $G$ in \eqref{eq:defG} that for $a\geq \bar{a}$,
	\begin{align*}
		\hat{a}&=R(\hat{z},\hat{\zeta})(a-c^*(a,z)) + Y(\hat{z},\hat{\eta})\\
		&\geq R(\hat{z},\hat{\zeta})(a-c^*(a,z))=R(\hat{z},\hat{\zeta})\left(1-\frac{c^*(a,z)}{a}\right)a\\
		&\geq R(\hat{z},\hat{\zeta})k(1-\alpha(z))a=kG(z,\hat{z},\hat{\zeta})a.
	\end{align*}
	Let $\tilde{A}(z,\hat{z},\hat{\zeta}) := kG(z,\hat{z},\hat{\zeta}) \1 \{ kG(z,\hat{z},\hat{\zeta})>1 \}$. Then for all $z,\hat{z},\hat{\zeta},\hat{\eta}$ and all $a\geq \bar{a}$,
	\begin{equation}\label{eq:a_lb}
		\hat{a} \geq \tilde{A}(z,\hat{z},\hat{\zeta})a.
	\end{equation}
	Start the wealth accumulation process $a_t$ from $a_0\geq \bar{a}$. Consider the following process:
	\begin{equation*}
		S_{t+1}=\tilde{A}(Z_t,Z_{t+1},\zeta_{t+1})S_t,
	\end{equation*}
	where $S_0=a_0$. We now show that $a_t\geq S_t$ with probability one for all $t$ by induction. Since $S_0=a_0$, the case $t=0$ is trivial. Suppose the claim holds up to $t$. Because $a_t\geq 0$ and $S_t$ remains 0 once it becomes 0, without loss of generality we may assume $S_0,\dots,S_t$ are all positive. Hence $\tilde{A}_1,\dots,\tilde{A}_t>0$. By the definition of $\tilde{A}$, we have $\tilde{A}>1$ whenever $\tilde{A}>0$. Therefore
	\begin{equation*}
		S_t=\tilde{A}_t\dotsb \tilde{A}_1S_0\geq S_0=a_0\geq \bar{a}.
	\end{equation*}
	Hence applying \eqref{eq:a_lb}, we get
	\begin{equation*}
		a_{t+1} \geq \tilde{A}(Z_t,Z_{t+1},\zeta_{t+1})a_t \geq \tilde{A}(Z_t,Z_{t+1},\zeta_{t+1})S_t=S_{t+1}.
	\end{equation*}
	
	Now take any $p\in (0,1)$ and let $T$ be a geometric random variable with mean $1/p$ that is independent of everything. Define
	\begin{equation*}
		\tilde{\lambda}(s)=(1-p)r(P\odot M_{\tilde{A}}(s)),
	\end{equation*}
	where $M_{\tilde{A}}(s)$ is as in \eqref{eq:MGF}. 
	Since clearly $A\geq \tilde{A}$ and $p>0$, we have $\lambda>\tilde{\lambda}$. By Lemma~3.1 of \cite{BeareToda-dPL}, $\lambda,\tilde{\lambda}$ are convex, and hence continuous in the interior of their domains. Therefore $\lambda(\kappa)=1$ and $\lambda(s)>1$ for small enough
	$s>\kappa$.  Hence, for any $\epsilon>0$, we can take small enough $p\in (0,1)$ and
	large enough $k<1$ such that
	$\tilde{\lambda}(\kappa)<1<\tilde{\lambda}(\kappa+\epsilon)<\infty$. 
	By Lemma~3.1 of \cite{BeareToda-dPL}, there exists a unique
	$\tilde{\kappa}\in (\kappa,\kappa+\epsilon)$ such that
	$\tilde{\lambda}(\tilde{\kappa})=1$. Theorem~3.4 of \cite{BeareToda-dPL} then implies that
	%Because
	%$\tilde{\lambda}$ is convex (which is easy to show using H\"older's
	%inequality, see \cite{BeareToda-dPL}), there exists a unique
	%$\tilde{\kappa}\in (\kappa,\kappa+\epsilon)$ such that
	%$\tilde{\lambda}(\tilde{\kappa})=1$. Using the Pareto tail results in
	%\cite{BeareToda-dPL}, we obtain
	%
	\begin{equation*}
		\liminf_{a\to\infty}a^{\tilde{\kappa}}\PP_{a_0,z_0} \{ S_T>a \}>0
	\end{equation*}
	for all $(a_0,z_0) \in \SS$. In particular, for any initial $(a_0, z_0) \in \SS$ with $a_0 \geq \bar{a}$,
	\begin{equation}\label{eq:S_T_lb}
		\liminf_{a\to\infty}a^{\kappa+\epsilon}\PP_{a_0,z_0} \{S_T>a \}>0.
	\end{equation}
	Now suppose that we draw $a_0$ from the ergodic distribution. Then $a_t$ has the same distribution as $a_\infty$, and so does $a_T$. Therefore
	\begin{multline}\label{eq:PP_a_inf}
		\PP \{ a_\infty>a \}=\PP \{ a_T>a \}\\
		=\PP \{ a_0<\bar{a} \}\PP \{a_T>a \, | \, a_0<\bar{a} \}+\PP \{ a_0\geq \bar{a} \} \PP \{a_T>a \, | \, a_0\geq \bar{a}\}.
	\end{multline}
	If the ergodic distribution of $\{a_t\}$ has unbounded support, then $\PP \{a_0\geq \bar{a} \}>0$. As we have seen above, conditional on $a_0 \geq \bar{a}$, we have $a_t\geq S_t$ for all $t$. Therefore
	\begin{equation}\label{eq:PP_a_T}
		\liminf_{a\to\infty}a^{\kappa+\epsilon}\PP \{ a_T>a \mid a_0 \geq \bar a \} 
		\geq \liminf_{a\to\infty}a^{\kappa+\epsilon}\PP \{ S_T>a \mid a_0 \geq \bar a\}>0
	\end{equation}
	by \eqref{eq:S_T_lb}, and so \eqref{eq:heavy_tail} follows from \eqref{eq:PP_a_inf} and \eqref{eq:PP_a_T}.
\end{proof}

\bibliographystyle{ecta}

\bibliography{os}

\end{document}